\documentclass[11pt]{article}

\usepackage{bbm}

\usepackage{amsmath}

\usepackage{tikz}
\newcommand*\circled[1]{\tikz[baseline=(char.base)]{
    \node[shape=circle, draw, inner sep=0.5pt, 
        minimum height=11pt] (char) {\vphantom{1g}#1};}}

\usepackage[english]{babel}
\usepackage{beton}
\usepackage{euler}
\usepackage[T1]{fontenc}
\usepackage{color,soul}
\usepackage{graphicx}
\graphicspath{ {images/} }
\usepackage{hyperref}

\usepackage{array}
\newcolumntype{H}{>{\setbox0=\hbox\bgroup}c<{\egroup}@{}}
 
\urlstyle{same}
\usepackage{ccaption}

\usepackage{amsmath, amsfonts, amssymb, amsthm}
\usepackage{mathrsfs}
\newtheorem{theorem}{Theorem}[section]

\newtheorem{lemma}[theorem]{Lemma}
\newtheorem{proposition}[theorem]{Proposition}
\theoremstyle{definition}
\newtheorem{definition}[theorem]{Definition}

\newtheorem{assumption}[theorem]{Assumption}
\theoremstyle{remark}
\newtheorem{remark}[theorem]{Remark}
\newtheorem{example}[theorem]{Example}
\newtheorem{claim}[theorem]{Claim}
\usepackage{caption}
\usepackage{threeparttable}
\usepackage{subcaption}
\usepackage{adjustbox}

\usepackage{amsthm}

\usepackage{multirow}

\usepackage{enumerate}
\usepackage{textcomp}

\usepackage[margin=2.5cm]{geometry}
\usepackage[babel]{csquotes}
\usepackage[style=authoryear-comp,
                        bibstyle=authoryear,maxbibnames=9,
                        backend=bibtex
                        ]{biblatex}

\usepackage{longtable}
\usepackage{rotating}
\usepackage{booktabs}
\usepackage{indentfirst}
\usepackage{authblk}

\numberwithin{equation}{section}

\bibliography{biblio.bib}

\begin{document}

\title{Identification in Some Discrete Choice Models: A Computational Approach\thanks{I would like to thank Ivan Canay and Jim Powell for their advices and encouragements. I would also like to thank Jackson Bunting, Yu Chen, Bulat Gafarov, Leonard Goff, Jiaying Gu, Ismael Mourifi\'e, Rui Wang, and the econometrics seminar participants at UCLA, the University of Toronto, Ohio State University, Boston College, Pennsylvania State University, the University of Albany, Texas A\&M University, and UC Davis for their helpful comments.}}

\author{Eric Mbakop}
\affil{Department of Economics, University of Calgary}
\date{}

\maketitle

\begin{abstract}

This paper presents an algorithm that generates the conditional moment inequalities that characterize the identified set of the common parameter of various semi-parametric panel multinomial choice models. I consider both static and dynamic models, and consider various weak stochastic restrictions on the distribution of observed and unobserved components of the models. For a broad class of such stochastic restrictions, the paper demonstrates that the inequalities characterizing the identified set of the common parameter can be obtained as solutions of multiple objective linear programs (MOLPs), thereby transforming the task of finding these inequalities into a purely computational problem. The algorithm that I provide reproduces many well-known results, including the conditional moment inequalities derived in \cite{CM}, \cite{PP1}, and \cite{KPT}. Moreover, I use the algorithm to generate some new results, by providing characterizations of the identified set in some cases that were left open in \cite{PP1} and \cite{KPT}, as well as characterizations of the identified set under alternative stochastic restrictions.


\end{abstract}


\section{Introduction}
Multinomial choice models have been widely used in empirical research to model situations where agents face a finite set of alternatives, ever since the seminal paper by \cite{DMF}. Estimating the parameters of these models allows researchers (among other things) to evaluate the relative importance of the different factors that drive individuals' choices, estimate average partial effects, and make predictions about counterfactual scenarios, such as the introduction of a new alternative. For a comprehensive introduction and overview of multinomial choice models, including applications, the reader is referred to \cite{KET}.

In this paper, I consider panel multinomial choice models of the type
\begin{equation}
\label{eqn0000}
Y_{ti}=\arg\max_{d\in [D]} \bigl\{X_{dti}' \beta+\gamma \mathbbm{1}_{\{Y_{(t-1)i}=d\}}+\lambda_{di}+\epsilon_{dti}\bigl\},
\end{equation}
where $Y_{ti}$ denotes  the observed choice of individual i at time $t\in[T]$ \footnote{I restrict the analysis in this paper to balanced panels and, given an integer $n\in \mathbb{N}$, I will use $[n]$ to denote the set $[n]:=\{1,2,\cdots,n\}$.} among the D ($\geq 2$) alternatives that constitute the choice set $[D]$, $X_{dti}$ represents the vector of observed individual-alternative-time specific covariates, and $\lambda_{di}$ and $\epsilon_{dti}$ denote respectively the latent individual-alternative specific fixed effects and the random preference shocks. In this paper, I am interested in providing a simple characterization of the (sharp) identified set of the common parameter $\theta=(\beta,\gamma)$, under various weak (i.e. nonparametric) distributional assumptions on the joint distribution of the observed and latent variables. I will refer to models where the utilities of alternatives does not depend on lagged choices (i.e., $\gamma=0$) as "static settings", and to those where it does as "dynamic settings." It is important to note that although the approach presented in this paper applies more generally to dynamic settings where the indirect utilities may depend on any arbitrary (but finite) number of lagged choices, when discussing dynamic settings, I will mainly focus on settings with first order dynamics (one lag) for the sake of expositional simplicity.\par
In order to better describe what I do in this paper, consider the two-periods (i.e. $T=2$) binary choice setting of \cite{CM}, where the choice of individual i at time $t\in [2]$ is modeled by:
\begin{equation}
\label{eqnMans}
Y_{ti}=\mathbbm{1}_{\bigl\{X_{ti}'\theta -\lambda_i -\epsilon_{ti} > 0\bigl\}}.
\end{equation}
Under the assumption that the conditional marginal distributions of period 1 and period 2 shocks, given the covariates and fixed effects, are equal  and are continuously distributed with full support (i.e. $\epsilon_{1i}|X_{1i},X_{2i},\lambda_i\sim \epsilon_{2i}|X_{1i},X_{2i},\lambda_i$), \cite{CM} shows that the CCPs satisfy the following three inequality restrictions:
\begin{equation}
\label{eqn0001}
x_1'\theta \gtreqqless x_2'\beta \quad \text{ implies } \quad P(Y_1=1|X_1=x_1,X_2=x_2) \gtreqqless P(Y_2=1|X_1=x_1,X_2=x_2).
\end{equation} 
Conditional moment inequality restrictions like \eqref{eqn0001} are important for at least two reasons. First, when point identification is of interest, these inequalities can be used to investigate what additional assumptions, if any, are needed to guarantee point identification and can form the basis for constructing semi-parametric estimators for the parameter $\theta$. In \cite{CM}, for instance, it is shown that a sort of large support assumption on the intertemporal difference of the covariates suffices to guarantee point identification of $\theta$, and inequalities \eqref{eqn0001} are used to construct a maximum score estimator of the parameter. Another example of an estimator constructed in a similar manner is the cyclical monotonicity-based estimator proposed by \cite{SSS}.\footnote{see also \cite{KOT} and \cite{KPT}.} Second, as large support-like assumptions on the covariates seem necessary to ensure point identification of $\theta$ when only weak stochastic restrictions are placed on the random utility shocks (see Theorem 1 in \cite{GC2}), in applications where such large support assumptions are not plausible, inequalities like \eqref{eqn0001} can form the basis for a partial identification analysis. One may then ask whether the inequalities \eqref{eqn0001} characterize the sharp identified set,  in that they exhaust all of the information contained in the maintained stochastic restriction. For if they do not, the set of parameter values that are consistent with these inequalities will form an outer region of the sharp identified set. In general, showing sharpness is not an easy task, and it has only been carried out in a few cases. In particular, under similar stochastic restrictions as in \cite{CM}, and using different proof techniques, \cite{PP1} and \cite{KPT} consider special cases of Model \ref{eqn0000} (where $\gamma=0$ and $T=2$ for \cite{PP1}, and where $D=2$ for \cite{KPT}), derive inequality restrictions like \eqref{eqn0001}, and prove that the restrictions  that they obtain characterize the sharp identified set.
One limitation of the approach of these papers is that the arguments that they employ to establish sharpness are model-specific, and do not readily translate to new inequalities when the models under consideration are slightly modified. This paper provides a unified approach, and I present an algorithm that generates all the conditional moment inequality restrictions that characterize the sharp identified set of the parameter $\theta$ in models like \ref{eqn0000}, for a broad class of stochastic restrictions on the shocks. The algorithm is applicable to settings with arbitrary (but finite) number of alternatives, time periods, and lags (in dynamic settings). In the case of \cite{CM}, for instance, the algorithm generates exactly the inequality restrictions \eqref{eqn0001}. Therefore, the algorithm presented in this paper automates the hitherto challenging analytical task of deriving all the conditional moment inequality restrictions that characterize the sharp identified set.\par
 I demonstrate the usefulness and versatility of the algorithm through five examples. In Example \ref{exampPPS}, I consider the static setting of \cite{PP1}, and I show that the algorithm recovers the inequality restrictions that were obtained there by analytical means. In Example \ref{exampPPE}, I consider again the setting of \cite{PP1}, but replace their conditional stationarity stochastic restriction by the alternative (and stronger) stochastic restriction of conditional exchangeability, and I use the algorithm to generate the inequality restrictions that characterize the sharp identified set. In Example \ref{exampDyn1}, I consider the dynamic binary choice setting of \cite{KPT}, and again show that the algorithm recovers the inequality restrictions that were obtained in \cite{KPT} by analytical means. In Example \ref{exampDyn2}, I consider a two-lags extension of the model of \cite{KPT}, and use the algorithm to generate the inequalities that characterize the sharp identified set. Finally, in Example \ref{exampDyn3}, I demonstrate that the algorithm can still be informative in cases where it is not applicable due to incompatibility with the maintained stochastic restrictions. In particular, I consider the dynamic multinomial choice model of \cite{PP2}, and I replace the stochastic restriction that they consider and which is incompatible with my approach, by the compatible and weaker stochastic restriction of conditional stationarity. Interestingly, the inequalities that the algorithm generates are shown to neither imply nor be implied by the inequalities derived in \cite{PP2}. Hence, the inequalities that the algorithm generates provide additional restrictions that can be used in conjunction with the inequalities derived in \cite{PP2} to generate a smaller outer region of the identified set, and this also shows that the characterization provided in \cite{PP2} is not sharp (see Remark \ref{remPPD5}). \par

The algorithm that I present in this paper builds on few ideas that appear in the sharpness proof of \cite{PP1}. Specifically as the outcome variable is discrete, "locally", i.e. for a fixed value of the parameter and covariates, the (ambient) space of the latent variable can be partitioned into a finite number of "regions", where for each time period, all realizations of the shocks within a region induce the agents to choose the same alternative. I use this partitioning to show that each "local model" is observationally equivalent to a model where the distribution of the latent variables is finite-dimensional. I refer to the process of constructing these observationally equivalent finite-dimensional versions of the local models as the "discretization" of the model, and it can be viewed as a device for dimension reduction. When the stochastic restrictions on the latent variables can be enforced by a finite number of linear equality/inequality constraints within the discretized model, the discretized local models are polytopes $-$ that is, bounded intersections of finitely many half-spaces. I show that the normal vectors of a specific subset of the hyperplanes that bound these polytopes are in one-to-one correspondence with the conditional moment inequalities that characterize the sharp  identified set of $\theta$, and I show that solving for the set of these normal vectors can be framed as a multiple objective linear program (MOLP). The approach that I discuss in this paper then essentially consists of an algorithm that constructs the discretized local models and then solves for the solutions of the associated MOLPs. There are readily available algorithms to solve general MOLPs; however, these algorithms are computationally expensive and do not scale well with the size of the problem (measured by the number of alternatives $D$ and the number of time periods $T$). In static settings, I exploit the specific structure of the MOLPs that appear in multinomial choice models, to propose an algorithm that is tailored to solving these MOLPs, and I show through simulations that my algorithm improves upon the generic algorithms, and performs well for models of moderate size. \par
This paper contributes to the literature that studies the identification and estimation of semi-parametric panel multinomial choice models without parametric restrictions on the shocks. Papers in this literature typically impose weak nonparametric restrictions on the shocks, and leave the joint distribution of the covariates and fixed effects unrestricted. The earliest paper in this literature is \cite{CM}, which established sufficient conditions for point identification of the common parameter in a static binary choice model under a stationarity restriction on the shocks. Subsequent papers that have similarly provided conditions for point identification of the common parameter in both static and dynamic multinomial choice models under stationarity-like restrictions on the shocks include \cite{SSS}, who exploited cyclical monotonicity to derive conditions for point identification in static multinomial choice models; \cite{KOT}, who proposed a general approach and conditions for point identification of static and dynamic multinomial choice models; and \cite{KPT}, who derived the conditional inequalities that characterize the identified set for the common parameter in a dynamic binary choice model, and additionally provided conditions for point identification\footnote{See also, \cite{HK1} who provided condition for point identification in dynamic binary choice models under a serial independence restriction on the shocks, \cite{EA} who derives identifying inequalities for the common parameter of static binary choice models under a conditional independence restriction, and \cite{HL} who provide conditions for point identification of a static binary choice model with predetermined covariates.}. Papers in this literature that mainly focus on partial identification include \cite{PP1}, which considers a static multinomial choice model and derives all the inequalities that characterize the sharp identified set of the common parameter, and \cite{PP2}, which considers a dynamic multinomial choice model and derives bounds on the identified set of the state dependence parameter. \cite{GL} derives conditions for identifying the index parameter in a static semi-parametric and nonseparable multinomial choice model with high-dimensional unobserved heterogeneity.\footnote{See also \cite{CR} and \cite{CRS} who provide a general characterization of the sharp identified set of the parameters of nonseparable instrumental variable models that include multinomial choice models.}\par
A related literature studies identification and estimation of semi-parametric multinomial choice models where parametric  (usually logistic) assumptions are made on the shocks and, as above, the joint distribution of fixed effects and covariates is left unrestricted. Here the identification of the common parameter often relies on the conditional likelihood approach, or on finding (identifying) moment conditions that do not depend on the fixed effects. Papers in this literature include \cite{EBA,GC1,GC2,DRC,TM,HK1,HK2,HoW,HDP}. In particular, \cite{HoW} extend the approach proposed in \cite{SB} to computationally solve for conditional moment \emph{equalities} that identify the common parameters in various dynamic panel data Logit models and, as such, their approach is similar to the one taken in the present paper. Furthermore, it can be said that the current paper shares a conceptual connection with \cite{SB}, which provides a recipe for obtaining (potentially) identifying moment \emph{equalities} for the common parameter of nonlinear panel data models with fixed effects. By contrast, the present paper offers a more concrete recipe (a computer algorithm) to generate the conditional moment inequalities that characterize the sharp identified set of the common parameter for various panel multinomial choice models.\par

Finally, from a methodological perspective, this paper is closely related to \cite{LC1}, which builds upon the approach of \cite{ZY} and converts the task of generating new entropy inequalities (information-theoretic inequalities that are not of the Shannon-type) into MOLPs. Some steps of the algorithm described in this paper are similar to those in \cite{LC1}.

The paper is structured as follows: In Section \ref{sectionStatic}, I describe the algorithm in static settings and provide various results to establish its validity. Specifically, Subsection \ref{sectionHeuristics} illustrates the approach of this paper in the simple setting of \cite{CM} to help readers develop intuition for the algorithm. Subsections \ref{sectionDiscretization} and \ref{sectionDDCP} then demonstrate how the ideas discussed in Subsection \ref{sectionHeuristics} can be extended to the general static setting, and I provide results to establish the validity of the algorithm under the extension. In Subsection \ref{sectionImp}, I discuss the limitations of the approach presented in this paper and describe stochastic restrictions under which the algorithm is not compatible. In Subsection \ref{sectionCuttingPlane}, I provide an alternative algorithm to solve for the MOLPs associated with our models, and I provide simulation results to assess its performance. In Subsection \ref{sectionExample}, I apply the algorithm developed for static settings to two examples. In Section \ref{sectionDyn}, I discuss how the algorithm needs to be modified to accommodate dynamic settings, and Subsection \ref{sectionExampDyn} provides three examples of the application of the algorithm in dynamic settings. Finally, Section \ref{sectionConclusion} concludes and gives directions for future research. All proofs are provided in the appendix.

\section{Static models}
\label{sectionStatic}
In this section, I consider the following \emph{static} panel data multinomial choice model: Individual $i$ chooses alternative $d$ ($\in[D]$) at time $t$ ($\in[T]$), denoted $Y_{ti}=d$, if and only if
\begin{equation}
\label{eqnstat}
  X_{dti}'\theta_0+\lambda_{di}+\epsilon_{dti} > X_{d'ti}'\theta_0+\lambda_{d'i}+\epsilon_{d'ti} \  \forall d' \in [D]\backslash\{d\}.
\end{equation}
Here $[D]=\{1,2,\cdots,D\}$, with $D\geq 2$, denotes the set of alternatives, and $T\geq 2$ denotes the length of the panel.
I will use the following notation: $\lambda_i=(\lambda_{1i},\cdots,\lambda_{Di})'  \in \mathbb{R}^D$, $Y_i=(Y_{1i},\cdots,Y_{Ti})'\in \mathbb{R}^D$, $X_{ti}=(X_{1ti},\cdots,X_{Dti}) \in \mathbb{R}^{d_x\times D}$, $\epsilon_{ti}=(\epsilon_{1ti},\cdots,\epsilon_{Dti})'\in \mathbb{R}^D$, $X_i=(X_{1i},\cdots,X_{Ti}) \in \mathbb{R}^{d_x\times DT}$, and $\epsilon_i=(\epsilon_{1i},\cdots,\epsilon_{Ti})\in \mathbb{R}^{D\times T}$. \par

In the analysis that follows, I will impose either one of the following two stochastic restrictions on the joint distribution of $(X_i,\lambda_i,\epsilon_i)$.

\begin{assumption}[\textbf{Stationarity}]
\hfill
\label{assumptionS}
\begin{enumerate}[a)]
\item Conditional on $X_i$ and $\lambda_i$, for each $t\in[T]$, $\epsilon_{ti}$ is continuously distributed. 
\item Conditional on $X_i$ and $\lambda_i$, the shocks $\epsilon_{ti}$ have the same distribution:
\[
\epsilon_{ti}|X_i,\lambda_i \sim \epsilon_{si}|X_i,\lambda_i \ \ \ \forall \ \ s,t \in [T] 
\]
\end{enumerate}
\end{assumption}

\begin{assumption}[\textbf{Exchangeability}]
\hfill
\label{assumptionE}
\begin{enumerate}[a)]
\item Conditional on $X_i$ and $\lambda_i$, for each $t\in[T]$, $\epsilon_{ti}$ is continuously distributed. 
\item Conditional on $X_i$ and $\lambda_i$, the joint distribution of the shocks is exchangeable:
\[
(\epsilon_{1i},\epsilon_{2i},\cdots,\epsilon_{Ti})|X_i,\lambda_i\sim (\epsilon_{\pi(1)i},\epsilon_{\pi(2)i},\cdots,\epsilon_{\pi(T)i})|X_i,\lambda_i
\]
for all permutations $\pi$ on the set $[T]$.
\end{enumerate}
\end{assumption}

\begin{remark}
\label{rem001}
It is important to note that the preceding assumptions have varying degrees of strength. Specifically, the stationarity restriction is weaker than the exchangeability restriction, and both are weaker than assuming that the shocks are conditionally independently and identically distributed (conditionally IID) (as stated in Assumption \ref{assumptionI} below).

While both stationarity and exchangeability allow for arbitrary correlation between the fixed effects $\lambda$ and covariates $X$, as well as arbitrary correlation between the random utility shocks across alternatives and time periods, they only allow for limited heteroskedasticity of the shocks, as both assumptions imply that the shocks are homoskedastic conditional on the fixed effects and the full history of covariates (the vector  $x_i$).

To avoid issues that arise when ties are present (i.e., when there is more than one alternative that maximizes the indirect utility), we require that the shocks are continuously distributed, given the covariates and fixed effects. It should be noted that this requirement can be relaxed, but doing so makes the analysis more complicated since we must consider the selection mechanism that is used to determine the agent's choice when ties are present (as discussed in \cite{PP1}).
\end{remark}
Assumptions \ref{assumptionS} and \ref{assumptionE} are examples of weak stochastic restrictions that have been widely considered in the semi-parametric discrete choice literature, and  although I have chosen to illustrate the approach of this paper under these assumptions, it is worth noting that the method presented here is versatile and can handle a wide range of other stochastic restrictions. Specifically, it can accommodate any stochastic restriction that can be encoded by a finite number of linear equality/inequality constraints once the model is "locally discretized" (see Section \ref{sectionDiscretization}). An example of a stochastic restriction for which the method of this paper is not applicable is the conditional IID stochastic restriction (see Section \ref{sectionImp}), and I show in Section \ref{sectionImp} that it cannot be encoded by finitely many linear equality/inequality constraints once the model is discretized. Using the method of this paper on a relaxation of the conditional IID restriction (Assumption \ref{assumptionE}, for instance) will yield conditional moment inequalities that characterize an outer region within which the identified set is guaranteed to lie.
%
\begin{remark}
\label{rem2}
If we set $\zeta_{ti}=\lambda_{i}+\epsilon_{ti}$, then the stationarity (resp. exchangeability) of the shocks $\epsilon_{it}$  conditional on $(x_i,\lambda_i)$ implies that the random vectors $\zeta_{ti}$ are stationary (resp. exchangeable) conditional on $x_i$. Moreover, in the absence of additional stochastic restrictions, the fixed effects can be absorbed into the shocks, and Assumption \ref{assumptionS} for instance is equivalent to assuming that $\zeta_{ti}$ are continuously distributed and are exchangeable conditional on $x_i$\footnote{\label{foot2}As there is no restriction (like a location normalization for instance) on the shocks $\epsilon_{dti}$, the model is observationally equivalent to one where the fixed effects are equal to zero. The same point is made in \cite{PP1}.  Also, by the Law of Iterated Expectation, conditional on $x$, the random vectors $\zeta_t$'s are continuously distributed, since the $\epsilon_{t}$'s are continuously distributed conditional on $\lambda$ and $x$.}. 
\end{remark}

Given the random utility model \ref{eqnstat} above, along with either Assumption \ref{assumptionS} or Assumption \ref{assumptionE}, the goal is  to characterize the identified set for the index parameter $\theta$. This is the set of all parameter values that are consistent with the model and the identified conditional choice probabilities \footnote{It is assumed throughout that the data consists of a random sample $\{(y_i,x_i)\}_{i=1}^N$, where $x_i$ and $y_i$ represent the covariates and choices of individual i across the $T$ time periods. In this case, the CCPs and the marginal distribution of the covariates is identified from the data.}. Let ${\cal M}_S$ (resp. ${\cal M}_E$) denote the set of all sequences of $T$ random vectors, each $\mathbb{R}^D$-valued, that are stationary (resp. exchangeable),
%
and let $y(\theta,x,\lambda,e)$ $\in [D]^{T}$ denote the choices that are generated by the multinomial choice model (Equation \ref{eqnstat}) when $X=x$, the true value of the index-parameter is $\theta$, the fixed effects take the value $\lambda$, and the shocks $\epsilon$ (in $\mathbb{R}^{D\times T}$) take the value $e$ \footnote{Note that $y(\theta,x,\lambda,e)$ is well defined except for a set of realizations of the shocks $\epsilon$ of probability zero.}. Here, the $t^{th}$ component of $y(\theta,x,\lambda,\epsilon)$ represents the choice that is generated by the model in period $t$. Given a value $X=x$ of the covariates and a value $\theta$ of the index-parameter,  let ${\cal P}_S(x,\theta)$ (resp. ${\cal P}_E(x,\theta)$) denote the set of all conditional choice probability vectors at $x$ that are consistent with model \ref{eqnstat}, when the true parameter value is equal to $\theta$ and Assumption \ref{assumptionS} (resp. Assumption \ref{assumptionE}) holds: That is,

\begin{equation*}
{\cal P}_S(x,\theta):=\left\{p \in \mathbb{R}^{D^T} | \  \text{there exists} \  (\lambda,\epsilon) \  \text{such that} \ p\sim y(\theta,x,\lambda,\epsilon)|x \  \text{and} \ \epsilon|x,\lambda \ \text{is stationary}  \right\},
\end{equation*}
with a similar definition for ${\cal P}_E(x,\theta)$. It then follows from Remark \ref{rem2} that ${\cal P}_S(x,\theta)$ and ${\cal P}_E(x,\theta)$ can equivalently be defined by
\begin{equation}
\label{eqnDCPS1}
{\cal P}_S(x,\theta):=\left\{p \in \mathbb{R}^{D^T} | \ \ p\sim y(\theta,x,\zeta) \text{ for some }\zeta \in {\cal M}_S    \right\}
\end{equation}
and
\begin{equation}
\label{eqnDCPE1}
{\cal P}_E(x,\theta):=\left\{p \in \mathbb{R}^{D^T} | \ \ p\sim y(\theta,x,\zeta) \text{ for some }\zeta \in {\cal M}_E     \right\},
\end{equation}
where $y(\theta,x,e)$ ($\in [D]^{T}$) represents choices that are generated by model \ref{eqnstat} when $X=x$, the true value of the parameter is $\theta$, $\lambda=0$, and $\epsilon=e$. I will use the term "local model" to refer to sets such as ${\cal P}_S(x,\theta)$ and ${\cal P}_E(x,\theta)$. Given the identified distribution of observed choices and covariates $F_{(Y,X)}$, the identified set for the parameter $\theta$ under Assumption \ref{assumptionS}, denoted $\Theta_S $ ($=\Theta_S(F_{(Y,X)})$), is the set of all parameter values $\theta$ (possibly empty if the model is misspecified) that can rationalize the observed conditional choice probabilities at all points $x$ in the support of $X$. That is:
\begin{equation}
\label{eqnS}
\Theta_{S}:=\{\theta \in \Theta | \  \forall x \in supp(X),  \ F_{Y|X=x} \in {\cal P}_S(x,\theta) \},
\end{equation}
where $\Theta\subseteq \mathbb{R}^{d_x}$ represents the parameter space. The identified set $\Theta_E$, under Assumption \ref{assumptionE}, is defined analogously.\par

 I show in Proposition \ref{prop1} below that locally (that is, when the covariates are fixed at a value $x$, and the index parameter takes a value $\theta$), the model \ref{eqnstat} is equivalent to a "discrete model" where the random utility shocks are restricted to take on a finite set of values. I use this discrete representation of the model, in conjunction with the convexity of the stochastic restrictions (\ref{assumptionS} and \ref{assumptionE}), to show that the sets ${\cal P}_S(x,\theta)$ and ${\cal P}_E(x,\theta)$ are polytopes (Proposition \ref{prop1}) \footnote{\label{foot3}A polyhedron is a set of the form $\{x|Ax\leq b \}$ (with $A\in \mathbb{R}^{m\times n}$ and $b\in \mathbb{R}^m$, $m,n\in \mathbb{N}$), and a polytope is a bounded polyhedron (see \cite{AS}).}. I then use arguments from duality theory to produce an equivalent dual descriptions of these polytopes, and I show in particular that the sought-after conditional moment inequalities that characterize the identified set coincide with the  undominated extreme points of the \textit{duals} of these polytopes (details are provided further below).

In the following section, I use the setting of \cite{CM} to illustrate how the approach taken in this paper generates the conditional moment inequalities that characterize the identified set. As the aim is to build intuition, I will gloss over some of the details, which I will discuss further below when I consider the general setting. 

\subsection{Heuristics}
\label{sectionHeuristics}
The following model is that of \cite{CM}. In a two-periods binary choice model, let agent $i$'s choice at time $t\in [2]$ be determined by Equation \ref{eqnMans}, and suppose that Assumption \ref{assumptionS} holds (i.e., the shocks $\epsilon_{ti}$ are continuously distributed and stationary, conditional on the covariates and fixed effects). \cite{CM} showed that the following three conditional moment inequalities hold under the stationarity assumption:\footnote{Note that the inequalities \ref{eqnCM} differ slightly from those in \cite{CM}. This is due to the fact that I do not assume, as in \cite{CM}, that the conditional cumulative distribution function of the shocks (given the covariates and fixed effects) is strictly increasing.}
\begin{equation}
\label{eqnCM}
\begin{aligned}
P(Y_1=1|X=x)   &\leq P(Y_2=1|X=x)  \quad\text{whenever}\quad  x_1'\theta < x_2'\theta \\
P(Y_1=1|X=x)   &\geq P(Y_2=1|X=x)  \quad\text{whenever}\quad  x_1'\theta > x_2'\theta \\
P(Y_1=1|X=x)   &= P(Y_2=1|X=x)  \quad\text{whenever}\quad  x_1'\theta = x_2'\theta 
\end{aligned}
\end{equation}
I show in the following paragraphs how the approach of this paper can be used to obtain these inequalities by computational means.\par

 The first step of the approach of this paper consists in constructing discrete models that are locally (at a fixed value of the covariate $x$ and the parameter $\theta$) observationally equivalent to the model \ref{eqnMans} under the stationarity restriction. This \textit{discretization} step can be thought of as a dimensionality reduction, as it shows that although the distribution of the shocks in each local model is a-priori in an infinite-dimensional space (the space of stationary distributions on $\mathbb{R}^2$), there is an observationally equivalent version of each local model where the shocks are restricted to take a finite set of values (thus restricting their distributions to a finite-dimensional space).
 The idea of the discretization is borrowed from the proof of Theorem 2 in \cite{PP1}. \par

Fix $x \in supp(X)$ (all probability statements below are made conditional on $X=x$), and $\theta \in \Theta$,  and let $v_t=x_t'\theta$, for $t\in [2]$. As it will become clear from the computations that follow, the discrete (and locally equivalent) models that we construct are totally determined by the order relation between the indices $v_1$ and $v_2$. There will thus be three cases to consider: $v_2>v_1$, $v_2<v_1$ and $v_2=v_1$. Suppose that $v_2>v_1$ (the other two cases are handled analogously). From $v_1$ and $v_2$, we construct the intervals $I_1=(-\infty,v_1)$, $I_2=(v_1, v_2)$ and $I_3=(v_2,+\infty)$. Let the rectangles $R_{ij}$, $i,j\in [3]$ be defined by $R_{ij}=I_i\times I_j$, and let $p_{dd'}:=P(y_1=d,y_2=d'|x)$, $d,d' \in \{0,1\}$. For $i,j \in [3]$, let $q_{ij}$ denote the probability that $\zeta$ ($=(\zeta_1,\zeta_2)'$) is in $R_{ij}$, where $\zeta_{ti}:=\lambda_{ti}+\epsilon_{ti}$, i.e, $q_{ij}=P(\lambda+\epsilon_{1} \in I_i, \lambda+\epsilon_2 \in I_j)$. Each of the conditional choices probabilities $p_{dd'}$ can be expressed in terms of the sums of the $q_{ij}$'s: For instance
\begin{align*}
p_{11}&=P(v_1-\zeta_1>0, v_2-\zeta_2> 0)=P(\zeta_1 <v_1, \zeta_2 <v_2) \\
&=P(\zeta_1 \in I_1, \zeta_2 \in I_1\cup I_2)=P(\zeta \in R_{11} \cup R_{12})=q_{11}+q_{12}.
\end{align*}
Carrying out similar computations for the other 3 events yields the following linear relations

\[ 
\begin{bmatrix}
           p_{00} \\
           p_{01} \\
           p_{10}\\
          p_{11}         
         \end{bmatrix}
=
\bordermatrix{& q_{11}&q_{12}&q_{13}&q_{21}&q_{22}&q_{23}&q_{31}&q_{32}&q_{33}\cr
&0&0&0&0&0&1&0&0&1\cr
&0&0&0&1&1&0&1&1&0\cr
&0&0&1&0&0&0&0&0&0\cr
&1&1&0&0&0&0&0&0&0
}\qquad
\]
which I write in matrix form as $p=A(x,\theta)q$. The stationarity restriction on the shocks implies (see Remark \ref{rem2}) that the vector $q$ satisfies the linear restriction
\[
 \begin{bmatrix}
           0 \\
           0\\
           0       
         \end{bmatrix}
=
\bordermatrix{& q_{11}&q_{12}&q_{13}&q_{21}&q_{22}&q_{23}&q_{31}&q_{32}&q_{33}\cr
&0&1&1&-1&0&0&-1&0&0\cr
&0&-1&0&1&0&1&0&-1&0\cr
&0&0&-1&0&0&-1&1&1&0
}\qquad
\]
Which I write in matrix form as $R_S(x,\theta) q=0$ \footnote{\label{foot4}For instance by stationarity, we must have $P(\zeta_1\in I_1)=P(\zeta_2\in I_1)$ which is equivalent to $q_{11}+q_{12}+q_{13}=q_{11}+q_{21}+q_{31}$ or $q_{12}+q_{13}-q_{21}-q_{31}=0$, which corresponds to the first row of the restriction $R_Sq=0$. There are two other restrictions, and any one of these three restrictions is redundant given the other two, and dropping either one of these restrictions does not change the conclusion of our analysis.}.  Hence for all $p\in {\cal P}_S(x,\theta)$, there exists a $q\in \mathbb{R}^9$ such that $q\geq 0$, $R_Sq=0$ and $p=Aq$ \footnote{\label{ftnote1}Since each column of $A$ has a single non-zero entry equal to 1, the relation$p=Aq$ combined with p is a probability vector, imply that q is also a probability vector: $1=\mathbbm{1}^T p=\mathbbm{1}^T A q=\mathbbm{1}^T q$.}, and we get
\[
{\cal P}_S(x,\theta) \subseteq \{p\in R^{4}| \  p^T \mathbbm{1}=1\}\cap \{p=Aq | \ q\geq 0 \ \ R_S q=0\}.
\]
The reverse inclusion also holds, for if $q$ is a probability vector such that $R_Sq=0$, then there exists a continuously distributed random vector $\zeta\in {\cal M}_S$ such that $q_{ij}=P(\zeta\in R_{ij})$ (see Proposition \ref{prop1} below). Therefore, no information is lost in the discretization, and we have
\begin{equation}
\label{eqnh1}
{\cal P}_S(x,\theta) = \{p\in R^{4}| \  p^T \mathbbm{1}=1\}\cap \{p=Aq | \ q\geq 0, \ \ R_S q=0\}.
\end{equation}
The right hand side of \ref{eqnh1} shows that the set ${\cal P}_S(x,\theta)$ is equivalent to a model where the random utility shocks are discrete and take a finite number of values (given by the dimension of $q$). Another consequence of \ref{eqnh1} is that it shows that the set ${\cal P}_S(x,\theta)$ is a polytope, and a probability vector $p$ belongs to ${\cal P}_S(x,\theta)$ if and only if it belongs to $\{p=Aq | \ q\geq 0, \ \ R_S q=0\}$. As the restriction $p^T \mathbbm{1}=1$ is automatically satisfied by the identified CCPs (as they are probability vectors), our task then reduces to characterizing the polyhedral cone
\[
{\cal C}:=\{p=Aq | \ q\geq 0, \ \ R_S q=0\}.
\]
In particular, we want to find a concise affine description of the set ${\cal C}$; that is we want to find a minimal set of inequalities that are satisfied by a vector $p$ if and only if $p$ belongs to ${\cal C}$ (these are the so-called \textit{facet defining inequalities} for ${\cal C}$, and each polyhedron admits such an affine description$-$See \cite{AS}). \par
Once the (local) discrete models are constructed as in \ref{eqnh1}, the second step of the approach of this paper consists in using duality theory to give a dual representation of the set ${\cal C}$ that will be used in the third step to set up a computational problem to solve for its facet defining inequalities. I will show further below that the facet defining inequalities of ${\cal C}$ correspond to the conditional moment inequalities that characterize the identified set. The starting point of obtaining a dual characterization of ${\cal C}$ is Farkas' Lemma\footnote{Farkas' Lemma states that the equation $Mx=b$ has a non-negative solution (i.e., a solution $x\geq0$ component-wise) if and only if for all vectors $\lambda$ such that $\lambda^TM\leq 0$, we have $\lambda^Tb\leq 0$. In our context, we can take $M=(A^T, -R_S^T)^T$ and $b=(p^T ,0)^T$.}, which provides a partial answer, as it implies that  

\begin{equation}
\label{eqnh2}
\left[p \in {\cal C}\right] \   \text{if and only if} \   \left[ y'p \leq 0 \  \text{for all} \ y \in \mathbb{R}^4 \  \text{s.t.}\ A^Ty\leq R_S^T z \ \text{for some}\ z \in \mathbb{R}^3 \right].
\end{equation}
A consequence of \ref{eqnh2} is that any $y$ that belongs to  ${\cal Q}(x,\theta):=\{y\in \mathbb{R}^4| \ A^T y \leq R_S^Tz,  \text{for some}\ z \in \mathbb{R}^3\}$  provides a valid inequality restriction that must hold for all CCP vectors at $x$ that are consistent with the model for the parameter value $\theta$, as we necessarily have $y^Tp\leq 0$ for all $p \in {\cal P}_S(x,\theta)$. Moreover, \ref{eqnh2} implies that the set ${\cal Q}$ ($={\cal Q}(x,\theta)$) contains all the inequality restrictions that the model places on elements of  ${\cal C}$, since $p$ belongs to ${\cal C}$ if and only if $y^Tp\leq 0$ for all $y\in {\cal Q}$. However, all the inequalities in ${\cal Q}$ are not an economical way to to summarize the restrictions that the model places on the CCPs at $x$, and a more economical (using less inequalities) summary of the same restrictions is provided by considering only the extreme rays of ${\cal Q}$ (note that ${\cal Q}$ is a polyhedral cone). Alternatively, if we let ${\cal Q}_0$ denote the intersection of ${\cal Q}$ with the set $\{y\in \mathbb{R}^4| \ ||y||_{\infty}\leq 1\}$, i.e,

\[
{\cal Q}_0:={\cal Q}\cap \{y\in \mathbb{R}^4| \ A^T y \leq R_S^Tz,  \text{for some}\ z \in \mathbb{R}^3, \ ||y||_{\infty}\leq 1\},
\] 
then ${\cal Q}_0$ is a polytope, and its extreme points  provide a summary of all of the restrictions that the model places on elements of ${\cal C}$\footnote{ Indeed since ${\cal Q}_0$ is a polytope, it has a finite number of extreme points, say $\{y_i\}_{i\in[m]}$, and each element of ${\cal Q}_0$ is equal to a convex combination of these extreme points. Hence, if $p^Ty_i\leq 0$ for all $i\in[m]$, then we necessarily have that $y^Tp\leq 0$ for all $y\in {\cal Q}_0$.}. Moreover, as the inequalities in ${\cal Q}_0$ represent restrictions on non-negative vectors ($p\in {\cal C}$ implies that $p\geq 0$), we can further reduce the set of inequalities needed to characterize ${\cal C}$ by considering only the undominated extreme points of ${\cal Q}_0$; these are the extreme points $y$ of ${\cal Q}_0$ such that there does not exists $y'\in {\cal Q}_0$ such that $y'\neq y$ and $y'\geq y$ (component-wise). Indeed, note that if $\tilde{y}\leq y$, then whenever $y^T p\leq 0$ for some $p\geq 0$, we necessarily have $\tilde{y}^T p\leq 0$. Hence, the restriction that correspond to the undominated extreme points of ${\cal Q}_0$ are sufficient to sharply characterize ${\cal C}$ (and thus ${\cal P}_S(x,\theta)$).\par
The third and final step of  the approach of this paper consists in solving for the undominated extreme points of ${\cal Q}_0$. To do so, I frame the problem of solving for the undominated extreme points of ${\cal Q}_0$ as a multiple-objective linear program (MOLP). Here, a MOLP simply consists of solving for \textit{all} of the undominated extreme points of the image of a set $\{x|Mx\leq b\}$ under a matrix $C$ (see \cite{HB1} or \cite{LC2}), and MOLP are written compactly as:
\begin{equation}
\label{eqnMolp1}
\begin{aligned}
& Vmax \ \ \ Cx \\
&s.t. \ \ \ x\in \{x:Mx\leq b\}
\end{aligned}
\end{equation}
where $M\in \mathbb{R}^{m\times n}$, $b\in \mathbb{R}^m$ and $C\in \mathbb{R}^{p\times n}$ \footnote{For each $i\in[p]$, the row $C_i$ can be thought of as the objective/utility of the $i^{th}$ individual, and each element of $\{x:Mx\leq b\}$ can be thought to represent a feasible allocation. A MOLP then consists of finding all of the extreme points of the Pareto frontier (in the objective/utility space)}.\par

In the present context, solving for the undominated extreme points of ${\cal Q}_0$ is equivalent to the MOLP
\[
\begin{aligned}
\label{eqnh3}
& Vmax \ \ \ Cx \\
&s.t. \ \ \ x\in \{(y,z)' \in \mathbb{R}^7 \ | \ A^Ty\leq R_S^Tz, \ ||y||_{\infty}\leq 1\}
\end{aligned}
\]
where $C=(\mathit{I}_{4\times 4}, \mathit{0}_{4\times 3})$. There are many known algorithms to solve MOLPs; using Benson's outer approximation algorithm (see \cite{HB1}) to solve \ref{eqnh3} yields the following two solutions: $\tilde{y}_0=(0,0,0,0)$ and $\tilde{y}_1=(0,-1,1,0)$.
 The vector $\tilde{y}_0$ yields the trivial restriction $0^Tp\leq 0$ for all $p \in {\cal P}_S(x,\theta)$; it  is clearly feasible (i.e, belongs to ${\cal Q}_0$), and it is undominated as we would otherwise have a restriction implying that some components of the observed CCPs must be zero. The inequality that corresponds to $\tilde{y}_1$ is:
\[
p_{10}\leq p_{01} \ \ \text{for all} \ \ \ p \in {\cal P}_S(x,\theta).
\]
Adding $p_{11}$ to both sides of the inequality yields
\[
P(Y_1=1|X=x) \leq P(Y_2=1|X=x) \ \ \text{for all} \ \ \ p \in {\cal P}_S(x,\theta).
\]
Since the polytope ${\cal P}_S(x,\theta)$ is the same for all $\theta$ and $x$ such that $x_1'\theta<x_2'\theta$, we have in fact shown that
\[
P(Y_1=1|X=x) \leq P(Y_2=1|X=x) \ \ \text{whenever} \ \ \ x_1'\theta<x_2'\theta ,
\]
and this is the only inequality that a CCP vector $p$ must satisfy to belong to ${\cal P}_S(x,\theta)$.
Arguing by symmetry (by relabeling the time periods), we immediately get 
\[
P(Y_2=1|X=x) \leq P(Y_1=1|X=x) \ \ \text{whenever} \ \ \ x_2'\theta<x_1'\theta .
\]
Repeating the same construction with $v_0=v_1$ yields
\[
P(Y_2=1|X=x) = P(Y_1=1|X=x) \ \ \text{whenever} \ \ \ x_2'\theta=x_1'\theta .
\]
These are the inequalities of \cite{CM}, which were derived there by analytical means. An advantage of the current approach is that the inequalities that we get from our algorithm are sharp by construction, as they exhaust all the information in ${\cal Q}_0$. By contrast, \cite{CM} only established the validity of these inequalities, but did not show sharpness. Sharpness was only established recently as a corollary of a more general result of \cite{PP1}. As can be seen (for instance) by the arguments in  \cite{PP1} and \cite{KPT}, establishing sharpness is not usually an easy task. I thus view it as a main advantage of the approach of this paper that the procedure always generates inequalities that are sharp by construction. \par
Below, I show that the foregoing program can be carried through in both static and dynamic settings (with a fixed but arbitrary number of lags) with $D\geq 2$ alternatives and $T\geq 2$ time periods. Thus, absent computational limitations, the procedure of this paper provides a way to automate the the derivation of all the conditional moment inequality restrictions that characterize the identified set of the common parameters.
\subsection{Discretization and construction of the  DCPs}
\label{sectionDiscretization}
In this and the next section, I show how the procedure discussed in Section \ref{sectionHeuristics} can be generalized to the static setting with $D\geq 2$ and $T\geq 2$. I proceed as in Section \ref{sectionHeuristics}, and present the procedure in three steps. All of the steps are applied locally (i.e., for a fixed value of the covariate $x$ and a fixed parameter value $\theta$). This section discusses the first step, and provides a generalized discretization scheme. In Section \ref{sectionDDCP}, I discuss the second and the third step of the procedure, and show how the discretization can be used to set up MOLPs whose solutions correspond to the conditional moment inequalities that characterize the identified set. Since the three steps of the procedure are applied locally, changing the covariate value $x$ or the parameter $\theta$ may generate some new inequality restrictions, and we may need to carry out the procedure at all covariate and parameter values pairs to generate all the desired inequality restrictions. I will show, however, in Proposition \ref{propPatch} below that we always only need to consider a finite number of \textit{local models} to generate all the inequalities that characterize the identified set, even if some of the components of $X$ are continuously distributed (for instance, in Section \ref{sectionHeuristics}  we only needed to consider 3 local models to generate all of the inequalities that characterize $\Theta_S$).  The discretization scheme is borrowed from \cite{PP1} where they consider a setting with $T=2$. I show here how it can be extended to $T\geq 2$, and will discuss in  Section \ref{sectionDyn} how it can be extended to dynamic models. Similar discretization scheme  appear in \cite{KS} and \cite{TTY}, and some of the terminology that I use below is borrowed from these papers.\footnote{See also \cite{GRS} for another example of the utilization of discretization schemes.}\par

The model under consideration in this section is the static model (i.e, Equation \ref{eqnstat}), where I impose either Assumption \ref{assumptionS} or \ref{assumptionE}. Fix a covariate value $X=x$ and a parameter value $\theta$, and set $v=(v_1,\cdots,v_T)\in \mathbb{R}^{D\times T}$, where $v_{t}=(v_{1t},\cdots,v_{Dt})'$ and $v_{dt}:=x_{dt}'\theta$. Here, $v_t$ represents the vector of the deterministic components of the indirect utilities at time $t$, when $X=x$ and the index parameter takes the value $\theta$. For each time period $t\in[T]$, the space of period $t$ shocks, $\mathbb{R}^{D},$ can be partitioned  \footnote{\label{foot8}The partition is up to the measure zero set formed by the union of the boundaries of the regions.} into the $D$ regions $\{{\cal \varepsilon}_{d,t}\}_{d=1}^D$, where ${\cal \varepsilon}_{d,t}$ denotes the set of all realizations of the shocks that induce the agent to choose alternative $d$ at time $t$:
\begin{equation}
\label{eqnPart}
{\cal \varepsilon}_{d,t}:=\{\zeta \in \mathbb{R}^D | \ v_{dt}+\zeta_{d}>v_{d't}+\zeta_{d'}\ \ \forall d'\neq d, \ d'\in[D]\}.
\end{equation}
Here $\zeta:=\lambda+\epsilon$ represents the composite error term. We can further partition the space $\mathbb{R}^D$ by considering the coarsest partition that generates all sets in $\{\varepsilon_{dt}\}_{d\in[D],t\in [T]}$. We refer to the sets in this coarsest partition as \textit{patches.} Following, \cite{TTY}, the set of all patches represents what can be called the \textit{minimal relevant partition}, and all the realizations of the shocks in a patch induce agents to make a specific choice in each time period $t\in[T]$. The patches can be determined as follows: A sequence $(d_1, d_2,\cdots,d_T)\in [D]^T$ determines a patch if and only if:
\[
\varepsilon_{d_1,1}\cap\varepsilon_{d_2,2}\cap\cdots\cap \varepsilon_{d_T,T}\neq \emptyset.
\]
The set of all patches is denoted by $F$:
\[
F:=\{(d_1,d_2,\cdots,d_T)\in [D]^T \ | \ \varepsilon_{d_1,1}\cap\varepsilon_{d_2,2}\cap\cdots\cap \varepsilon_{d_T,T}\neq \emptyset\}.
\]
Patches, in turn, can be used to partition $\mathbb{R}^{D\times T}$ (the domain of the vector of shocks across all D alternatives and T time periods) into the \textit{rectangular regions}  of the type $f_1\times f_2\times \cdots \times f_T$, with $f_i \in F$ for all $i\in [T]$. The set of all such rectangular regions is denoted ${\cal R}$:
\[
{\cal R}:=\{f_1\times f_2\times \cdots \times f_T\ \ | \ f_i \in F, \ \forall i \in [T]\}.
\]
\par

Given $v$ as an input, a simple algorithm to determine the set of all patches is the following: The sequences $(d_1,\cdots,d_T)$ that correspond to patches are those for which the value of the following linear program is zero (i.e, the constraint region is non-empty \footnote{\label{foot9}By convention (see \cite{AS}), a minimization linear program has a value of $\infty$ if the constraint region is empty. })
\begin{equation}
\label{eqnPatch}
\begin{array}{ll@{}ll}
\text{minimize}  & 0 &\\
\text{subject to} & &\zeta \in \mathbb{R}^{D} &\\
&    \zeta_{d'}-\zeta_{d_t}&\leq v_{d_tt}-v_{d't}-\delta &\text{for all} \ t \in [T]\  \text{and} \ d'\in [D]\backslash\{d_t\},
                
\end{array}
\end{equation}
where the tolerance parameter $\delta$  is small positive constant (can for instance be set equal to $10^{-4}$) that is used to replace the strict inequalities that appear in the definition of the sets $\varepsilon_{d,t}$ by weak inequalities. Thus the set of all patches can be determined by solving $D^T$ linear programs (LP).\par
\begin{remark}
\label{rem3}
For fixed $T$, it is possible to replace the LP that determine the patches by explicit rules, at the cost of some extra derivations. When $T=2$, for instance, $(d,d')$ is a patch, with $d\neq d'$, if and only if $\Delta v_d<\Delta v_{d'}$, where $\Delta v_d:=v_{d2}-v_{d1}$ for all $d\in[D]$. Indeed, $\varepsilon_{d,1}\cap\varepsilon_{d',2}\neq \emptyset$, if and only if there exists $\zeta_d$ and $\zeta_d'$ in $\mathbb{R}$, such that $\zeta_{d}-\zeta_{d'}>v_{d'1}-v_{d1}$ and $\zeta_{d}-\zeta_{d'}<v_{d'2}-v_{d2}$ \footnote{\label{foot10}This is the case since we can effectively restrict the agent's decision to be between the alternatives $d$ and $d'$, by making the shocks of the remaining alternatives $d''\notin \{d,d'\}$ take arbitrarily large negative values.}. There exists a pair $(\zeta_d,\zeta_{d'})$ that solves both inequalities if and only if $v_{d'1}-v_{d1}<v_{d'2}-v_{d2}$, and the latter is equivalent to $\Delta v_d<\Delta v_{d'}$. A similar argument shows that $(d,d)$ is always a patch for all $d\in [D]$. Replacing the LPs that determine the patches by such explicit rules can speed up computations and is recommended when $D^T$ (the number of LPs that the procedure outline above solves) is large.
\end{remark}

As in Section \ref{sectionHeuristics}, I show that the sets ${\cal P}_r(x,\theta)$ (for $r\in \{S,E\}$) defined by equations \ref{eqnDCPS1} and \ref{eqnDCPE1} are polytopes. In particular, I show that they can be written in the form ${\cal P}_r(x,\theta)=\{p\in \mathbb{R}^{DT}| \mathbbm{1}^Tp=1\}\cap \{p=A(x,\theta)q| \ q\in \mathbb{R}^{|{\cal R}|},\ R_r (x,\theta)q=0, \ q\geq 0\}$  for some matrices $A(x,\theta)$ and $R_r(x,\theta)$ whose construction I now discuss. I will refer to the sets $\{p=A(x,\theta)q| \ q\in \mathbb{R}^{|{\cal R}|},\ R_r (x,\theta)q=0, \ q\geq 0\}$ as \textit{discrete choice polyhedrons} (DCPs) \footnote{\label{foot11}These sets are polyhedral cones, and becomes a polytope when intersected with the set $\{p\in \mathbb{R}^{D^T}| \mathbbm{1}^Tp=1\}$, which is in some sense redundant in our characterization of the identified set, as the identified conditional choice probabilities necessarily satisfy the restriction $\mathbbm{1}^Tp=1$. The terminology of \textit{discrete choice polyhedron} is similar to that of \textit{multiple choice polytope} that is used in the mathematical economics and mathematical psychology literature to refer to the set of all stochastic choice functions that are consistent with a random utility model (see \cite{Dos}, \cite{DMF1}, \cite{PF} and references therein)}. The matrix $A$ is a $D^T$-by-$|{\cal R}|$ matrix, with each row corresponding to a sequence of choices  (i.e, each row is indexed by a sequence $d_1,d_2,\cdots, d_T$, with $d_i \in [D]$), and each column corresponding to a region $\textbf{R} \in {\cal R}$. Given a choice sequence $\textbf{d}=(d_1,\cdots,d_T)\in [D]^T$ and a region $\textbf{R}=\textbf{f}_1\times \textbf{f}_2\times \cdots\times \textbf{f}_T \in {\cal R}$, the entry corresponding to the $\textbf{d}^{th}$-row and $\textbf{R}^{th}$-column is equal 1 (and equal to 0 otherwise) if and only if for each $t\in [T]$, the shocks in the patch $\textbf{f}_t$ induces the agent to choose alternative $d_t$ at time $t$ (i.e, if the patch $\textbf{f}_t$ is given by $\textbf{f}_t=(d^{(t)}_1,d^{(t)}_2,\cdots,d^{(t)}_T)$, then $d^{(t)}_t=d_t$). Intuitively, all the entries of the $\textbf{d}^{th}$ row equal to one, represent all of the regions in the partition ${\cal R}$ that induce the agent to choose alternative $d_t$ for each $t\in [T]$.\par
The matrices $R_r$, $r\in \{E,S\}$ are easy to construct, and simply enforce the stationarity or exchangeability restriction on the probability vector $\textbf{q}\in \mathbb{R}^{|{\cal R}|}$, which gives the probability that the shocks belong to the different regions of the partition $ {\cal R}$. That is, the matrix $R_S$ encodes the restriction: $\forall \textbf{f} \in F$, and for all $i,j \in [T]$

\begin{equation*}
\sum_{\{\textbf{f}_1 \times\cdots\times \textbf{f}_T | \ \textbf{f}_t \in F \ \forall t\in [T], \  \textbf{f}_i=\textbf{f} \}} \textbf{q}_{\textbf{f}_1\times\cdots\times \textbf{f}_T}=\sum_{\{\textbf{f}_1 \times\cdots\times \textbf{f}_T | \ \textbf{f}_t \in F \ \forall t\in [T], \  \textbf{f}_j=\textbf{f} \}} \textbf{q}_{\textbf{f}_1\times\cdots\times \textbf{f}_T},
\end{equation*}
and the matrix $R_E$ encodes the restriction: For all $f_1\times\cdots\times f_T \in {\cal R}$ and for all permutation $\pi$ on $[T]$
\begin{equation*}
\textbf{q}_{\textbf{f}_1\times\cdots\times \textbf{f}_T}=\textbf{q}_{\textbf{f}_{\pi(1)}\times\cdots\times \textbf{f}_{\pi(T)}}.
\end{equation*}

By construction, we always have ${\cal P}_r(x,\theta)\subseteq \{p\in \mathbb{R}^{DT}| \mathbbm{1}^Tp=1\}\cap \{p=A(x,\theta)q| \ q\in \mathbb{R}^{|{\cal R}|},\ R_r (x,\theta)q=0\}$, for $r\in \{S,E\}$. I show below that the reverse inclusion also holds, and no identifying information is lost when we consider the (local) discrete models.  For this, it suffices to show that for each probability vector $\textbf{p}=A(x,\theta) \textbf{q}$, where $\textbf{q}$ is a probability vector that satisfies $R_r(x,\theta)\textbf{q}=0$, there exists (conditional on $X=x$) a continuously distributed vector of shocks $\zeta \in \mathbb{R}^{D \times T}$ that is observationally equivalent to $\textbf{q}$ (in that it also generates the same CCP vector $\textbf{p}$), and that satisfies the same stochastic restriction as \textbf{q} (i.e., $\zeta$ is stationary (resp. exchangeable) if \textbf{q} is stationary (resp. exchangeable)).  A similar result for the stationary case appears in \cite{PP1}.\footnote{A similar result is also established in Proposition 1 of \cite{TTY}.}
\begin{proposition}
\label{prop1} 
Let ${\cal P}_r(x,\theta)$, with $r\in \{E,S\}$, be defined as in equations \ref{eqnDCPS1} and \ref{eqnDCPE1}. Then 
\[
{\cal P}_r(x,\theta)=\{p\in \mathbb{R}^{D^T}| \mathbbm{1}^Tp=1\}\cap \{p=A(x,\theta)q| \ q\in \mathbb{R}^{|{\cal R}|},\ R_r q=0, \ q\geq 0\}.
\]
\end{proposition}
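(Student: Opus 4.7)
My plan is to prove both inclusions separately. The forward inclusion, ${\cal P}_r(x,\theta)\subseteq \{p\in \mathbb{R}^{D^T}| \mathbbm{1}^Tp=1\}\cap \{p=A(x,\theta)q| \ q\geq 0,\ R_r q=0\}$, follows almost by construction: given $\zeta \in {\cal M}_r$ that induces CCP vector $p=F_{Y|X=x}$ via the map $y(\theta,x,\zeta)$, define $q_{\textbf{R}}:=P(\zeta \in \textbf{R} \mid X=x)$ for each rectangular region $\textbf{R} \in {\cal R}$. Then $q\geq 0$ and $\mathbbm{1}^T q=1$ (since the regions partition $\mathbb{R}^{D\times T}$ up to a null set by footnote 8). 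Because the $(\textbf{d},\textbf{R})$-entry of $A(x,\theta)$ equals one exactly when every patch $\textbf{f}_t$ of $\textbf{R}$ assigns choice $d_t$ to period $t$, summing $q_{\textbf{R}}$ over such regions recovers $p_{\textbf{d}}$, i.e. $p=A(x,\theta)q$. Stationarity (resp.\ exchangeability) of $\zeta$ translates directly into the linear identities $R_S q=0$ (resp.\ $R_E q=0$): these simply say that, respectively, $P(\zeta_t \in \textbf{f})$ is independent of $t$ for every patch $\textbf{f}$, or $q_{\textbf{R}}=q_{\pi(\textbf{R})}$ for every permutation $\pi$ of the $T$ factors.

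The reverse inclusion is the substantive direction and is where I would spend most of the effort. Given a probability vector $q\geq 0$ with $R_r q=0$ and $p=A(x,\theta)q$, I must manufacture a continuously distributed $\zeta\in {\cal M}_r$ whose regional masses are exactly the $q_{\textbf{R}}$. The key idea, adapted from Pakes--Porter, is to fix once and for all, for every patch $\textbf{f}\in F$, a continuous probability measure $\nu_{\textbf{f}}$ supported on $\textbf{f}$ (for instance the restriction of a $D$-dimensional Gaussian to $\textbf{f}$, renormalized; each $\textbf{f}$ is a nonempty open set, so this is well-defined). Then on each rectangular region $\textbf{R}=\textbf{f}_1\times\cdots\times\textbf{f}_T$ place the product measure $\nu_{\textbf{f}_1}\otimes\cdots\otimes\nu_{\textbf{f}_T}$ scaled by $q_{\textbf{R}}$, and define the joint law of $\zeta$ as the resulting mixture
\begin{equation*}
\mathbb{P}_\zeta := \sum_{\textbf{R}\in {\cal R}} q_{\textbf{R}} \, \nu_{\textbf{f}_1}\otimes\cdots\otimes\nu_{\textbf{f}_T}.
\end{equation*}
By construction $\mathbb{P}_\zeta$ is absolutely continuous with respect to Lebesgue measure on $\mathbb{R}^{D\times T}$, places mass $q_{\textbf{R}}$ on region $\textbf{R}$, and therefore reproduces the CCPs $p=A(x,\theta)q$ when the choices are generated by model \eqref{eqnstat}.

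The remaining task is to check that $\mathbb{P}_\zeta\in {\cal M}_r$. For the stationary case, the marginal law of $\zeta_t$ is $\sum_{\textbf{f}\in F}\bigl(\sum_{\textbf{R}:\textbf{f}_t=\textbf{f}}q_{\textbf{R}}\bigr)\nu_{\textbf{f}}$, and the inner sum is the same for every $t$ by $R_S q=0$; hence the marginals coincide, which gives stationarity. For the exchangeable case, the law of $(\zeta_{\pi(1)},\ldots,\zeta_{\pi(T)})$ is the mixture with weights $q_{\pi^{-1}(\textbf{R})}$, which equals the weights $q_{\textbf{R}}$ precisely because $R_E q=0$ encodes permutation invariance of $q$; since the per-patch measures $\nu_{\textbf{f}}$ are shared across time slots, the whole joint law is invariant under $\pi$. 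This yields $\zeta\in {\cal M}_r$ and closes the reverse inclusion.

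The main obstacle I anticipate is making sure the continuous mixture really satisfies the stochastic restriction globally rather than only at the level of patch probabilities; the trick of reusing a single $\nu_{\textbf{f}}$ per patch across all time slots is what ties the pieces together, since it guarantees that the marginal conditional law of $\zeta_t$ given $\zeta_t\in \textbf{f}$ is $\nu_{\textbf{f}}$ regardless of $t$. Minor care is needed in handling the measure-zero boundaries of the patches (footnote 8), but these contribute nothing to the probabilistic statements and can be absorbed into the null sets on which $y(\theta,x,\zeta)$ is already undefined.
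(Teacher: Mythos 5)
Your proposal is correct and takes essentially the same route as the paper: the paper likewise fixes, for each patch $f$, a single density $g_f$ supported on (a compact subset of) $f$ and defines the joint law of $\zeta$ as the mixture $\sum_{\textbf{R}} q_{\textbf{R}}\prod_t g_{f_t}$, relying on the reuse of one per-patch density across all time slots to transfer stationarity or exchangeability from $q$ to $\zeta$. Your writeup simply spells out the marginal and permutation-invariance checks that the paper leaves as ``it can be checked.''
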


\subsection{Construction of the DDCPs and characterization of the identified set}
\label{sectionDDCP}
In this section, I first give a dual characterization of the sets ${\cal P}(x,\theta)$ using their discrete representation derived in Section \ref{sectionDiscretization}. I then use the dual characterization of the DCPs to set up MOLPs whose solutions produces the conditional moment inequalities that characterize the identified set. In the main result of this section, Theorem \ref{thm1}, I show that applying the procedure at a finite number of local models suffices to generate all of the inequalities that characterize the sharp identified set. The dual characterization of the sets  ${\cal P}(x,\theta)$ that I present below was used in \cite{PP1} to prove that some inequalities that they derived (analytically) were sufficient to characterize the identified set. One of the main observations in this paper is that the undominated extreme points of the duals of the sets ${\cal P}(x,\theta)$ correspond to the conditional moment inequalities that characterize the identified set. Thus deriving the moment inequalities that characterize the identified set can be reduced to the task of solving for the undominated extreme points of some polytopes, which in turn can be framed as MOLPs and solved computationally.

Recall that given a convex set ${\cal C}$ in $\mathbb{R}^n$, its dual cone is the set ${\cal C}^*$ defined by 
\[
{\cal C}^*:=\{y \in \mathbb{R}^n | \ y^Tx\leq 0 \ \ \forall \ x \in {\cal C}\}.
\]
When ${\cal C}$ is a closed and convex cone, the dual of its dual is itself, i.e, ${\cal C}^{**}={\cal C}$ (see \cite{AS}). Thus the dual  cone of a closed convex cone provides an alternative and equivalent description of ${\cal C}$ (i.e, ${\cal C}$ and its dual ${\cal C}^*$ are in one-to-one correspondence). If ${\cal C}$ is a DCP, i.e, a set of the type ${\cal C}=\{p=Aq | \ q\geq 0, Rq=0\}$, then ${\cal C}$ is a closed and convex cone, and its dual cone is the set of all $y's$ such that $\sup_{p \in {\cal C}}y^Tp\leq 0$. By the (strong) duality theorem of linear programing, we have

\begin{align*}
\sup_{p \in {\cal C}}y^Tp&=\sup_{\{q| \ q\geq 0, \ Rq=0\}} y^TAq\\
&=\inf_{\{z| \ A^T y\leq R^T z\}} 0.
\end{align*}
The last term is either $\infty$ or $0$ depending on whether the set $\{z| \ A^T y\leq R^T z\}$ is empty or not. Thus the dual cone of ${\cal C}^*$ is given by ${\cal C}^*=\{y | \ \exists z \ \text{s.t.} \ A^T y \leq R^T z\}$ \footnote{\label{foot12} Following \cite{PP1}, the same conclusion is obtained with the use of Farkas' Lemma. The alternative derivation here is mainly used to justify why I refer to these sets as the \textit{dual} discrete choice polyhedrons. The same conclusion can also be reached by considering the polar cone of ${\cal C}$ (see \cite{AS} pg. 65).}. By the preceding duality result, we have 
\begin{align*}
{\cal C}&={\cal C}^{**}=\{p | \ p^Ty \leq 0 \ \forall y \in {\cal C}^*\}\\
&=\{p | \ p^Ty \leq 0 \ \forall y \in {\cal C}^*, \ ||y||_{\infty}\leq 1\}
\end{align*}
where the inclusion of the normalization $||y||_{\infty}$ clearly does not change the set ${\cal C}$, as ${\cal C}$ is a cone. When  the sets ${\cal C}$ are DCPs, I will refer to the associated polyhedrons $\{y | \ \exists z \ \text{s.t.} \ A^T y \leq R^T z, \ ||y||_{\infty}\leq 1\}$ as the \textit{Dual Discrete Choice Polyhedrons} (DDCPs). Each DDCP is a polytope, and as such it admits a finite number of extreme points, say $\{y_i\}_{i=1}^m$, and a simple argument shows that 
\[
{\cal C}=\{p | \ p^Ty_i \leq 0 \ \forall i \in [m]\}.
\]
Thus the DCP is completely determined by the extreme points of the DDCP. As the number of extreme points of the DDCPs can be large, a simpler characterization can be obtained by considering the notion of dominance defined below.
\begin{definition}
\label{defDom}
A point $y$ belonging to a convex set ${\cal C}$ is dominated (in ${\cal C}$) if there exist $y'\in {\cal C}$ such that $y'\neq y$ and $y'\geq y$ (component-wise). A point $y \in {\cal C}$ is undominated if it is not dominated. \footnote{\label{foot13}Equivalently, using the Separating Hyperplane Theorem, a point $y' \in {\cal C}$ is undominated iff there exists a $w \in \mathbb{R}^n$, with $w> 0$ (i.e., all components are strictly positive), such that $y' \in \ argmax_{\{y\in {\cal C}\}} y^Tw$ (see \cite{YZ}).}\footnote{Note that \cite{LC1} uses the term "non-dominated" instead of "undominated".}
\end{definition}
As the DCPs are contained in $\mathbb{R}_{+}^{n}$, if the set of undominated extreme points of the DDCP is given by $\{y_j\}_{j=1}^{m'}$, then a simple argument shows that 
\[
{\cal C}=\{p | \ p\geq 0, \ p^Ty_i \leq 0 \ \forall i \in [m']\}.
\]
That is, the DCP is completely determined by the undominated extreme points of the DDCP.\par

Solving for the undominated extreme points of a DDCP can be framed as a MOLP (see \cite{HB1} and equation \ref{eqnMolp1}). Indeed the undominated extreme points of the set $\{y | \ \exists z \ \text{s.t.} \ A(x,\theta)^T y \leq R_r(x,\theta)^T z, \ ||y||_{\infty}\leq 1\}$, $r\in \{E,S\}$, are the solutions to the MOLP:

\begin{equation}
\label{eqnMolp}
\begin{aligned}
& Vmax \ \ \ Cx \\
&s.t. \ \ \ x\in \{(y,z)' \in \mathbb{R}^{d_y+d_z} \ | \ A(x,\theta)^Ty\leq R_r(x,\theta)^Tz, \ ||y||_{\infty}\leq 1\}
\end{aligned}
\end{equation}
where $C=(\mathit{I}_{d_y\times d_y}, \mathit{0}_{d_y\times d_z})$, $dim(y)=d_y$ and $dim(z)=d_z$\footnote{The dimension $d_z$ of $z$ depends on the number of non-empty patches obtained from the discretization step. The dimension $d_y$ of $y$, on the other hand is equal to $D^T$, the number of rows of $A$.}. Let ${\cal I}_S(x,\theta)$ (resp. ${\cal I}_E(x,\theta)$) denote the set of all the solutions to the MOLP \ref{eqnMolp} under the stationarity (resp. exchangeability) restriction.\par
Proposition \ref{propPatch} below shows that the sets ${\cal I}_r(x,\theta)$ (for $r\in \{E,S\}$), defined above, only take a finite number of values as $x$ and $\theta$ vary through their respective domains ${\cal X}$ and $\Theta$. The main implication of the proposition is that we only need to consider a finite number of local models (and by consequence we only need to solve for a finite number of MOLPs) to obtain all of the conditional moment inequalities that characterize the sharp identified set, this being the case even if some components of the explanatory variables $X$ are continuously distributed. The proposition is a direct consequence of the fact that from the discretization scheme in Section \ref{sectionDiscretization} the sets ${\cal P}_r(x,\theta)$ are completely determined by the set of patches that are used in their discrete representation, and there is only a finite number of possible configurations of patches (see Appendix A for details). 
\begin{proposition}
\label{propPatch}
Fix $r\in \{E,S\}$. Then there exists some $m\in \mathbb{N}$, a finite partition $\{O_k\}_{k\in [m]}$ of ${\cal X}\times \Theta$, and a finite collection of sets $\{I_k\}_{k=1}^m$ (each one a subset of $\mathbb{R}^{D^T}$), such that for $k\in[m]$ we have ${\cal I}_r(x,\theta)=I_k$ for all $(x,\theta) \in O_k$.
\end{proposition}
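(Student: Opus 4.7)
The plan is to reduce the proposition to a purely combinatorial statement about the set of patches $F(x,\theta)\subseteq[D]^T$ that drives the discretization in Section \ref{sectionDiscretization}. Specifically, I would show that (i) $\mathcal{I}_r(x,\theta)$ is determined by the pair of matrices $(A(x,\theta), R_r(x,\theta))$, (ii) this pair of matrices is determined by $F(x,\theta)$, and (iii) $F(x,\theta)$ takes only finitely many values as $(x,\theta)$ ranges over $\mathcal{X}\times\Theta$. Combining (i)--(iii) and grouping $(x,\theta)$ by the value of $F(x,\theta)$ yields the desired finite partition $\{O_k\}$ and finite collection $\{I_k\}$.

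For (i), observe that the MOLP \eqref{eqnMolp} whose solutions define $\mathcal{I}_r(x,\theta)$ depends on $(x,\theta)$ only through $A(x,\theta)$ and $R_r(x,\theta)$; the objective matrix $C=(\mathit{I}_{d_y\times d_y},\, \mathit{0}_{d_y\times d_z})$ and the normalization $\|y\|_\infty\leq 1$ are fixed independently of $(x,\theta)$. For (ii), revisit the constructions in Section \ref{sectionDiscretization}: once $F$ is fixed, so is the collection $\mathcal{R}=\{\mathbf{f}_1\times\cdots\times\mathbf{f}_T : \mathbf{f}_t\in F\}$ of rectangular regions, and the $(\mathbf{d},\mathbf{R})$-entry of $A$ equals $1$ iff each patch $\mathbf{f}_t$ appearing in $\mathbf{R}=\mathbf{f}_1\times\cdots\times\mathbf{f}_T$ has $d_t$ in its $t$-th coordinate; similarly, the rows of $R_S$ and $R_E$ encode equalities among sums of coordinates of $\mathbf{q}$ that are indexed purely by patches and permutations. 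Hence, once an enumeration convention for patches is fixed as a function of $F$, the pair $(A,R_r)$ is a deterministic function of $F$. For (iii), note that $F(x,\theta)$ is a subset of the fixed finite set $[D]^T$, so it can take at most $2^{D^T}$ distinct values; let $F_1,\ldots,F_m$ be those values that are actually attained on $\mathcal{X}\times\Theta$, and set $O_k:=\{(x,\theta):F(x,\theta)=F_k\}$ and $I_k:=\mathcal{I}_r(x,\theta)$ for any representative $(x,\theta)\in O_k$.

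The main obstacle is the bookkeeping in step (ii): one must ensure that the ambient row/column indexing of $A(x,\theta)$ and $R_r(x,\theta)$ is chosen to depend only on $F$ (and not independently on $(x,\theta)$), so that ``same $F$'' translates literally into ``same matrices'' rather than equality only up to an unspecified permutation of columns. A secondary, purely descriptive, point one may wish to verify (though the proposition does not require it) is that each $O_k$ admits a semi-algebraic description in $(x,\theta)$; this follows from the LP-feasibility criterion \eqref{eqnPatch} together with Farkas' lemma, since membership of a candidate $(d_1,\ldots,d_T)$ in $F(x,\theta)$ is equivalent to a finite conjunction of inequalities that are linear in $\theta$ with coefficients depending on $x$ through $v_{dt}=x_{dt}'\theta$. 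Once the canonical enumeration of patches and rectangular regions is fixed, (i)--(iii) compose immediately and yield the conclusion.
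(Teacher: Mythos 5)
Your proposal is correct and follows essentially the same route as the paper: the appendix proof groups $(x,\theta)$ by the finitely many possible configurations of $(A,R_r)$ (finiteness following from the bounded dimensions and $\{0,\pm 1\}$ entries), while you factor one step further through the patch set $F(x,\theta)\subseteq[D]^T$ — a refinement the paper itself invokes in the discussion preceding the proposition. Your care about fixing a canonical enumeration so that "same $F$" yields literally identical matrices is a legitimate bookkeeping point that the paper's proof leaves implicit, but it is not a substantive difference.
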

\begin{remark}
\label{rem1}
For some concrete examples that I consider below, I will show how to construct the partitions $\{O_i\}_{i\in [m]}$. In the two-periods binary choice model, it follows from the derivations of Section \ref{sectionDiscretization} that we can take $m=3$, and define the sets $\{O_k\}_{k\in[3]}$ as follows:
$O_1:=\{(x,\theta)\in {\cal X}\times \Theta \ |\ x_1'\theta>x_2'\theta\}$, $O_2:=\{(x,\theta)\in {\cal X}\times \Theta \ |\ x_1'\theta<x_2'\theta\}$ and $O_3:=\{(x,\theta)\in {\cal X}\times \Theta \ |\ x_1'\theta=x_2'\theta\}$. 
\end{remark}
Given a partition $\{O_k\}_{k\in [m]}$ as in Proposition \ref{propPatch}, to generate all of the conditional moment inequalities that characterize the identified set, it suffices to solve a finite number $m$ of MOLPs; one for each $O_k$.  I now state the main result of this section, which gives a characterization of the sharp identified set in term of the solutions ${\cal I}_i(x,\theta)$ of the MOLPs. Theorem \ref{thm1} is a direct consequence of Proposition \ref{propPatch} and the of the dual characterization of the DCPs.

\begin{theorem}
\label{thm1}
Fix $r\in \{E,S\}$, and let the partition $\{O_k\}_{k\in [m]}$ and the sets $\{I_k\}_{k\in [m]}$ be as in Proposition \ref{propPatch}. Then given the identified distribution $F_{Y,X}$ of observables, the corresponding sharp identified set for the index parameter $\theta$ is given by
\[
\begin{aligned}
&\Theta_r(F_{Y,X})\\
&=\{\theta \in \Theta \ | \ \forall x \in supp(X) \ \text{s.t. } (x,\theta)\in O_k, \text{ for some } k\in [m], \text{ we have } y^Tp(x)\leq 0, \ \forall y\in I_k  \},
\end{aligned}
\]
where $p(x)=F_{Y|X=x}$ denotes the identified conditional choice probability vector at $X=x$.
\end{theorem}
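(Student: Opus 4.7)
The plan is to chain together the definition of $\Theta_r(F_{Y,X})$ with the polytope representation of Proposition \ref{prop1}, the duality arguments developed at the start of Section \ref{sectionDDCP}, the identification of undominated extreme points of the DDCPs with the MOLP solutions $\mathcal{I}_r(x,\theta)$, and finally the finiteness statement of Proposition \ref{propPatch}. Since all the heavy lifting has already been done by those results, the proof is essentially a matter of carefully stringing them together.

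First, I would unpack the definition: by \eqref{eqnS}, $\theta \in \Theta_r(F_{Y,X})$ if and only if $F_{Y|X=x} \in \mathcal{P}_r(x,\theta)$ for every $x \in \text{supp}(X)$. Next, I would invoke Proposition \ref{prop1} to rewrite $\mathcal{P}_r(x,\theta)$ as the intersection of the probability simplex $\{p : \mathbbm{1}^T p = 1\}$ with the DCP $\mathcal{C}_r(x,\theta) = \{p = A(x,\theta)q : q \geq 0,\ R_r(x,\theta)q = 0\}$. Since $p(x) = F_{Y|X=x}$ is automatically a probability vector, its membership in $\mathcal{P}_r(x,\theta)$ reduces to membership in $\mathcal{C}_r(x,\theta)$.

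The middle step is the duality argument. $\mathcal{C}_r(x,\theta)$ is a closed convex cone, so $\mathcal{C}_r^{**} = \mathcal{C}_r$, and by the derivation in Section \ref{sectionDDCP} its dual cone is $\{y : \exists z \text{ s.t. } A(x,\theta)^T y \leq R_r(x,\theta)^T z\}$. Since this dual cone is scale-invariant in $y$, intersecting it with $\{\|y\|_\infty \leq 1\}$ to obtain the DDCP preserves the entire family of inequalities $y^T p \leq 0$; because the DDCP is a polytope, it is enough to check these inequalities on its extreme points. Finally, because every $p \in \mathcal{C}_r(x,\theta)$ satisfies $p \geq 0$, an inequality coming from a dominated extreme point is implied by the inequality from a dominating one, so the undominated extreme points already give a complete set of restrictions. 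By construction of the MOLP \eqref{eqnMolp}, this set of undominated extreme points is precisely $\mathcal{I}_r(x,\theta)$. Therefore $p(x) \in \mathcal{C}_r(x,\theta)$ if and only if $y^T p(x) \leq 0$ for every $y \in \mathcal{I}_r(x,\theta)$.

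To close, I would apply Proposition \ref{propPatch}: since the map $(x,\theta) \mapsto \mathcal{I}_r(x,\theta)$ is constant on each cell $O_k$ with value $I_k$, the condition "$y^T p(x) \leq 0$ for all $y \in \mathcal{I}_r(x,\theta)$ and all $x \in \text{supp}(X)$" is equivalent to "for all $x \in \text{supp}(X)$, if $(x,\theta) \in O_k$, then $y^T p(x) \leq 0$ for all $y \in I_k$," which is exactly the description of $\Theta_r(F_{Y,X})$ in the statement. The main conceptual step to be careful about is the harmlessness of the $\|y\|_\infty \leq 1$ normalization and the dominance reduction in the third step; everything else is essentially bookkeeping on top of the already established propositions.
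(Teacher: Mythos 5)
Your proposal is correct and follows essentially the same route as the paper, which presents Theorem \ref{thm1} as a direct consequence of Proposition \ref{propPatch} together with the dual characterization of the DCPs developed in Section \ref{sectionDDCP} (the reduction from $\mathcal{P}_r(x,\theta)$ to the cone $\mathcal{C}$, the passage to the DDCP via ${\cal C}={\cal C}^{**}$ and the harmless $\|y\|_\infty\leq 1$ normalization, the restriction to undominated extreme points using $p\geq 0$, and their identification with the MOLP solutions ${\cal I}_r(x,\theta)$). The chaining you describe is exactly the intended argument.
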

When $D^T$ is small (say less than 20), some known algorithms, like Benson's algorithm (see \cite{HB1}), can solve the MOLPs in \ref{eqnMolp} in reasonable time. However, as all known algorithms to solve MOLPs have worst case computational complexity that is exponential in the size of inputs and outputs, all known methods are in general not computationally feasible when $D^T$ is large (say $D^T=100$). I have reported in Tables \ref{Table1} and \ref{Table2} the running time of Benson's algorithm to solve the MOLPs in \ref{eqnMolp} for few values of $D$ and $T=2$ \footnote{For a fixed value of $(x,\theta)$, the computations consist of: creating the $A$ and $R$ matrices from Section \ref{sectionDiscretization}, and solving \ref{eqnMolp} using either Benson or the cutting-plane algorithm, and then running the redundancy elimination algorithm discussed at the end of Section \ref{sectionCuttingPlane}. The reported times represent the average running time over 10  runs with randomly drawn values of $(x,\theta)$. All computations were done on single desktop machine with 32GB of memory and a 2.4 GHz 8-Core Intel Core i9 processor. }. As can be seen from the table, as $D$ increases, the algorithm quickly becomes impractical. This suggests that one may need to exploit the specific structure of the DCPs and DDCPs to create algorithms that are tailored to solving MOLPs that arise from discrete choice model, and that work for ``larger models" where Benson's algorithm becomes impractical. In Section \ref{sectionCuttingPlane}, I have made an attempt in that direction, but much work remains;  there,  I prove that the DDCPs are \textit{integral}, and I use this observation to develop a \textit{cutting plane procedure} that works well for models of moderate size. For a comparison, I have included the running time for this new procedure in Tables \ref{Table1} and \ref{Table2}.

\begin{center}
\begin{adjustbox}{max width = \linewidth}
\begin{threeparttable}

\caption{Running Time (in seconds)}

\vspace{-1cm}\setlength\tabcolsep{10.pt} 
\renewcommand{\arraystretch}{0.8} 

%
%
%

\begin{tabular}{|cc|c|c|c|c|c|c|}

\hline
\multicolumn{2}{|c|}{} & $T=2,\ D=2$ & $T=2,\ D=4$ & $T=2,\ D=6$ & $T=2,\ D=8$ & $T=3,\ D=3$ & $T=3,\ D=4$ \tabularnewline
\hline 

\textbf{Benson's Algorithm}& & 0.05 & 0.18 & * & * & * & * \tabularnewline
\textbf{Cutting-plane Algorithm} & & 0.07 & 0.26 & 1.02 & 13.15 & 3.39 & 439.22\tabularnewline
 \hline

\end{tabular}
\begin{tablenotes}
      \small
      \item Running time for model \ref{eqnstat} under Assumption \ref{assumptionS}. An asterisk (*)  is used to indicate a running time that exceeds 3600s (1hr).  
    \end{tablenotes}

\label{Table1}
\end{threeparttable}
\end{adjustbox}
\end{center}

\begin{remark}
It is important to note that if we are only interested in obtaining some (but not necessarily all) of the inequalities that characterize the identified set, then this can easily be done by solving the LPs $argmax\{w'y \ | \ y\in {\cal Q}\}$ for some arbitrary objective vectors $w>0$ (see Footnote \ref{foot13}), where ${\cal Q}$ is the DDCP. Indeed, the solutions to such LPs will be undominated extreme points of ${\cal Q}$. Thus, by randomly drawing vectors $w_k$, for $k=1,\cdots,K$ (for some large integer $K$, say $K=10^4$), from a distribution supported on $ {\mathbb{R}_{+}}^{D^T}$, and solving for $y_k=argmax\{w_k^Ty\ | \ y \in {\cal Q}\}$, the inequalities corresponding to the solutions $\{y_k\}_{k\in[K]}$ will represent a set of valid conditional moment restrictions that the model imposes on the elements of the corresponding DCP. However, this approach will likely generate redundant inequalities \footnote{ As there are only finitely many (undominated) extreme points, for large values of $K$, many of the $y_k's$ will coincide. Moreover, some of the $y_k's$ can be redundant in the sense that the corresponding inequalities will be implied by the inequalities corresponding to the $y_i's$ with $i\neq q$.}. In such cases, the redundancy elimination procedure described at the end of Section \ref{sectionCuttingPlane} can be used to generate a minimal set of equivalent inequalities. Remark \ref{remPPD3} below provides an example where I carry out this "probabilistic approach". Essentially, generating \textit{a} conditional moment inequality implied by the model is easy and computationally cheap, as it corresponds to solving for \textit{an} undominated extreme point of the DDCP, which can be framed as solving a LP. However solving for all the conditional moment inequalities that characterize the sharp identified set is difficult and computationally expensive, as it corresponds to solving for \textit{all} the undominated extreme points of the DDCP, which can be framed as solving MOLPs. 
\end{remark}
\begin{remark}
Under the exchangeability assumption, it is possible to simplify the representation of the DDCPs ${\cal Q}_E(x,\theta)$ in a way that makes the MOLPs\ref{eqnMolp} more computationally tractable. Indeed, note that by Farkas' lemma, the set $\{y \ | \ \exists z \ \text{s.t}\ A^T y\leq R_E z\}$ is equal to the set $\{y| \ \forall q\geq 0\  \text{s.t}\ R_Eq=0,\ \text{we have}\ q^TA^Ty\leq 0\}$ (see Theorem 1.1 in \cite{EB1}). Since $q\neq 0$, $q\geq 0$ and $R_Eq=0$ implies that $q$ is up to scale an exchangeable distribution, the condition $q^TA^Ty\leq 0$ for all $q\geq 0$ such that $R_Eq=0$, is equivalent to $q^TA^Ty\leq 0$ for all $q$ that are extreme points of the set of exchangeable distributions. Let $P_E$ be the matrix with each row corresponding to such an extreme point. I now discuss the construction of the matrix $P_E$. Let $F=\{f_i\}_{i=1}^{|F|}$ represent an enumeration\slash indexation of all the patches that occur in the discretization of ${\cal P}_E(x,\theta)$. For all sequences $1 \leq i_1 \leq i_2 \leq \cdots \leq i_T\leq |F|$, let the vector $q^{(i_1,\cdots,i_T)} \in \mathbb{R}^{|{\cal R}|}$ be defined by: For all $j_1,j_2,\cdots,j_T \in [|F|]$ and $R=f_{j_1}\times \cdots\times f_{j_T}\in {\cal R}$, let $q^{(i_1,\cdots,i_T)}_R=1$ iff there exists a permutation $\pi$ on $[D]$ such that $\pi(i_k)=j_k$ for all $k \in [D]$ (i.e., the vector $q^{(i_1,\cdots,i_T)}$ has an entry of one in the positions corresponding to regions $R\in {\cal R}$ whose patches are a permutation of the patches $f_{i_1},\cdots,f_{i_T}$). The matrix $P_E$ is the matrix with rows given by the vectors $q^{(i_1,\cdots,i_T)}$, with $1\leq i_1\leq i_2\leq \cdots\leq i_T\leq |F|$. It can be shown that the vectors $q^{(i_1,\cdots,i_T)}$ represent (up to scale normalization), the extreme points of all exchangeable probability distributions on the discrete set obtained by considering the T-fold product of the set $F$. When $T=2$, then the rows of $P_E$ are given by the vectors $q^{(i,i)}=\delta_{(i,i)}$ for $i\in [|F|]$, and $q^{(i,j)}=\delta_{(i,j)}+\delta_{(j,i)}$ for all $i<j$, $i,j\in [|F|]$. Here $\delta_{(i,j)}\in \mathbb{R}^{|F|^2}$ represents the vector with the component corresponding to the region $R=f_i\times f_j$ equal to 1, and all other entries equal to zero.
 From the foregoing and the definition of $P_E$, it then follows that the DDCPs ${\cal Q}_E=\{y | \ \exists z \ \text{s.t.} \ A^T y \leq R^T z, \ ||y||_{\infty}\leq 1\}$ have the equivalent representation
\begin{equation}
\label{eqnRep}
{\cal Q}_E=\{y | \ P_E A^T y\leq 0, \ ||y||_{\infty}\leq 1\}
\end{equation}
 and the latter characterization of ${\cal Q}_E$ no longer involves the auxiliary variable $z$. I have used this simplification in the implementation of both Benson and the cutting-plane algorithm that is presented in Table \ref{Table2}. A similar characterization can be given under the stationarity assumption, where the matrix $P_E$ is replaced with the matrix $P_S$ that contains all the extreme points of stationary discrete probabilities on the T-fold product of $F$. Such a characterization will however not be useful, as the matrix $P_S$ is too large for computational purposes \footnote{The number of extreme points of stationary probability distributions is exponential in $D$, for $T$ fixed; when $T=2$, $|F|\sim D^2$ and the number of extreme points of stationary probability distributions on $F\times F$ is equal to the number of cyclic permutations on $[|F|]$, which is larger than $F!$ (see the proof of Theorem \ref{thmIntE}).} 
\end{remark}

\begin{center}
\begin{adjustbox}{max width = \linewidth}
\begin{threeparttable}
\caption{Running Time (in seconds)}

\vspace{-1cm}\setlength\tabcolsep{20.pt} 
\renewcommand{\arraystretch}{0.8} 

\begin{tabular}{|cc|c|c|c|c|}

\hline
\multicolumn{2}{|c|}{} & $T=2,\ D=2$ & $T=2,\ D=4$ & $T=2,\ D=6$ & $T=2,\ D=8$ \tabularnewline
\hline 

\textbf{Benson's Algorithm}& & 0.02 & 0.38 & 48.47& * \tabularnewline
\textbf{Cutting-plane Algorithm} & & 0.14 & 0.97 & 35.52 & 3182.2\tabularnewline
 \hline
\end{tabular}

\begin{tablenotes}
      \small
      \item Running time for model \ref{eqnstat} under Assumption \ref{assumptionE}. An asterisk (*)  is used to indicate a running time that exceeds 3600s (1hr). 
    \end{tablenotes}

\label{Table2}
\end{threeparttable}
\end{adjustbox}
\end{center}

Before discussing the cutting plane procedure, I first present an impossibility result. In the following section, I show show that the method of this paper is not applicable, and that there is no "simple" characterization of the identified set, if the stochastic restriction in Assumption \ref{assumptionE} or \ref{assumptionS} is replaced by an alternative stochastic restriction under which the sets ${\cal P}(x,\theta)$ are not polytopes. I show that the latter is for instance the case if we assume that the random utility shocks are IID conditional on the fixed effect and the explanatory variables.
\subsubsection{An impossibility result}
\label{sectionImp}
In this section, I try to determine when it is possible to characterize the identified set for $\theta$ using finitely many ``implications" of the form
\begin{equation}
\label{eqnImplication}
\text{If } x \ \text{and }\theta \ \text{satisfy condition} \cdots, \ \text{then } \alpha^Tp(x)\leq \beta,
\end{equation}
where the $\alpha$'s are vectors, and $\beta$'s are scalars. I refer to such a characterization as a "simple characterization" of the identified set. Note that the characterization in \ref{eqnCM}, as well as the characterizations obtained in \cite{KPT} and \cite{PP1} are of this form. In Theorem \ref{thm2} below, I demonstrate that whether or not such a characterization is possible depends in part on the geometry of the sets ${\cal P}(x,\theta)$. Notably, if the sets ${\cal P}(x,\theta)$ are not polytopes, then the sharp identified set does not admit a simple characterization, and the computational approach described in this paper is not applicable. For example, if the exchangeability or stationarity restriction on the random utility shocks is replaced by the conditional IID stochastic restriction (defined below), the resulting sets ${\cal P}(x,\theta)$ are not polytopes (Proposition \ref{propImp} below), and the identified set does not admit a simple characterization.\par

\begin{assumption}[\textbf{Independence}]
\hfill
\label{assumptionI}
\begin{enumerate}[a)]
\item Conditional on $x_i$ and $\lambda_i$, for each $t\in[T]$, $\epsilon_{ti}$ $(\in \mathbb{R}^D)$ is continuously distributed. 
\item Conditional on $x_i$ and $\lambda_i$, the shocks $\epsilon_{ti} $ are independent and identically distributed (as $t$ varies in $[T]$).
\end{enumerate}
\end{assumption}
Under Assumption \ref{assumptionI}, we can define the analogue ${\cal P}_I(x,\theta)$ of the sets ${\cal P}_S(x,\theta)$ and ${\cal P}_E(x,\theta)$. That is, the set ${\cal P}_I(x,\theta)$ is the set of all CCPs that are consistent with the discrete choice model \ref{eqnstat} under the stochastic restriction \ref{assumptionI}, when the index parameter takes the value $\theta$ and the covariates take the values $X=x$. The following proposition shows that unlike the sets ${\cal P}_S(x,\theta)$ and ${\cal P}_E(x,\theta)$, the sets ${\cal P}_I(x,\theta)$ are not polytopes.
\begin{proposition}
\label{propImp}
The sets ${\cal P}_I(x,\theta)$ are closed and convex, but are not polytopes.
\end{proposition}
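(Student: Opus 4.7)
The plan is to establish the three claims (convexity, closedness, failure of polytope-ness) in separate steps, with the last one being the crux and delivered by an explicit computation in the $D=T=2$ case.

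\textbf{Convexity.} I would argue by a standard mixture construction. Suppose $p_1, p_2 \in {\cal P}_I(x,\theta)$ are generated by distributions $F_1, F_2$ of $(\lambda,\epsilon)$ satisfying Assumption \ref{assumptionI}, and fix $\alpha\in[0,1]$. Introduce an auxiliary Bernoulli selector $Z$ with $P(Z=1)=\alpha$, independent of all primitives, and define a new composite fixed effect $\tilde\lambda=(\lambda,Z)$ together with the shock process drawn from $F_Z$ conditional on $Z$. Conditional on $\tilde\lambda$, the shocks $\epsilon_{t}$ are still i.i.d.\ across $t$, because conditioning on $Z=k$ reduces to the conditionally i.i.d.\ structure of $F_k$. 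Absorbing the discrete coordinate $Z$ into the fixed effect is admissible since Assumption \ref{assumptionI} imposes no restrictions on $\lambda$. The resulting CCP vector equals $\alpha p_1+(1-\alpha)p_2$, establishing convexity.

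\textbf{Closedness.} I would parameterize ${\cal P}_I(x,\theta)$ by the distribution $G$ on the simplex $\Delta^{D-1}$ of conditional marginal choice probabilities $q(\lambda)=\bigl(P(Y_t=d\mid x,\lambda)\bigr)_{d\in[D]}$ (same for every $t$ by the i.i.d.\ assumption). Under Assumption \ref{assumptionI} the joint CCPs factor as
\[
p_{d_1,\ldots,d_T}=\int_{\Delta^{D-1}} q_{d_1}\cdots q_{d_T}\,dG(q),
\]
and the map $G\mapsto p$ is continuous in weak convergence. Using continuously distributed shocks with full support (e.g.\ standard normal), I would show that every $G\in {\cal P}(\Delta^{D-1})$ is realizable by some admissible $(\lambda,\epsilon)$: choose $\lambda$ so that $q(\lambda)$ has the prescribed distribution (first for $G$ supported on the interior by inverting the induced smooth map from $\lambda$ to $q$, then extend by weak approximation). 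Closedness of ${\cal P}_I(x,\theta)$ then follows from weak compactness of ${\cal P}(\Delta^{D-1})$ by Prokhorov.

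\textbf{Failure of being a polytope.} The key step is to exhibit a single instance where the boundary is strictly curved. Specialize to $D=T=2$ and any $(x,\theta)$. Writing $Q=P(Y_t=1\mid x,\lambda)\in[0,1]$ and $\mu=E[Q]$, $\nu=E[Q^2]$, the preceding factorization yields
\[
p_{11}=\nu,\qquad p_{10}=p_{01}=\mu-\nu,\qquad p_{00}=1-2\mu+\nu.
\]
As $Q$ ranges over all $[0,1]$-valued random variables, $(\mu,\nu)$ traces out exactly $\{(\mu,\nu): 0\le\mu\le1,\ \mu^2\le\nu\le\mu\}$, where the lower bound $\nu\ge\mu^2$ is Jensen's inequality and the upper bound $\nu\le\mu$ is $Q^2\le Q$. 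Both bounds are attained (degenerate $Q$ for $\nu=\mu^2$; Bernoulli $Q$ for $\nu=\mu$). Translating back, $p_{11}\ge (p_{11}+p_{10})^2$ is a strictly convex (quadratic) constraint, so the boundary of ${\cal P}_I(x,\theta)$ contains a parabolic arc with infinitely many extreme points. A polytope has only finitely many extreme points, so ${\cal P}_I(x,\theta)$ is not a polytope. The general $D,T$ case follows from this by embedding, since any bounding hyperplane of the full set restricts to one of the two-period binary slice when the remaining coordinates are frozen.

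The main obstacle is the closedness argument: identifying the image of the admissible $(\lambda,\epsilon)$ distributions with all of ${\cal P}(\Delta^{D-1})$ requires a density argument to reach boundary distributions of $Q$ (e.g.\ atoms at $0$ or $1$) using only continuously distributed shocks. Convexity and non-polytope-ness, by contrast, reduce to the direct constructions above.
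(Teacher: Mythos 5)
Your convexity step is fine and matches what the paper implicitly uses (mixing over the fixed effect). But the rest of the argument rests on a claim that is false for generic $(x,\theta)$: that under Assumption \ref{assumptionI} the conditional marginal choice probabilities $P(Y_t=d\mid x,\lambda)$ are the same for every $t$. The shocks are i.i.d.\ across $t$ given $(x,\lambda)$, but the index $v_t=x_t'\theta$ is not constant in $t$, so $P(Y_1=1\mid x,\lambda)=F_\lambda(v_1-\lambda)$ and $P(Y_2=1\mid x,\lambda)=F_\lambda(v_2-\lambda)$ differ whenever $v_1\neq v_2$. Consequently the factorization $p_{d_1\cdots d_T}=\int q_{d_1}\cdots q_{d_T}\,dG(q)$, the identity $p_{10}=p_{01}=\mu-\nu$, and the reduction to the moment body $\{\mu^2\le\nu\le\mu\}$ are only valid on the degenerate slice $x_1'\theta=\cdots=x_T'\theta$. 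For $v_1>v_2$ (the case the paper actually treats), the correct parameterization is by the joint law of the \emph{pair} $(r_\lambda,s_\lambda)=\bigl(F_\lambda(v_1-\lambda),F_\lambda(v_2-\lambda)\bigr)$ constrained to $s_\lambda\le r_\lambda$, and ${\cal P}_I(x,\theta)$ is the convex hull of the two-parameter surface $\{((1-r)(1-s),(1-r)s,r(1-s),rs):0\le s\le r\le 1\}$, a three-dimensional body in which $p_{10}\neq p_{01}$ generically. Your closedness argument inherits the same defect; once the parameterization is corrected, either your Prokhorov route or the paper's Carath\'eodory-plus-compactness route goes through.

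The gap matters most for non-polytopeness. Even granting that the symmetric family (your parabolic arc) is realizable inside ${\cal P}_I(x,\theta)$ when $v_1>v_2$, a convex set containing a non-polytopal subset can still be a polytope: you must show the curved family consists of \emph{extreme points of the full set}, not merely of the slice $\{p_{01}=p_{10}\}$. That requires an extra argument -- for instance, exhibiting for each diagonal point a linear functional $a p_{00}+b p_{01}+c p_{10}+d p_{11}$ whose restriction to the generating surface is a bilinear function uniquely maximized at an interior point of the edge $r=s$, or, as the paper does, computing the support function $\mu_{{\cal P}}(w)=\max\{w'p:p\in{\cal S}\}$ on an open set of directions and showing it is differentiable there with non-constant (non-linear) gradient, which is impossible for a polytope. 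Your closing ``embedding'' sentence for general $D,T$ is likewise only a heuristic (the paper is equally brief on this point, so I do not weight it heavily), but the two-period binary case itself is not yet proved as written.
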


Theorem \ref{thm2} below shows that when the sets ${\cal P}(x,\theta)$ are not polytopes, the sharp identified set for $\theta$ does not admit a simple characterization as in Theorem \ref{thm1}. Formally, the sharp identified set admits a simple characterization if for some positive integer $M$ there exists a finite collection of subsets $\{S_k\}_{k\in [M]}$ of ${\cal X}\times\Theta$ and a collection of "associated inequalities" $\{(\alpha_k,\beta_k)\}_{k\in[M]}$ such that $\cup_{k\in[M]}S_k={\cal X}\times \Theta$ (here the sets $S_k$ are allowed to have non-empty intersection, and can even be equal for different values of the subscript $k$),  and given any distribution of the observables $F_{Y,X}$, the identified set for the parameter $\theta$ is given by
\begin{equation}
\label{eqn_2}
\Theta(F_{Y,X})=\{\theta \in \Theta \ |\ \forall x\in supp(X) \ s.t. \ (x,\theta) \in S_k, \ \text{for some } k\in[M],\ \text{we have } \alpha_k^Tp(x)\leq \beta_k \}.
\end{equation}
Note that the representation \ref{eqn_2} essentially implies that for each $k\in[M]$ we have  
\begin{equation}
\label{eqn_1}
 \alpha_k^Tp\leq \beta_k\quad \forall \ p\in {\cal P}(x,\theta) \quad \text{such that } (x,\theta) \in S_k ,
\end{equation}
and the latter $M$ implications are sufficient to characterize the identified set.


\begin{remark}
\label{rem01}
The characterization of the identified set given in \cite{CM}, \cite{PP1} and \cite{KPT}, all have the form \ref{eqn_2}. From Equation \ref{eqnCM}, in the setting of \cite{CM} (for instance) the identified set can be characterized by ($M=$) 2 implications, where the sets $S_k=O_k\cup O_3$, for $k\in [2]$, with the sets $O_k$ defined as in Remark \ref{rem1},  and where the associated inequalities are given by $\alpha_1=(0,1,-1,0)'$, $\alpha_2=(0,-1,1,0)$, and $\beta_1=\beta_2=0$\footnote{As in Section \ref{sectionHeuristics}, given $\alpha=(a_1,a_2,a_3,a_4)'$ and a CCP vector $p$, $\alpha^Tp=a_1p_{00}+a_2p_{01}+a_3p_{10}+a_4p_{11}$ where $p_{dd'}=P(Y_1=d,Y_2=d'|x)$.}. The characterization of the identified set in Theorem \ref{thm1} is also of the form \ref{eqn_2}, where $M:=\sum_{j\in[m]}|I_j|$ (where $|I_j|$ is the number of elements in the set $I_j$), the sets $\{S_k\}_{k\in[M]}$ consist of $|I_j|$ copies of the set $O_j$ for each $j\in[m]$, all the $\beta_k$'s are equal to 0, and each $\alpha_k$ corresponds to an elements of $I_j$ if $S_k=O_j$.
\end{remark}

\begin{theorem}
\label{thm2}
Suppose that for some $\theta_0\in \Theta$ and  $x_0\in {\cal X}$, the set of CCPs that are consistent with the model at $x_0$ when the parameter value is $\theta_0$, ${\cal P}(x_0,\theta_0)$, is not a polytope. Then the sharp identified set for $\theta$ cannot be characterized by finitely many linear inequality restrictions as in \ref{eqn_2}.
\end{theorem}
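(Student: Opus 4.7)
The plan is to argue by contradiction: assume a simple characterization of the form \eqref{eqn_2} exists, and show that it forces the local model ${\cal P}(x_0,\theta_0)$ to be expressible as a bounded intersection of finitely many half-spaces, hence a polytope. The key trick is to ``localize'' the characterization at $x_0$ by choosing a distribution of observables whose marginal on $X$ is degenerate at $x_0$; this strips away all constraints coming from other values of $x$ and leaves exactly the inequalities indexed by $\{k : (x_0,\theta_0)\in S_k\}$.

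First, I would fix the finite collection $\{S_k\}_{k\in[M]}$ and $\{(\alpha_k,\beta_k)\}_{k\in[M]}$ provided by the hypothesized simple characterization, and let $K:=\{k\in[M]\ |\ (x_0,\theta_0)\in S_k\}$. Note that $K$ depends only on $(x_0,\theta_0)$, not on the observed data. Next, for an arbitrary probability vector $p\in \Delta:=\{p\in\mathbb{R}^{D^T}\ |\ p\geq 0,\ \mathbf{1}^T p=1\}$, I would construct the distribution $F^{p}_{Y,X}$ of observables under which $X=x_0$ almost surely and $F^{p}_{Y|X=x_0}=p$. Then $\mathrm{supp}(X)=\{x_0\}$ under $F^{p}_{Y,X}$, so the characterization \eqref{eqn_2} collapses to
\[
\theta_0\in \Theta(F^{p}_{Y,X}) \iff \alpha_k^{T}p\leq \beta_k \text{ for every } k\in K.
\]
On the other hand, by the definition of the identified set, $\theta_0\in \Theta(F^{p}_{Y,X})$ iff $p\in {\cal P}(x_0,\theta_0)$.

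Combining the two equivalences, letting $p$ range over $\Delta$, yields
\[
{\cal P}(x_0,\theta_0)=\Delta\cap\bigcap_{k\in K}\{p\in \mathbb{R}^{D^T}\ |\ \alpha_k^{T}p\leq \beta_k\},
\]
which is an intersection of finitely many closed half-spaces with the probability simplex, hence a polyhedron. Because ${\cal P}(x_0,\theta_0)\subseteq \Delta$ is contained in the bounded set $\Delta$, it is in fact a polytope, contradicting the hypothesis that ${\cal P}(x_0,\theta_0)$ is not a polytope. This completes the contradiction and proves the theorem.

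The main obstacle, and the reason the argument is short, lies in justifying the construction of $F^{p}_{Y,X}$ supported only at $x_0$ for every $p\in \Delta$. One needs $x_0\in {\cal X}$ to be an admissible covariate value (which is given, since $(x_0,\theta_0)\in {\cal X}\times\Theta$), and one needs the characterization \eqref{eqn_2} to apply to all data-generating processes with $x_0$ as a possible support point, including degenerate ones. A careful reading of the definition of a ``simple characterization'' makes this harmless: \eqref{eqn_2} is postulated to hold for every identified $F_{Y,X}$, in particular for those with singleton support. Everything else is routine: the two equivalences above are symmetric and the polytope conclusion is immediate once the defining inequality system is finite.
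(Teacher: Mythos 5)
Your proof is correct and follows essentially the same route as the paper's: both localize at $x_0$ via a data-generating process with $X$ degenerate at $x_0$, and both exploit the two-way equivalence between $\theta_0\in\Theta(F^{p}_{Y,X})$, the finitely many inequalities indexed by $\{k:(x_0,\theta_0)\in S_k\}$, and membership $p\in{\cal P}(x_0,\theta_0)$. The only cosmetic difference is that you state the conclusion as the set identity ${\cal P}(x_0,\theta_0)=\Delta\cap\bigcap_{k\in K}\{p:\alpha_k^Tp\leq\beta_k\}$, whereas the paper establishes the inclusion ${\cal P}(x_0,\theta_0)\subseteq{\cal Q}$ and then derives the contradiction from a point $\tilde p\in{\cal Q}\setminus{\cal P}(x_0,\theta_0)$.
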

 A direct corollary of Proposition \ref{propImp} and Theorem \ref{thm2} is that at least infinitely many linear inequality restrictions are needed to characterize the sharp identified set for $\theta$ in model \ref{eqnstat} under the (conditional) IID stochastic restriction \ref{assumptionI}, and this is the case even if the vector of covariates $X$ is degenerate and takes a single value $x_0$.\footnote{Note that although the identified set cannot be characterized by a finite number of \emph{linear} conditional moment inequality restrictions, it may still be possible to characterize it using a finite number of \emph{nonlinear} conditional moment inequalities. See \cite{DGK} for an example of a setting where the sets ${\cal P}$ are not polytopes, and where a characterization of the identified set is provided that uses a finite set of \emph{nonlinear} conditional moment inequality restrictions.}
\begin{remark} In the setting of Section \ref{sectionHeuristics}, under the alternative stochastic restriction \ref{assumptionI}, although a characterization of the identified set of the form \ref{eqn_2} is not possible, a simple extension of the approach of this paper can yield the following representation for the identified set: 
\[
\begin{split}
\Theta_I(F_{Y,X})= {}  \{& \theta \in \Theta \ | \ \forall x\in supp(X) \ s.t.  \ (x,\theta) \in O_k, \ \text{for } k\in[3],\ \text{we have } w^Tp(x)\leq \mu^{(k)}(w)\\
& \forall \ w \text{ s.t. } \|w\|_{\infty}= 1 \}
\end{split}
\]
where the sets $\{O_k\}_{k\in[3]}$ are as in Remark \ref{rem1}, and the functions $\mu^{(k)}$, which can be computed explicitly (see the proof of Proposition \ref{propImp} in Appendix A) represent the (common) support functions of the sets ${\cal P}_I(x,\theta)$ for values of $(x,\theta)$ in $O_k$. Hence the implications that characterize the identified set are now of the type:
\[
\text{If } x \ \text{and }\theta \ \text{are in } O_k, \ \text{then }w^Tp(x)\leq \mu^{(k)}(w)\ \forall \ w \text{ s.t. } \|w\|_{\infty}= 1.
\]
\end{remark}

\subsection{Cutting-plane algorithm}
\label{sectionCuttingPlane}
In this section, I propose a \textit{cutting-plane} algorithm, as an alternative to Benson's algorithm, to solve the MOLPs in \ref{eqnMolp}. The algorithm relies on an integrality result. In particular, I show in Theorem \ref{thmIntE} below that the extreme points of the DDCPs are integral (i.e., have integral coordinates), and all have equal and maximal rank (see Definition \ref{defRank}). The result is only established for the setting where $T=2$, but extensive computations suggest that the result is still valid for $T>2$. For comparison, the running time of the cutting-plane algorithm for few cases is presented in tables \ref{Table1} and \ref{Table2}. Note however, that whereas the cutting-plane algorithm that is introduced below is only formally justified to solve the MOLPs that arise in static models when $T=2$, Benson's algorithm is always valid in all settings (although it will not be practical whenever $D^T$ is large). \par
Before stating the main results of this section, I introduce the notion of rank.
\begin{definition}
\label{defRank}
The rank of a vector $y \in \mathbb{R}^n$ is defined as the sum of its components, i.e, 
\[
rank(y)=\mathbbm{1}^Ty.
\]
\end{definition}
The following theorem gives a partial characterization of the undominated extreme points of the DDCPs.
\begin{theorem}
\label{thmIntE}
Let $T=2$, $D\geq 2$, and consider the static model \ref{eqnstat} under either the exchangeability assumption \ref{assumptionE}, or the stationarity assumption \ref{assumptionS}. Then the DDCPs are integral, i.e, their extreme points are in $\{0,\pm 1\}^{D^T}$, and all of their undominated extreme points have the same rank, equal to the maximal rank.
\end{theorem}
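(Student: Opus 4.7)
The plan is to establish the two assertions---integrality of the extreme points and constancy of rank among undominated extreme points---in that order, so that the rank argument can exploit the combinatorial control gained from integrality. I will work primarily with the exchangeability representation from Equation \ref{eqnRep}, namely ${\cal Q}_E=\{y : P_E A^T y\leq 0,\ \|y\|_\infty\leq 1\}$, and handle the stationary case via its analogous description in terms of the extreme points (cyclic permutations on $[|F|]$) appearing in the rows of $P_S$.

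For integrality, unpacking $P_E A^T y\leq 0$ with $T=2$ produces two types of inequalities. First, from the diagonal extreme distributions $q^{(i,i)}=\delta_{(i,i)}$, one obtains $y_{(d^i_1, d^i_2)}\leq 0$ for every patch $f_i=(d^i_1,d^i_2)$. Second, from $q^{(i,j)}=\delta_{(i,j)}+\delta_{(j,i)}$ with $i<j$, one obtains $y_{(d^i_1,d^j_2)}+y_{(d^j_1,d^i_2)}\leq 0$. Each such inequality has at most two nonzero coefficients, both equal to $+1$, so that together with the box constraints the full constraint matrix has entries in $\{-1,0,1\}$. I plan to establish total unimodularity of this matrix via the Ghouila--Houri criterion, exhibiting for every subset of rows an explicit $\pm 1$ coloring whose column sums lie in $\{-1,0,1\}$; the structure of pairwise row-constraints suggests a bipartition argument based on the order relations between the indices $\Delta v_d$ defining the patches. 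Integrality of the vertices then follows from total unimodularity and the fact that the right-hand side is integral. For the stationary case I would keep the original formulation with the auxiliary variable $z$ (since writing out $P_S$ is exponential in $|F|$), and establish integrality of the projection onto the $y$-coordinates by a parallel argument adapted to the cyclic-permutation structure of the extreme stationary measures on $F\times F$.

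For the equal-rank claim, I argue by contradiction. Let $M=\max_{y\in{\cal Q}_0}\mathbbm{1}^T y$ and suppose $y^*$ is an undominated extreme point with $\mathbbm{1}^T y^*<M$. By integrality, $y^*\in\{-1,0,1\}^{D^2}$. Undominatedness implies that for every coordinate $i$ with $y^*_i<1$, any unilateral increase of $y^*_i$ must violate some binding constraint; this forces either a patch constraint $y_d\leq 0$ to be tight (so $y^*_d=0$) or a pairwise constraint $y_a+y_b\leq 0$ to be tight with the partner coordinate satisfying $y^*_b=-y^*_i$ and $y^*_b>-1$. I would then construct a feasible $y'\geq y^*$ with $y'\neq y^*$ by an alternating-path/swap argument on the "constraint graph" whose edges are tight pairwise constraints: starting from any coordinate with $y^*_i<1$ and following the tightness chain, one finds a sequence of coordinates along which a coordinated $\pm 1$ adjustment preserves feasibility while strictly increasing $\mathbbm{1}^T y^*$, contradicting undominatedness. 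The main obstacle will be making this combinatorial construction precise---showing that the required alternating structure must exist whenever $\mathbbm{1}^T y^*<M$---and extending it to the stationary case, where the pairwise constraints are replaced by cyclic constraints indexed by permutations, so that a local swap must be promoted to a circulation along a full cycle in $[|F|]$.
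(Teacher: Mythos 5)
Your integrality argument rests on total unimodularity of the constraint matrix, and this is where the proposal breaks down: that matrix is \emph{not} totally unimodular once $D\geq 3$. The pairwise constraints $y_{c,d'}+y_{c',d}\leq 0$ define a graph on the coordinates of $y$ in which an edge joins $(a,b)$ and $(a',b')$ whenever $(a,b')$ and $(a',b)$ are both patches, i.e. whenever $a\leq b'$ and $a'\leq b$. For $D=3$ the coordinates $(1,2)$, $(1,3)$, $(2,3)$ form a triangle (the witnessing patch pairs are $\{(1,3),(1,2)\}$, $\{(1,3),(2,3)\}$ and $\{(1,3),(2,2)\}$), so the coefficient matrix contains the vertex--edge incidence matrix of an odd cycle as a $3\times 3$ submatrix of determinant $\pm 2$, and the Ghouila--Houri criterion cannot be satisfied. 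The paper is explicit about this trap: the remark following its Lemma \ref{lemB1} exhibits precisely the triangle matrix with right-hand side $(1,1,1)'$ and the fractional vertex $(1/2,1/2,1/2)'$. Integrality of ${\cal Q}_E$ is therefore not a TU phenomenon at all; it holds because the two-ones-per-row constraints are \emph{homogeneous} ($\leq 0$) while only the box constraints carry nonzero integral right-hand sides. The paper's Lemma \ref{lemB1} isolates exactly this structure and proves integrality by the rank lemma together with a graph argument (propagating integrality from the coordinates pinned by tight box constraints along tight pairwise constraints, and showing the remaining coordinates must be zero). You would need to replace your TU step with an argument of this kind.

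Two further gaps. First, the stationary case is not obtainable by a ``parallel argument'' on the primal description: the extreme stationary measures on $F\times F$ are indexed by cyclic permutations, so the analogue of $P_S$ has super-exponentially many rows and the resulting constraints are no longer two-ones-per-row. The paper instead proves integrality of ${\cal Q}_S$ by showing the defining system is totally dual integral --- it rewrites $\max\{w^Ty : y\in{\cal Q}_S\}$ via LP duality, identifies the dual with a minimum-cost circulation problem on an auxiliary bipartite network (Lemmas \ref{lems1} and \ref{lems2}, using the Flow Decomposition Theorem), and invokes the Integral Circulation Theorem. Nothing in your sketch engages with the auxiliary variable $z$ or supplies a mechanism of comparable force. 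Second, your equal-rank argument is an alternating-path sketch whose central combinatorial step you yourself flag as unresolved; in the exchangeable case the paper gets the conclusion much more directly, by showing every undominated extreme point satisfies $y_{c,d}=-y_{d,c}$ coordinate-by-coordinate from the explicit inequalities, which forces all undominated extreme points to have rank $0$, the maximal rank.
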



\begin{remark}
\label{rem4}
 Extensive simulations seem to suggest that the conclusion of Theorem \ref{thmIntE} also holds when $T\geq 2$. When $T=2$, I prove the results by linking the investigation of extreme points of the DDCPs to the investigation of flows on networks, and the techniques developed in the literature on network flows, to study integral flows, can be used to obtain the results stated in Theorem \ref{thmIntE}. This approach (study of DDCPs through network flows) does not seem to generalize to settings where $T>2$. 
\end{remark}
As the cutting plane algorithm relies on the integrality of the DDCP, although our results are not applicable when $T\geq 3$, the following characterization of integrality can be used to assess through a numerical simulation whether or not the DDCP is integral. The result is due to \cite{EG1} (see also \cite{AS}, pg. 74).
\begin{proposition}
\label{propEG}
A (rational) polyhedron $P \subseteq \mathbb{R}^n$ is integral if and only if for each $c\in \mathbb{Z}^n$ the value of $\max\{c^Tx| x\in P\}$ is an integer if it is finite.\footnote{\label{foot14}Here a polyhedron $P:=\{x| \ Ax\leq b\}$ is rational if the matrix $A$ and the vector $b$ have rational components. The DCPs and the DDCPs are rational as the matrices and vectors that appear in their definitions are integral (i.e, the entries are integers). Moreover, for the DDCPs, the condition that the value of $\max\{c^Tx| x\in P\}$ is finite holds, as the DDCPs are compact sets.}
\end{proposition}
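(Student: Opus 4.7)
The plan is to prove the two directions of the equivalence separately, using the vertex representation of rational polyhedra together with an integer perturbation argument for the converse.

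For the forward direction, suppose $P$ is integral, so every vertex of $P$ lies in $\mathbb{Z}^n$. Given any $c\in\mathbb{Z}^n$ with finite LP maximum, standard LP theory guarantees that the optimum is attained at some vertex $v^{*}$ of $P$. Since both $c$ and $v^{*}$ are integer vectors, $c^{T}v^{*}\in\mathbb{Z}$. If $P$ has no vertices (i.e., contains a line), one invokes the decomposition $P=Q+C$, where $Q$ is a polytope with integral vertices and $C$ is the (rational) recession cone; finite LP maxima are attained at vertices of $Q$, which remain integral by hypothesis.

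For the reverse direction, I would argue by contrapositive. Suppose $P$ has a vertex $v^{*}\notin\mathbb{Z}^n$; the goal is to exhibit an integer cost vector $c$ for which $\max\{c^{T}x:x\in P\}=c^{T}v^{*}\notin\mathbb{Z}$. Let
\[
N(v^{*}):=\{c\in\mathbb{R}^n\mid c^{T}v^{*}\geq c^{T}x\text{ for all }x\in P\}
\]
denote the normal cone at $v^{*}$. Because $v^{*}$ is a vertex, $N(v^{*})$ is a full-dimensional polyhedral cone, so its interior is a nonempty open cone and therefore contains an integer point $c_0$. Fix any coordinate $j$ with $v^{*}_j\notin\mathbb{Z}$. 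For every sufficiently large positive integer $N$, the vectors $c_1:=Nc_0$ and $c_2:=Nc_0+e_j$ both lie in $\operatorname{int}N(v^{*})$ (the former by positive scaling; the latter because $c_0+e_j/N\to c_0$ lies in the interior for $N$ large, and the interior is preserved under positive scaling). Both are integer vectors, and each attains its LP maximum at $v^{*}$, with values $Nc_0^{T}v^{*}$ and $Nc_0^{T}v^{*}+v^{*}_j$ respectively. Since their difference $v^{*}_j$ is non-integer, at most one of the two optima can be an integer, contradicting the hypothesis.

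The main technical obstacle is the converse direction, specifically producing two integer cost vectors in the interior of $N(v^{*})$ whose inner products with $v^{*}$ differ by a non-integer. This step hinges on the full-dimensionality of the normal cone at a vertex of a rational polyhedron together with the fact that an open cone in $\mathbb{R}^n$ contains integer points (obtained by scaling a rational interior direction). Once these ingredients are in place, the remainder reduces to a short LP duality/vertex calculation.
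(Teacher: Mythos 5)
Your proof is correct. Note first that the paper does not actually prove this proposition: it is stated as a known result and attributed to Edmonds and Giles (\cite{EG1}, see also \cite{AS}), so there is no in-paper argument to compare against. What you have written is essentially the standard textbook proof of that cited theorem. The forward direction (integral vertices plus integer cost implies integer optimal value, attained at a vertex) is routine, and your handling of the line-containing case via the Minkowski--Weyl decomposition is acceptable as a sketch; in the paper's application the polyhedra in question are the DDCPs, which are compact, so only the pointed case is ever used. The reverse direction is the substantive part, and your argument is sound: the normal cone at a vertex $v^{*}$ is full-dimensional, its interior is an open cone and hence contains an integer vector $c_0$ (take a rational interior point and clear denominators), and the perturbation pair $Nc_0$ and $Nc_0+e_j$ for large $N$ both select $v^{*}$ uniquely while their optimal values differ by the non-integer $v^{*}_j$, so at least one integer objective has a non-integer optimum. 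One small caveat: your contrapositive establishes that every \emph{vertex} is integral, which matches the paper's working definition of integrality (extreme points are integral) but is slightly weaker than the Edmonds--Giles notion that every minimal face contains an integral point; for polyhedra with a lineality space the vertex-based statement is vacuous. Since the paper only ever applies the proposition to polytopes, this distinction is immaterial here.
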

\begin{remark}
The proposition implies that the integrality of a polytope ${\cal Q}$ (the DDCP in our case) can be assessed using the following numerical simulation: Let $\sigma>0$ be large (say $\sigma=100$), and independently draw a large number K (say $K=10^4$) of vectors $\{w_k\}_{k\in[K]}$, where $w_k\sim N(0,\sigma I_n)$, and let $c_k=\lfloor  w_k \rfloor$ (component-wise), i.e, $c_k$ is the largest integral vector that is less than $w_k$ component-wise. Compute $V_k$, the value of the LP $\max\{c_k^Tx |  \ x \in {\cal Q}\}$. If one of the values $\{V_k\}_{k\in[K]}$ is not an integer, then ${\cal Q}$ is not integral, and the cutting-plane procedure is not adequate (in this case ${\cal Q}$ has some non-integral extreme points). If on the other hand all of the $V_k's$ are integers , then it is suggestive evidence that  ${\cal Q}$ is integral, and that the cutting-plane procedure is likely valid.
\end{remark} 
I now describe the cutting-plane algorithm. Let the polytopes ${\cal Q}$ under consideration be the DDCPs, i.e., ${\cal Q}=\{y | \ \exists z \ \text{s.t.} \ A^T y \leq R^T z, \ ||y||_{\infty}\leq 1\}$, where $R\in\{R_S,R_E\}$. The cutting-plane algorithm that I propose works by iteratively locating integral points in ${\cal Q}$, say $\tilde{y}$, and by appending an inequality $\alpha^Ty \leq \beta$ to ${\cal Q}$, called a  \textit{cut}, such that $\tilde{y}$ does not satisfy the inequality, i.e, $\alpha^T\tilde{y}>\beta$, but any other integral point in ${\cal Q}$ does. That is the cut $\alpha^Ty \leq \beta$ excludes a unique integral point from ${\cal Q}$, the point $\tilde{y}$.  I iterate this procedure (i.e, iteratively append more cuts to the polytope) until there are no more integral points of maximal rank in ${\cal Q}$. In practice, although the number of integral vectors in $\{0,\pm 1\}^{D^T}$ of a given rank can be very large, the number of maximal rank integral vectors in ${\cal Q}$ is usually much smaller (for instance, less than 50 if $D=5$ and $T=2$), and the algorithm terminates in reasonable time for problems of moderate size\footnote{\label{foot15}Cutting-plane methods are a technique used to solve integer/mixed-integer linear programs (see \cite{AS} pg. 84). }.\par
To implement the cutting plane algorithm, consider the auxiliary polytope ${\cal Q}'$ defined by
\[
\{w=(u,v)|\ u, v\in \mathbb{R}^{D^T}, \ u,v\geq 0, \ u+v\leq \mathbbm{1},\ \exists z \  \text{s.t} \ A^Tu-A^Tv\leq R^T z\}.
\]
For $y\in \mathbb{R}^{D^T}$, let $y^+$ and $y^-$, both in $\mathbb{R}^{D^T}$, denote respectively, the positive and negative part of $y$, i.e., $y^+=max\{y,0\}$ (component-wise) and $y^{-}=max\{-y,0\}$. It can be shown that for $y\in {\cal Q}$, $(y^+,y^-)\in {\cal Q}'$, and that if $y$ is integral then so is $(y^+,y^-)$. Moreover for each $w=(u,v) \in {\cal Q}'$, the vector $y=u-v$ belongs to ${\cal Q}$, and $y$ is integral if $w$ is integral. Thus if $r:=max\{\mathbbm{1}^Ty | \ y\in {\cal Q}\}$ is the maximal rank in ${\cal Q}$, we have:
\[
r=max\{\mathbbm{1}^Ty | \ y\in {\cal Q}, \text{and y is integral}\}=max\{\mathbbm{1}^Tu-\mathbbm{1}^Tv | \ (u,v) \in {\cal Q}' \text{and (u,v) is integral}\}.
\]
where, for the first equality, we have used  the fact that there exist integral maximizers of the rank function in ${\cal Q}$ (the undominated extreme points of ${\cal Q}$), which follows from Theorem \ref{thmIntE}. From the foregoing, it follows that we can recover all integral points of ${\cal Q}$ of maximum rank from all the integral points of ${\cal Q}'$ that maximize the function $r'$ on ${\cal Q}'$, where $r'$ is defined by 
\[
r'(u,v):=\mathbbm{1}^Tu-\mathbbm{1}^Tv.
\]
If $(u,v)\in {\cal Q}'$ is an integral point that maximizes $r'$, then $y=u-v$ is an integral point in ${\cal Q}$ of maximum rank; and if $y\in {\cal Q}$ is an integral point of maximum rank, then $(y^+,y^-)$ is in ${\cal Q}'$ and maximizes $r'$. Thus to recover all integral extreme points of ${\cal Q}$ of maximum rank, it suffices to recover all integral points of ${\cal Q}'$ that maximize the function $r'$. Now, since integral points in ${\cal Q}'$ are binary vectors (i.e, all entries are in $\{0,1\}$), if $\tilde{w}$ is an integral vector in ${\cal Q}'$ that maximizes the function $r'$, then the inequality 
\begin{equation}
\label{eqnCut}
(\mathbbm{1}-\tilde{w})^T(\mathbbm{1}-w)+\tilde{w}^Tw \leq 2 D^T-1
\end{equation}
is satisfied by all other integral points in $w \in {\cal Q}'$, but not satisfied by $\tilde{w}$; inequalities of this type will be used as our cuts.\par
All of the integral points that maximize $r'$ on ${\cal Q}'$ can be recovered as follows: First find an integral point $y_1\in {\cal Q}$ of maximal rank  \footnote{\label{foot16}Using the simplex algorithm for instance will return a corner solution (i.e, extreme point solution) which will be integral by Theorem \ref{thmIntE}.}, and let $w_1=(y_1^+,y_1^-)$. Then iteratively solve, for $k\geq2$, the following mixed-integer linear program
\[
\begin{aligned}
w_k=argintmax {} & \{0\ | \ w\in {\cal Q}', (\mathbbm{1}-w_s)^T(\mathbbm{1}-w)+w_s^Tw \leq(2 D^T-1), \\
& \text{for all } s\in [k-1], \text{ and}\  r'(w)=r\}
\end{aligned}
\]

until infeasibility \footnote{\label{foot17}Although ${\cal Q}$ is integral, ${\cal Q}'$ is not necessarily integral. Moreover, the cuts that we append to ${\cal Q}'$ may introduce non-integral corners. That is why a mixed-integer linear program is used from the second step onward, which is unfortunately computationally costly. Here the program is a mixed-integer linear program, since $w$ is required to be integral, but not necessarily $z$ (see the definition of ${\cal Q}'$).}. Let $M+1$ be the last iteration when the program terminates (i.e., the first iteration when the mixed-integer linear program becomes infeasible), and for each $k\in [M]$ define $y_k=u_k-v_k$, where $w_k=(u_k,v_k)$. Then the set $\{y_k\}_{k\in[M]}$ represents the set of all the integral points in ${\cal Q}$ of maximum rank.\par

\subsubsection{Redundancy elimination}
Since the algorithm recovers all the integral points in ${\cal Q}$ of maximum rank, and not just the extreme points, some of the vectors $y_k$ may be redundant, in the sense that the inequality $y_k^Tp\leq 0$ is implied by $y_s^Tp\leq 0$ for all $s\in [M]\backslash\{k\}$.  To get rid of the redundant inequalities, and thus obtain a more compact summary of the information in ${\cal Q}$, we use Farkas' Lemma, which yields that $y_t$ is redundant if and only if there exists $\lambda\geq 0$, $\lambda \in \mathbb{R}^{M-1}$ such that 
\[y_k\leq Y_{-k}\lambda,
\]
 where $Y_{-k}$ is the matrix whose columns are given by the vectors $y_s$, $s\in [M]\backslash\{k\}$. Hence, a simple algorithm to get rid of redundancies goes as follows: Let $L_1:=\{y_k\}_{k\in [M]}$ be the initial list; starting from $k=1$ until $k=M$, remove the vector $y_k$ from the list, and set $L_{k+1}=L_k \backslash\{y_k\}$, whenever 
\[
0=min\{ 0 | \ y_k \leq L_{k,-k} \lambda, \ \lambda\geq 0 \}\  
\]
where $L_{k,-k}$ is the matrix whose columns are all the vectors in the list $L_k$ except $y_k$; let $L_{k+1}=L_k$ otherwise. It is not difficult to show that no vector in the final list $L_{M+1}$ is redundant, and every vector in $L_1\backslash L_{M+1}$ is redundant given the vectors in $L_{M+1}$. Thus the vectors in $L_{M+1}$ are a compact summary of the information in $L_1$, and the list $L_{M+1}$ is the final output of the cutting-plane procedure.

\subsection{Examples of applications of the algorithm in static settings}
\label{sectionExample}
In this section, I implement the foregoing procedure on two examples. The first example considers the model in \cite{PP1} (where $T=2$) with $D=4$ alternatives. Here I show that the algorithm recovers the inequalities derived in \cite{PP1}. In the second example, I consider the same setting of \cite{PP1} (with $D=4$ alternatives), where I replace their assumption of stationarity with the exchangeability assumption, and  I obtain some new inequalities. I then proceed to prove these inequalities analytically, and through that process, I guess their generalization to a setting with an arbitrary number $D$ of alternatives. When $D^T$ is large, it is recommended to use the latter approach, as the method discussed in this paper may be impractical. In such cases, one can first uses the algorithm to generate the inequalities that characterize the identified set for small values of $D$. Once one obtains these inequalities, one can then try to prove them analytically. From the proof one can then guess and prove their generalization to a setting with arbitrarily large $D$. 
\begin{example}[\textbf{\cite{PP1}}]
\label{exampPPS}
This example considers the setting of \cite{PP1}, i.e., model \ref{eqnstat} under Assumption \ref{assumptionS}, with $T=2$. I consider a setting with 4 alternatives (D=4). Let $v_1=(v_{11},\cdots,v_{41})=(0,0,0,0)$ and $v_2=(v_{12},\cdots,v_{42})=(4,3,2,1)$ represent fixed values of the deterministic component of the indirect utilities for period 1 and 2 respectively (see the second paragraph of Section \ref{sectionDiscretization}). Running our algorithm with $v_1$ and $v_2$ as inputs \footnote{\label{foot18}This involves creating the A and $R_S$ matrices, and then running the cutting-plane algorithm of Section \ref{sectionCuttingPlane} (or running Benson's algorithm to solve the corresponding MOLP) on the DDCPs (constructed using the aforementioned matrices A and $R_S$), and then running the redundancy elimination algorithm described at the end of Section \ref{sectionCuttingPlane}.}, yields the following list of vectors (one vector per row):
\[
\bordermatrix{& p_{11}&p_{12}&p_{13}&p_{14}&p_{21}&p_{22}&p_{23}&p_{24}&p_{31}&p_{32}&p_{33}&p_{34}&p_{41}&p_{42}&p_{43}&p_{44}\cr
&0&1&1&1&-1&0&0&0&-1&0&0&0&-1&0&0&0\cr
&0&0&1&1&0&0&1&1&-1&-1&0&0&-1&-1&0&0\cr
&0&0&0&1&0&0&0&1&0&0&0&1&-1&-1&-1&0
}\qquad
\]
The first row corresponds to the inequality
\[
p_{12}+p_{13}+p_{14} \leq p_{21}+p_{31}+p_{41}
\]
and adding $p_{11}$ to both sides yields the inequality
\[
P(y_1=1|x) \leq P(y_2=1|x).
\]
The second row corresponds to the inequality
\[
p_{13}+p_{14}+p_{23}+p_{24}\leq p_{31}+p_{32}+p_{41}+p_{42}
\]
and adding $p_{11}+p_{12}+p_{21}+p_{22}$ to both sides yields 
\[
P(y_1\in\{1,2\}|x)\leq P(y_2 \in \{1,2\}|x).
\]
The third row corresponds to the inequality
\[
p_{14}+p_{24}+p_{34}\leq p_{41}+p_{42}+p_{43} 
\]
and adding $p_{44}$ to both sides yields the inequality 
\[
P(y_1\in \{1,2,3\}|x)\leq P(y_2\in \{1,2,3\}|x).
\]
The inequalities that we have derived are completely determined by the matrices $A$ and $R_S$, which in turn are completely determined by the rankings of the index function differences ${\Delta v_{d}}{d\in [4]}$, where $\Delta v_d:=v{d2}-v_{d1}$ (see Remark \ref{rem3}). Thus, these inequalities are always valid whenever the index function differences have the same ranking as in our example. Specifically, the inequalities are valid whenever $\Delta v_1>\Delta v_2>\Delta v_3> \Delta v_4$ holds (as is the case for $v_1$ and $v_2$ with the values given above), or equivalently, whenever $\Delta \theta^Tx_1>\Delta\theta^Tx_2>\Delta\theta^Tx_3> \Delta\theta^Tx_4$ holds, where $\Delta \theta^Tx_d:= \theta^Tx_{d2}-\theta^Tx_{d1}$. Furthermore, as we can freely relabel the alternatives, we have shown that
\begin{equation}
\label{eqnPP}
P(y_1 \in U_{i}|x)\leq P(y_2 \in U_i|x)
\end{equation}
for $i\in[3]$, where $U_i$ represents the set of indices with i largest values of index function differences. Equivalently, in the context of Theorem \ref{thm1}, we have shown that the inequalities in \ref{eqnPP} are valid for all CCP vector at $(x,\theta)$, whenever $(x,\theta)\in O$, where the set $O$ is the subset of ${\cal X}\times \Theta$ defined by
 \[
O=\{(x,\theta) \in {\cal X}\times \Theta \ | \ \theta^Tx_{d2}-\theta^Tx_{d1} \neq \theta^Tx_{d'2}-\theta^Tx_{d'1}, \ \forall d\neq d', \ d,d'\in[4] \}.
\]
As our algorithm recovers all the restrictions that the model puts on the set ${\cal P}_S(x,\theta)$, the inequalities in \ref{eqnPP} represent all the restrictions that the model puts on the observable CCPs, whenever the index function differences are distinct, and $D=4$. The same conclusion (when restricted to $D=4$) was derived in \cite{PP1} by analytical means.
\begin{remark}
\label{rem5}
As can be seen from Theorem \ref{thm1}, the inequalities that characterize the identified set only need to be computed for a finite number of cases (equal to m in Theorem \ref{thm1}). In the setting of the preceding example ($D=4$ and $T=2$), only 8 cases need to be considered (i.e., we only need to solve 8 MOLPs). Each such case corresponds to one way in which ties can arise among the ranked index function differences. In particular, assume w.l.o.g. that $v_1=(0,0,0,0)$ in all cases. \footnote{\label{foot19}Since we are conditioning on $x$, and there is no location normalization on the shocks, the period 1 deterministic component of the indirect utilities can be absorbed into the shocks, making $v_1=0$, and the period 2 deterministic components of the utilities can be redefined to be $v_2=\Delta v$.} Case 1 can be the one above, i.e.,  $v_2=(4,3,2,1)$; for case 2 use $v_2=(4,3,2,2)$; for case 3 use $v_2=(4,3,3,3)$; for case 4 use $v_2=(4,3,3,3)$; for case 5 use $v_2=(4,4,3,2)$; for case 6 use $v_2=(4,4,3,3)$; for case 7 use $v_2=(4,4,4,3)$; and for case 8 use $v_2=(4,4,4,4)$. Here case 1 corresponds to the case where there is no tie in the index function differences across the 4 alternatives, and case 2 through 8 are all possible ways that the ties among the ordered index function differences can arise. Running the algorithm in these 8 cases and arguing by symmetry as above (i.e., we can freely relabel the alternatives) will yield all the inequalities restrictions for any possible values of $v_1$ and $v_2$.\footnote{It is important to note that this step can be parallelized, and all 8 MOLPs can be solved independently and simultaneously. Also, since we can also freely relabel the time periods, the inequalities for case 5 can be obtained from those of case 2, and the inequalities for case 7 can be obtained from those of case 4. Thus we really only need to solve 6 MOLPs.}\par
 One of the drawbacks of the procedure is that the identifying inequalities that the algorithm generates are case specific, i.e, the conclusion does not generalize to other models. For instance, after running the procedure on a model with $D=4$ alternatives (and considering all 8 cases mentioned above), we have to re-run the procedure again if we now want to include an extra alternative in our model (i.e., $D=5$), in order to generate the identifying inequalities for a setting with $D=5$. If one wants to generalize from the inequalities obtained for $D=4$ to other values of $D$, an intermediate step is required where one will need to prove analytically the inequalities generated by the algorithm for $D=4$,  and in the process guess their generalizations. I do this in the next example.

\end{remark}

\end{example}
\begin{example}[\textbf{\cite{PP1} with exchangeability}]
\label{exampPPE}
The setting here is the same as that of the preceding example, with the only difference being that I replace the stationarity assumption considered by \cite{PP1} with the exchangeability assumption. Running the algorithm with the same values of $v_1$ and $v_2$ as in Example \ref{exampPPS} yields the following 13 inequalities:
\begin{enumerate}
\item $p_{12}+p_{13}+p_{14}+p_{23}+p_{24}+p_{34}\leq p_{21}+p_{31}+p_{32}+p_{41}+p_{42}+p_{43}$
\item $p_{12}+p_{13}+p_{14}+p_{23}+p_{24}\leq p_{21}+p_{31}+p_{32}+p_{41}+p_{42}$
\item $p_{12}+p_{13}+p_{14}+p_{24}+p_{34}\leq p_{21}+p_{31}+p_{41}+p_{42}+p_{43}$
\item $p_{12}+p_{13}+p_{14}+p_{24}\leq p_{21}+p_{31}+p_{41}+p_{42}$
\item $p_{12}+p_{13}+p_{14}\leq p_{21}+p_{31}+p_{41}$
\item $p_{13}+p_{14}+p_{23}+p_{24}+p_{34}\leq p_{31}+p_{32}+p_{41}+p_{42}+p_{43}$
\item $p_{13}+p_{14}+p_{23}+p_{24}\leq p_{31}+p_{32}+p_{41}+p_{42}$
\item $p_{13}+p_{14}+p_{24}+p_{34}\leq p_{31}+p_{41}+p_{42}+p_{43}$
\item $p_{13}+p_{14}+p_{24}\leq p_{31}+p_{41}+p_{42}$ 
\item $p_{13}+p_{14}\leq p_{31}+p_{41}$
\item $p_{14}+p_{24}+p_{34}\leq p_{41}+p_{42}+p_{43}$ 
\item $p_{14}+p_{24}\leq p_{41}+p_{42}$
\item $p_{14}\leq p_{41}$.
\end{enumerate}
By construction these inequalities represent the set of all restrictions that the model places on the observable CCPs if the index function differences satisfy the relations $\Delta v_1>\Delta v_2>\Delta v_3>\Delta v_4$. Moreover, each inequality here is non-redundant, in the sense that each inequality is not implied by the others \footnote{\label{foot21}See the redundancy elimination algorithm at the end of Section \ref{sectionCuttingPlane}}. Inequalities 5,7 and 11 are \cite{PP1} inequalities (there are the same inequalities that were obtained in Example \ref{exampPPS}), and the remaining inequalities are of a new kind. Thus, as expected, the stronger assumption of exchangeability translates into more restrictions on the observable CCPs than when stationarity is assumed. An attempt to prove these inequalities analytically suggests their natural generalization to an arbitrary number $D$ of alternatives, and I provide this generalization in Theorem \ref{thmE} below. The inequalities in Theorem \ref{thmE} can then be used in settings with large $D$, where our algorithm is not computationally feasible.
\end{example}
Let $D\geq 2$, and let $(x,\theta) \in {\cal X}\times \Theta$ be fixed. For $d\in[D]$ and $t\in[2]$, let $v_{dt}:=\theta^Tx_{dt}$ and $\Delta v_d:=v_{d2}-v_{d1}$. Let $d^{(i)}$, with $i\in[D]$, represent the alternative with the $i^{th}$ largest value of $\Delta v_d$: i.e., $\Delta v_{d^{(1)}}\geq \Delta v_{d^{(2)}}\geq \cdots\geq \Delta v_{d^{(D)}}$ (with an arbitrary ranking among ties). For $i,j \in [D]$, let the subsets of alternatives $U_i$ and $L_{j}$ be defined by $U_{i}:=\{d^{(1)},\cdots,d^{(i)}\}$ and $L_{j}=\{d^{(j)},\cdots, d^{(D)}\}$ ($U_i$ denotes the set of all the $i$ alternatives with index function differences of largest rank, and $L_{j}$ denotes the set of the $D-j+1$ alternatives with index function difference of lowest rank). Finally, let the family of sets ${\cal A}(x,\theta)$ (subsets of $[D]\times[D]$) be defined by 
\begin{equation}
\label{eqnA}
{\cal A}(x,\theta):=\left\{\bigcup_{i=1}^m U_{k_i}\times L_{k'_i}|\ m\in \mathbb{N}, \ k_i,k'_i \in [D] \right\}=\left\{\bigcup_{i=1}^m U_{k_i}\times L_{k'_i}|\ 1\leq m\leq D, \ k_i,k'_i \in [D]\right\}
\end{equation}
where the second equality follows since it can be shown that an arbitrary union of sets of the type $U_{k_i}\times L_{k'_i}$ can always be written as a union of up to $D$ sets of the same type\footnote{If we assume w.l.o.g. that $\Delta v_1\geq \Delta v_2\geq \cdots\geq \Delta v_D$ (which can be accomplished by relabeling the alternatives), then ${\cal A}$ can be identified with the non-decreasing sequences on $[D]$ of length at most $D$: $\forall A\in {\cal A}$, $\exists m \in[D]$ and a non-decreasing sequence $1\leq i_1\leq \cdots\leq i_m\leq D$, such that $A=\{(d,d') \in [m]\times[D]\ | \ 1\leq d\leq m \ \text{and}\ d'\geq i_d\}$ (this can be seen by plotting the sets $A$ on a D-by-D grid). Moreover, every set of the latter type is in ${\cal A}$, and we get $|{\cal A}|={2D \choose D}-1\sim 4^D/\sqrt{D}$.}. Given a set $A\in {\cal A}(x,\theta)$, with $A=\bigcup_{i=1}^m U_{k_i}\times L_{k'_i}$, define the associated set $per(A)$ by
\begin{equation}
\label{eqnPer}
per(A):=\bigcup_{i=1}^m L_{k'_i}\times U_{k_i}.
\end{equation}
\footnote{It can be easily shown that $per(A)$ is well defined and does not depend on the particular representation of $A$.}The following theorem generalizes the inequalities produced by our algorithm.

\begin{theorem}
\label{thmE}
Let $D\geq 2$ and suppose that Assumption \ref{assumptionE} holds. Let $\theta$ be the true value of the index parameter. Then for all $x\in supp(X)$ and for all $A\in {\cal A}(x,\theta)$ we have:
 
\begin{equation}
P((Y_1,Y_2)\in A|x)\leq P((Y_1,Y_2)\in per(A)|x)
\end{equation}
where ${\cal A}(x,\theta)$ is as defined in Equation \ref{eqnA}.

\end{theorem}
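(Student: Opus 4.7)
\textbf{Proof proposal for Theorem \ref{thmE}.} The plan is to exhibit, pathwise, a containment of events that under the exchangeability assumption yields the desired inequality. Let $\zeta=(\zeta_1,\zeta_2)$ denote the (conditionally) exchangeable composite errors, and for $t\in[2]$ define the selection maps $Y_t^{*}(\xi):=\arg\max_{d\in[D]}\{v_{dt}+\xi_d\}$, so that $(Y_1,Y_2)=(Y_1^{*}(\zeta_1),Y_2^{*}(\zeta_2))$. Since the shock vector is continuously distributed, these maps are well defined up to a null set.

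The key combinatorial ingredient is a one-line revealed-preference identity: for any $\xi\in\mathbb{R}^D$, if $Y_1^{*}(\xi)=a$ and $Y_2^{*}(\xi)=b$, then adding the two optimality inequalities $v_{a1}+\xi_a\geq v_{b1}+\xi_b$ and $v_{b2}+\xi_b\geq v_{a2}+\xi_a$ yields $\Delta v_a\leq \Delta v_b$. Applying this with $\xi=\zeta_1$ gives $\Delta v_{Y_1^{*}(\zeta_1)}\leq \Delta v_{Y_2^{*}(\zeta_1)}$, and with $\xi=\zeta_2$ gives $\Delta v_{Y_1^{*}(\zeta_2)}\leq \Delta v_{Y_2^{*}(\zeta_2)}$. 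Since membership in $U_k$ is exactly the condition $\Delta v_d\geq \Delta v_{d^{(k)}}$ and membership in $L_{k'}$ is the condition $\Delta v_d\leq \Delta v_{d^{(k')}}$, these monotonicities translate into the pathwise implications
\[
\{Y_1=d\in U_k\}\ \Longrightarrow\ Y_2^{*}(\zeta_1)\in U_k,\qquad
\{Y_2=d'\in L_{k'}\}\ \Longrightarrow\ Y_1^{*}(\zeta_2)\in L_{k'}.
\]

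Combining these on each rectangle $U_{k_i}\times L_{k'_i}$ gives the event inclusion
\[
\bigl\{(Y_1^{*}(\zeta_1),Y_2^{*}(\zeta_2))\in A\bigr\}\ \subseteq\ \bigl\{(Y_1^{*}(\zeta_2),Y_2^{*}(\zeta_1))\in per(A)\bigr\},
\]
since if $(Y_1,Y_2)$ lies in the $i$-th rectangle of $A$, the swapped pair $(Y_1^{*}(\zeta_2),Y_2^{*}(\zeta_1))$ lies in $L_{k'_i}\times U_{k_i}$, which is the $i$-th rectangle of $per(A)$. Exchangeability of $(\zeta_1,\zeta_2)$ conditional on $X=x$ (together with the absorption of the fixed effects into the shocks, Remark \ref{rem2}) implies $(Y_1^{*}(\zeta_2),Y_2^{*}(\zeta_1))\stackrel{d}{=}(Y_1^{*}(\zeta_1),Y_2^{*}(\zeta_2))=(Y_1,Y_2)$ given $X=x$, so taking conditional probabilities on both sides of the above inclusion delivers $P((Y_1,Y_2)\in A\mid x)\leq P((Y_1,Y_2)\in per(A)\mid x)$, which is the claim.

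The only delicate point is showing that $per(A)$ is well defined independently of the chosen representation $A=\bigcup_i U_{k_i}\times L_{k'_i}$, since different unions may describe the same set $A$. This is essentially a bookkeeping lemma: writing each $A\in{\cal A}(x,\theta)$ in the canonical staircase form associated with a non-decreasing sequence on $[D]$ (as outlined in the footnote following \eqref{eqnA}) yields a unique expression, and the map $U_k\times L_{k'}\mapsto L_{k'}\times U_k$ extends in an order-preserving way to this canonical form. I expect this to be the main (though modest) obstacle; the rest of the argument is an elementary coupling/inclusion combined with exchangeability.
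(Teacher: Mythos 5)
Your argument is, in substance, the paper's own proof: the two pathwise implications you derive are exactly the set inclusions $\bigcup_{d\leq k_i}\varepsilon_{d,1}\subseteq\bigcup_{d\leq k_i}\varepsilon_{d,2}$ and $\bigcup_{d'\geq k'_i}\varepsilon_{d',2}\subseteq\bigcup_{d'\geq k'_i}\varepsilon_{d',1}$ that the paper establishes (there via the max-representation of these unions and the ordering of the $v_{d2}$'s, rather than via your revealed-preference identity), and the final step of swapping the coordinates of $(\zeta_1,\zeta_2)$ by exchangeability is identical. Deferring the well-definedness of $per(A)$ is also consistent with the paper, which only asserts it in a footnote.

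One step as written does not survive ties, which the theorem explicitly allows (the ranking $d^{(1)},\dots,d^{(D)}$ is defined with an arbitrary tie-break). Your claim that membership in $U_k$ is \emph{exactly} the condition $\Delta v_d\geq\Delta v_{d^{(k)}}$ is false when $\Delta v_{d^{(k)}}=\Delta v_{d^{(k+1)}}$: then $d^{(k+1)}\notin U_k$ yet satisfies the inequality, so from the weak conclusion $\Delta v_{Y_2^{*}(\zeta_1)}\geq\Delta v_{Y_1^{*}(\zeta_1)}\geq\Delta v_{d^{(k)}}$ you cannot conclude $Y_2^{*}(\zeta_1)\in U_k$ (and dually for $L_{k'}$). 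The repair is one line: the model's choices are determined by \emph{strict} inequalities, so whenever $Y_1^{*}(\xi)=a\neq b=Y_2^{*}(\xi)$ the revealed-preference sum gives $\Delta v_a<\Delta v_b$ strictly. Hence either $Y_2^{*}(\zeta_1)=Y_1^{*}(\zeta_1)\in U_k$, or $\Delta v_{Y_2^{*}(\zeta_1)}>\Delta v_{d^{(k)}}\geq\Delta v_{d^{(j)}}$ for all $j\geq k$, which rules out $Y_2^{*}(\zeta_1)=d^{(j)}$ for every $j\geq k$ and so places it in $U_{k-1}\subseteq U_k$; the symmetric argument handles the $L_{k'}$ implication. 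With that patch the proof is complete and matches the paper's.
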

\begin{remark}
\label{rem6}  The inequalities of \cite{PP1} correspond to the sets $A=\bigcup_{i=1}^m U_{k_i}\times L_{k'_i}$ in ${\cal A}(x,\theta)$, with $m=1$ and $k'_1=1$. Under exchangeability, we get many more inequalities ($|{\cal A}|={2D \choose D}-1$). However, some of our inequalities may be redundant. This can be seen by noting that when $D=4$, even though  Theorem \ref{thmE} considers $|{\cal A}|={8 \choose 4}-1=69$ inequalities, our algorithm only reports 13 such inequalities.\footnote{ Recall that our algorithm deletes all redundant inequalities by construction (see the end of Section \ref{sectionCuttingPlane})} Although the procedure of this paper is impractical for large values of $D$, the set of inequalities derived in Theorem \ref{thmE} is likely to characterize the sharp identified set for all values of D.\footnote{This was verified computationally for up to 8 alternatives.}
\end{remark}
\begin{remark}
\label{rem8}
It is important to note that the inequalities of Theorem \ref{thmE}  (and more generally, the inequalities generated by our procedure) are useful even if one is interested in point identification; in that case, one can try to determine additional (and preferably minimal) assumptions that when combined with the derived inequalities restrictions, yield point identification, and the inequalities can then be used to construct maximum score like estimators of the parameter $\theta$ (see for instance \cite{CM}, \cite{KPT} and \cite{SSS}). 
\end{remark}

\section{Dynamic models}
\label{sectionDyn}
In this section, I show how the procedure that is discussed in Section \ref{sectionStatic} can be extended to dynamic settings. In particular consider the following simple dynamic multinomial choice model:\footnote{Note that the type of dynamics that we consider in this paper are "non-structural", as we are not necessarily assuming that the agents making the choices are forward-looking.} Agent $i$ chooses alternative $d\in[D]$ at time $t\in[T]$, which I represent by $Y_{ti}=d$, if and only if
\begin{equation}
\label{eqndyn}
  X_{dti}'\beta+\gamma \mathbbm{1}_{[Y_{(t-1)i}=d]}+\lambda_{di}+\epsilon_{dti} > X_{d'ti}'\beta+\gamma \mathbbm{1}_{[Y_{(t-1)i}=d']}+\lambda_{d'i}+\epsilon_{d'ti} \quad  \forall d' \in [D],\ d'\neq d.
\end{equation}
Here, the data consists of a random sample of observed choices and covariates of individuals across $T$ time periods, as well as $Y_0$, which represents the individuals' choices in period $0$ (i.e., the initial condition). In other words, we have access to a random sample of the variables $(Y,X)$, where $Y=(Y_0,Y_1,\cdots,Y_T)'\in [D]^{T+1}$ and $X=(X_1,\cdots,X_T)\in \mathbb{R}^{d_x\times DT}$ (see the notation following equation \ref{eqnstat}). Having a specification of the indirect utility that includes both fixed effects and lagged choices is motivated by the desire to distinguish the impact of unobserved heterogeneity from that of state dependence (see \cite{JJH1,JJH2, JJH3, JJH4}). 
The construction presented below easily extends to dynamic models that consider more than one lag and where the lag/state dependence parameters are alternative dependent (see Example \ref{exampDyn2} below for a setting with two lags). I chose to use a simple setting with one lag primarily for the sake of expositional simplicity \par
Following \cite{KPT}, I study the identification of the common parameters $\theta=(\beta,\gamma)$ While imposing only mild stochastic restrictions on the distribution of both observed and unobserved variables in our model. Specifically, I consider two types of restrictions: the first type imposes stationarity or exchangeability of the shocks conditional on the covariates and fixed effects, but not on the initial conditions (as assumed in assumptions \ref{assumptionS} and \ref{assumptionE}). The second, stronger type of restriction, which I formally define next, imposes stationarity or exchangeability of the shocks conditional on the initial condition, the covariates, and the fixed effects.
\begin{assumption}[\textbf{Conditional Stationarity}]
\hfill
\label{assumptionCS}
\begin{enumerate}[a)]
\item Conditional on $Y_{0i}$, $X_i$ and $\lambda_i$, for each $t\in[T]$, $\epsilon_{ti}$ is continuously distributed. 
\item Conditional on $Y_{0i}$, $X_i$ and $\lambda_i$, the shocks $\epsilon_{ti}$ have the same distribution:
\[
\epsilon_{ti}|Y_{0i},X_i,\lambda_i \sim \epsilon_{si}|Y_{0i},X_i,\lambda_i \ \ \ \forall \ \ s,t \in [T] 
\]
\end{enumerate}
\end{assumption}

\begin{assumption}[\textbf{Conditional Exchangeability}]
\hfill
\label{assumptionCE}
\begin{enumerate}[a)]
\item Conditional on $Y_{0i}$, $X_i$ and $\lambda_i$, for each $t\in[T]$, $\epsilon_{ti}$ is continuously distributed. 
\item Conditional on $Y_{0i}$, $X_i$ and $\lambda_i$, the joint distribution of the shocks is exchangeable:
\[
(\epsilon_{1i},\epsilon_{2i},\cdots,\epsilon_{Ti})|Y_{0i},X_i,\lambda_i\sim (\epsilon_{\pi(1)i},\epsilon_{\pi(2)i},\cdots,\epsilon_{\pi(T)i})|Y_{0i},X_i,\lambda_i
\]
for all permutations $\pi$ on the set $[T]$.
\end{enumerate}
\end{assumption}
\begin{remark}
In the dynamic setting, as in the static case, the approach I outline below can accommodate other stochastic restrictions, provided they can be represented by a finite number of linear inequality constraints once the model is 'locally discretized.' Furthermore, as mentioned in Remark \ref{rem2}, since there are no extra location restrictions on the shocks, we can assume without loss of generality that the fixed effects are degenerate and take the value of 0 in the assumptions of conditional stationarity and conditional exchangeability.
\end{remark}
I now discuss the construction of the DCPs and DDCPs. First, I will discuss their construction assuming stationarity or exchangeability conditional on the initial condition (Assumptions \ref{assumptionCS} and \ref{assumptionCE}). Then, I will discuss further below the analogous construction under the assumption of unconditional stationarity or exchangeability (Assumptions \ref{assumptionS} and \ref{assumptionE}). 
\subsection{Construction of DCP under conditional stochastic restrictions}
\label{sectionCond}
In this section, I discuss the construction of the DCP when the conditional (on the initial condition) stationarity or exchangeability assumption is maintained (i.e., assumptions \ref{assumptionCS} and \ref{assumptionCE}). Following the exposition in Section\ref{sectionDiscretization}, I first discuss the construction of the patches, and discuss how the patches can be used to discretize the space of shocks.  I then discuss the construction of the $A$ matrices and $R$ matrices that appear in the definition of the DCPs.\par
Fix a covariate value $x$, a parameter value $\theta=(\beta,\gamma)$ and a value $y_0\in [D]$ of the initial condition. For $t\in [T]$ and $d\in[D]$, let $v_{dt}:=x_{dt}'\theta$. For $i\in\{CE,CS\}$, let ${\cal P}_i(y_0,x,\theta)$ denote the set of all CCP vectors that are consistent with the model \ref{eqndyn} at the covariate value $x$, when the initial condition is $y_0$, the parameter value is $\theta$, and either the conditional stationarity restriction (CS) or conditional exchangeability restriction (CE) holds (see equations \ref{eqnDCPS1} and \ref{eqnDCPE1} for analogous definitions in the static setting). I define the patches in terms of the choices that an individual makes at each time period $t\in[T]$ and in each possible "state". Here, as we are considering a setting with one lag, the state in period t refers to the individual's choice in period $t-1$.\footnote{In a setting with L lags, the state in period t is determined by the choices in the past L periods. See Example \ref{exampDyn2} for a case with two lags.} At time 1, the initial condition determines the unique state, and for each $t\geq 2$, there are $D$ possible states, one for each of the possible choices in period $t-1$. For each $t\geq 2$, each $d\in [D]$ and each $s \in [D]$, let $\varepsilon_{d,t,s}$ denote the set of all possible realizations of the shocks that induce an individual to choose alternative $d$ at time $t$ when the state is $s$:
\begin{equation}
\label{eqnPartDyn1}
{\cal \varepsilon}_{d,t,s}:=\{\zeta \in \mathbb{R}^D | \ v_{dt}+\gamma \mathbbm{1}[d=s]+\zeta_{d}>v_{d't}+\gamma \mathbbm{1}[d'=s]+\zeta_{d'}\ \ \forall d'\neq d, \ d'\in[D]\}.
\end{equation}
When $t=1$,for each $d\in[D]$, let $\varepsilon_{d,1,y_0}$ represent the set of shocks that induce an individual to choose option $d$ at time 1 (given the state $y_0$ which is fixed by the initial condition):
\begin{equation}
\label{eqnPartDyn2}
{\cal \varepsilon}_{d,1,y_0}:=\{\zeta \in \mathbb{R}^D | \ v_{d1}+\gamma \mathbbm{1}[d=y_0]+\zeta_{d}>v_{d'1}+\gamma \mathbbm{1}[d'=y_0]+\zeta_{d'}\ \ \forall d'\neq d, \ d'\in[D]\}.
\end{equation}
I define a patch as a (non-empty) subset of realizations of the shocks that induces an agent to make a prescribed choice in each possible state and time period. More concretely, a tuple $f=(c,C)$, where $c\in[D]$ and $C$ is a (T-1)-by-D matrix with all of its entries in $[D]$, indexes a patch if and only if 
\begin{equation}
\label{eqnPatchDyn}
\varepsilon_{c,1,y_0} \cap (\cap_{t=2}^T\cap_{s=1}^D \varepsilon_{C(t-1,s),t,s})\neq \emptyset
\end{equation}
where $C(i,j)$ denotes the $(i,j)^{th}$ entry of the matrix $C$. Essentially, given a patch $f=(c,C)$, $c$ denotes the choice in period 1, in the unique state determined by the initial condition, and the entry in the $t^{th}$ row and $s^{th}$ column of $C$ represents the choice in period $t+1$ when the state is $s$. I denote the set of all possible patches by $\bf{F}$. Determining whether a pair $(c,C)$ (with $c\in[D]$ and $C\in [D]^{(T-1)\times D}$) represents a patch, can be achieved by checking for the feasibility of a linear program as in \ref{eqnPatch}. The set of patches partition $\mathbb{R}^D$ (up to the boundary of the patches). As in Section \ref{sectionDiscretization}, I use the patches to partition the space of joint (across all T time periods) shocks into rectangular regions of the type  $f_1\times \cdots \times f_T$ where $f_t \in F$ for each $t\in[T]$. I denote the set $|F|^T$-element set of all such regions by ${\cal R}$.\par
Once we obtain the set $F$ of patches, the matrices $A(y_0,x,\theta)$ and $R_i(y_0,x,\theta)$ used in the definition of the DCPs can be constructed as follows. The matrix $A$ is a $D^T$ by $|{\cal R}|$ matrix, with each row representing a ``choice sequence" and each column indexing a particular region of ${\cal R}$. Let $d=(d_1,\cdots,d_T)$ and $R=f_1\times \cdots \times f_T$, with each $d_t\in [D]$ and each $f_t=(c_t,C_t)\in F$. The entry of the $d^{th}$ row and $R^{th}$ column of $A$, is equal to 1 (and is equal to zero otherwise) if and only if for each time period $t\in[T]$, realizations of the shocks in the patch $f_t$, when the state is $d_{t-1}$ (with $d_0=y_0$), induce the agent to choose alternative $d_t$. That is, $A(d,R)=1$ if and only if 
\[
d_1=c_1 \quad \text{and}\quad d_t=C_t(t-1,d_{t-1}) \ \ \forall \ t\in[T]\backslash[1].
\]
The matrices $R_i(y_0,x,\theta)$, with $i\in\{CS,CE\}$ are easy to construct, and simply enforce the stochastic restrictions \ref{assumptionCS} and \ref{assumptionCE} on distributions defined on the discrete set ${\cal R}$. For instance, the matrix $R_{CS}$ simply enforces the restrictions: $\forall f \in F$ and $\forall i,j\in[T]$
\[
\sum_{\{R=f_1\times \cdots \times f_T \in {\cal R}\ | \ f_i=f\}} q_R=\sum_{\{R=f_1\times \cdots \times f_T \in {\cal R}\ | \ f_j=f\}} q_R.
\]
The proof of Proposition \ref{prop1} can easily be modified to show that for $i \in\{CS,CE\}$, we have
\[
{\cal P}_i(y_0,x,\theta)=\{p\in \mathbb{R}^{D^T}| \mathbbm{1}^Tp=1\}\cap \{p=A(y_0,x,\theta)q| \ q\in \mathbb{R}^{|{\cal R}|},\ R_i(y_0,x,\theta) q=0, \ q\geq 0\}.
\]
Moreover, in the current setting, Proposition \ref{propPatch} and Theorem \ref{thm1} can be adapted straightforwardly and remain valid.
\subsection{Construction of DCP under unconditional stochastic restrictions}
\label{sectionUncond}

In this section, I discuss how the construction of Section \ref{sectionCond} needs to be modified if the stationarity or exchangeability restriction that we impose on the shocks is not made conditional on the initial condition (i.e., if the shocks satisfy assumptions \ref{assumptionS} or \ref{assumptionE}).\par
Fix a covariate value $x$ and a parameter value $\theta=(\beta,\gamma)$. For $t\in [T]$ and $d\in[D]$, let $v_{dt}:=x_{dt}'\theta$. For $i\in{UE,US}$, let ${\cal P}_i(x,\theta)$ denote the set of all possible CCP vectors that are consistent with model \ref{eqndyn} at the covariate value $x$ and the parameter value $\theta$, under either the unconditional stationarity (US) or the unconditional exchangeability (UE) restriction. I define the patches in terms of the choices made by an individual at each time period $t\in[T]$ and in each possible state. Here (again), the state in period $t$  simply refers to the individual's choice in period $t-1$. For each $t\in [T]$, there are $D$ possible states, one for each $s\in[D]$. For each $d\in [D]$, each $t\in [T]$, and each $s\in [D]$, let $\varepsilon_{d,t,s}$ denote the set of all possible realizations of the shocks that induce an individual to choose alternative $d$ at time $t$ when the state is $s$:
\begin{equation}
\label{eqnPartDyn3}
{\cal \varepsilon}_{d,t,s}:=\{\zeta \in \mathbb{R}^D | \ v_{dt}+\gamma \mathbbm{1}[d=s]+\zeta_{d}>v_{d't}+\gamma \mathbbm{1}[d'=s]+\zeta_{d'}\ \ \forall d'\neq d, \ d'\in[D]\}.
\end{equation}
Here, I define a patch as a non-empty set of values of the shocks that induces an agent to make a prescribed choice in each possible state and time period. Specifically, a T-by-D matrix $f$, with all of its entries in $[D]$, indexes a patch if and only if
\begin{equation}
\label{eqnPatchDyn2}
\cap_{t=1}^T\cap_{s=1}^D \varepsilon_{f(t,s),t,s}\neq \emptyset
\end{equation}
where $f(i,j)$ denotes the $(i,j)^{th}$ entry of the matrix $f$. Here, the entry in the $t^{th}$ row and $s^{th}$ column of $f$ represents the choice that an agent will make at time $t$ and in state $s$, if the realization of her shock in period $t$ belong to the patch f. I denote the set of all possible patches by $\bf{F}$. Determining whether a matrix $f\in [D]^{T\times D}$ indexes a patch can be done by checking for the feasibility of a linear program as in \ref{eqnPatch}. The set of patches partitions $\mathbb{R}^D$ (up to the boundary of the patches). I use the patches to partition $\mathbb{R}^{DT}$, the space of shocks across all alternatives and time periods,  into rectangular regions of the type  $f_1\times \cdots \times f_T$ where $f_t \in F$ for each $t\in[T]$. I denote the set $|F|^T$-element set of all such regions by ${\cal R}$. Let $\{R_1,R_2,\dots,R_{|{\cal R}|}\}$ denote a fixed indexation/enumeration of the elements of ${\cal R}$.\par
I now discuss the construction of the matrix $A$. For each $g\in[D]$, let $A_g$ be a $D^{T+1}$ by $|{\cal R}|$ matrix that encodes the choices made by individuals with the initial condition $Y_{0}=g$, in each region $R \in {\cal R}$. Each row of $A_g$ represents a  possible ``choice sequence" (i.e., an element of $[D]^T$) and a possible value of the initial condition (i.e., an element of $[D]$), and each column of $A_g$ represents a region in ${\cal R}$. Let $d=(d_0,d_1,\cdots,d_T)$ and $k\in[|{\cal R}|]$. Then the entry in the $d^{th}$ row and $k^{th}$ column of $A_g$ is equal to 1 (and is equal to zero otherwise) if and only if
\[
d_0=g \quad \text{and} \quad f_t(t,d_{t-1})=d_t, \ \text{for each } t\in[D],
\]
where $f_1,\cdots,f_T$ represent the patches that make up the region $R_k$, i.e., $R_k=f_1\times\cdots\times f_T$. 
The matrix $A(x,\theta)$ is obtained by concatenating the $A_g$ matrices 
\[
A=[A_1 \ A_2 \cdots \ A_D].
\]
I now discuss the construction of the matrices $R_i(x,\theta)$, with $i\in\{US,UE\}$, that enforce the stochastic restrictions \ref{assumptionS} and \ref{assumptionE}. For $g\in [D]$ and $R\in{\cal R}$, let $q_{g,R}$ represent the probability (conditional on the covariate value $x$) that an individual chooses alternative $g$ in period 0 (i.e., $Y_0=g$) and that her shocks are in region $R$:
\[
q_{g,R}=P(Y_0=g, (\zeta_1,\cdots,\zeta_T)\in R|x).
\]
Let $q\in \mathbb{R}^{D|{\cal R}|}$ (the same dimension as the number of columns of $A$) be the vector with $\left(k+(g-1)|{\cal R}|\right)^{th}$ entry (for $g\in[D]$ and $k\in |{\cal R}|$) given by $q_{g,R_k}$. Under the unconditional stationarity, the matrix $R_{US}(x,\theta)$ encodes the linear restrictions: For each $f\in F$ and each $i,j\in [T]$
\[
\sum_{g=1}^D \ \  \sum_{\{R=f_1\times \cdots \times f_T \in {\cal R}\ | \ f_i=f\}} q_{g,R}=\sum_{g=1}^D \ \ \sum_{\{R=f_1\times \cdots \times f_T \in {\cal R}\ | \ f_j=f\}} q_{g,R}.
\]
Similarly, under unconditional exchangeability, the matrix $R_{UE}(x,\theta)$  encodes the linear restrictions: For each region $R=f_1\times\cdots\times f_T$ and each permutation $\pi$ on $[T]$
\[
\sum_{g=1}^D q_{g,f_1\times\cdots\times f_T}=\sum_{g=1}^D q_{g,f_{\pi(1)}\times\cdots\times f_{\pi(T)}}.
\]
The proof of Proposition \ref{prop1} can be adapted to show that for $i\in\{US,UE\}$, we have

\[
{\cal P}_i(x,\theta)=\{p\in \mathbb{R}^{D^T}| \mathbbm{1}^Tp=1\}\cap \{p=A(x,\theta)q| \ q\in \mathbb{R}^{|{\cal R}|},\ R_i(x,\theta) q=0, \ q\geq 0\}.
\]
Moreover, it can easily be shown that the adaptation of Proposition \ref{propPatch} and Theorem \ref{thm1} to the current setting remain valid.
\subsection{Examples of applications of the algorithm in dynamic settings}
\label{sectionExampDyn}
In this section, I present three examples of dynamic settings where the algorithm described in the preceding sections is implemented. The first example examines the dynamic binary choice model with one lag studied in \cite{KPT}. There, I use to algorithm to recover the inequalities that were obtained in \cite{KPT} by analytical means. In the second example, I consider a two-period dynamic multinomial choice model with one lag. This model can be seen as a simultaneous extension of the settings of \cite{PP1} and \cite{KPT}.\footnote{It can be viewed as adding dynamics to the setting of \cite{PP1}, or it can alternatively be viewed as extending the setting of \cite{KPT} to more than two alternatives.} Using the algorithm, I generate the inequalities that characterize the identified set of the common parameter when there are four alternatives (D=4). These inequalities are then proven analytically, and I provide their generalizations to settings with any number of alternatives. The set of inequalities obtained is novel and subsumes the inequalities of \cite{PP1} when $\gamma=0$ (i.e., when there is no dynamic) as well as the inequalities of \cite{KPT} when $D=2$. Moreover, the inequalities that I obtain are complementary to those derived in \cite{PP2} under a stronger stochastic restriction. In the third example, a three-period dynamic binary choice model with two lags is examined. Here, some of the inequalities that the algorithm generates are natural extensions of those in the one-lag setting of \cite{KPT}, while the others are less intuitive, as evidenced by their intricate analytical proofs.

\begin{example}[\textbf{\cite{KPT}} ]
\label{exampDyn1}
In a two periods dynamic binary choice model, let agent's $i$ choice in each period $t\in[2]$ be modeled by
\begin{equation}
\label{eqnKPT}
Y_{ti}=\mathbbm{1}\{X_{it}'\beta+\gamma Y_{(t-1)i}-\lambda_i-\varepsilon_{ti}> 0\}
\end{equation}
and suppose that the random utility shocks satisfy the unconditional stationarity condition (Assumption \ref{assumptionS}). As discussed in Section \ref{sectionUncond}, it is without loss of generality to assume that the fixed effects are degenerate and take the value $0$. To implement the procedure, it is necessary to construct the set F of patches. In the current setting, I now provide a detailed discussion of the construction of these patches.\par
Let us assume a value of the covariates $x$ and a value of the parameter $\theta=(\beta,\gamma)$. For $t\in[2]$, set $v_t=x_t'\beta$. Here a patch $f$ is an element of $\{0,1\}^{2\times 2}$, where the entry $f(t,s)$ (with $t,s \in[2]$) is interpreted as the choice made by the individual at time $t$ when the state (i.e., the choice in period t-1) is given by $s-1$.  An matrix $f\in \{0,1\}^{2\times 2}$ represents a patch if and only if the following system of inequalities admits a solution $\zeta$:
\begin{equation}
\begin{aligned}
\label{eqnKPT1}
f(1,1)(\zeta-v_1)+(1-f(1,1))(v_1-\zeta) &< 0 \\
f(1,2)(\zeta-\gamma-v_1)+(1-f(1,2))(v_1+\gamma-\zeta) &<0\\
f(2,1)(\zeta-v_2)+(1-f(2,1))(v_2-\zeta) &<0\\
f(2,2)(\zeta-\gamma-v_2)+(1-f(2,2))(v_2+\gamma-\zeta) &<0.
\end{aligned}
\end{equation}
As the latter is a system of linear inequalities in a scalar variable $\zeta$, checking whether such a system admits a solution simply reduces to checking whether four intervals (one interval for each inequality in the system) have non-empty intersections, and can be framed as a LP:
\begin{equation}
\label{eqnPatchKPT}
\begin{array}{ll@{}ll}
\text{minimize}  & 0 &\\
&\text{subject to }  \zeta \in \mathbb{R} &\\
&   f(1,1)(\zeta-v_1)+(1-f(1,1))(v_1-\zeta) \leq \delta \\
&f(1,2)(\zeta-\gamma-v_1)+(1-f(1,2))(v_1+\gamma-\zeta) \leq \delta \\
&f(2,1)(\zeta-v_2)+(1-f(2,1))(v_2-\zeta) \leq \delta \\
&f(2,2)(\zeta-\gamma-v_2)+(1-f(2,2))(v_2+\gamma-\zeta) \leq \delta.
\end{array}
\end{equation}
Here, $\delta$ is a small nonnegative tolerance parameter (say $\delta=-10^{-4}$) that is used to replace the strict inequalities  in \ref{eqnKPT1} by weak inequalities, and the set of patches $F$ consists of all matrices $f\in \{0,1\}^{2\times 2}$ (there are 16 such matrices, and we thus only have to consider 16 LPs) for which the preceding LP has the value 0.  After obtaining the set $F$ of patches, we can construct the matrices $A$ and $R_S$ using the definitions provided in Section \ref{sectionUncond}.\par
Implementing the algorithm \footnote{This involves solving for the patches, constructing the matrices A and $R_S$, solving for the resulting MOLP in \ref{eqnMolp} using Benson's algorithm, and applying the redundancy elimination algorithm at the end of Section \ref{sectionCuttingPlane}. For the current example, this all takes less than one second of running time.} with values of $x$ and $\theta$  chosen such that $v_1=0$, $v_2=1$ and $\gamma=2$, \footnote{As in Footnote \ref{foot19}, it is without loss of generality to assume that $v_1=0$.} yields following set of vectors 

\[
\bordermatrix{& p_{000}&p_{001}&p_{010}&p_{011}&p_{100}&p_{101}&p_{110}&p_{111}\cr
&-1&0&-1&-1&0&1&-1&-1\cr
&0&-1&1&0&0&-1&0&-1\cr
&-1&-1&1&0&-1&-1&1&0
}\qquad
\]
where for $d_0,d_1,d_2 \in\{0,1\}$, $p_{d_0d_1d_2}$ represents the entry of the CCP vector equal to $P(Y_0=d_0, Y_1=d_1, Y_2=d_2|X=x)$.
The first row corresponds to the inequality
\[
p_{101}\leq p_{000}+p_{010}+p_{011}+p_{110}+p_{111}
\]
and adding $p_{001}$ to both sides yields the inequality
\begin{equation}
\label{eqnKPT2}
P(Y_1=0,Y_2=1|x) \leq 1- P(Y_0=1,Y_1=0|x).
\end{equation}
The second row corresponds to the inequality
\[
p_{010}\leq p_{001}+p_{101}+p_{111}
\]
and adding $p_{110}+p_{100}+p_{000}$ to both sides yields 
\begin{equation}
\label{eqnKPT3}
P(Y_2=0 | x)\leq1- P(Y_0=0, Y_1=1|x).
\end{equation}
The third row corresponds to the inequality
\[
p_{010}+p_{110}\leq p_{000}+p_{001}+p_{100}+p_{101}
\]
which can be rewritten as
\begin{equation}
\label{eqnKPT4}
P(Y_1=1,Y_2=0|x)\leq P(Y_1=0|x).
\end{equation}
From the system of inequalities \ref{eqnKPT1}, the set of patches $F$ that arise in the discretization is uniquely determined from the ordering of the quantities $v_1$ ,$v_2$, $v_1+\gamma$, and $v_2+\gamma$. Hence the inequalities that we have derived above represent the only restrictions that the model places on the CCP vectors whenever
\begin{equation}
\label{eqnKPT5}
v_1<v_2<v_1+\gamma<v_2+\gamma,
\end{equation}
as this is the ordering that holds for our chosen values of $v_1=0$, $v_2=1$ and $\gamma=2$. When Inequality \ref{eqnKPT5} holds, it was established in \cite{KPT} that the set of all the restrictions that the model places on the identified CCP vectors are given by inequalities \ref{eqnKPT2}--\ref{eqnKPT4} as well as:
\begin{equation}
\label{eqnKPT6}
P(Y_0=1,Y_1=1|x) \leq 1- P(Y_1=1,Y_2=0|x)
\end{equation}
and
\begin{equation}
\label{eqnKPT7}
P(Y_0=0,Y_1=1|x) \leq 1- P(Y_1=0,Y_2=0|x).
\end{equation}
As our algorithm generates a minimal set of inequalities that characterize the identified set, it can be shown that the latter two inequalities are redundant given \ref{eqnKPT2}--\ref{eqnKPT4}. Indeed, inequality \ref{eqnKPT6} follows from \ref{eqnKPT4} since $P(Y_1=0|x)\leq 1-P(Y_0=1,Y_1=1)$, and  inequality \ref{eqnKPT7} follows from \ref{eqnKPT4} since $P(Y_1=0,Y_2=0|x)\leq P(Y_2=0|x)$. To obtain all the implications that characterize the identified set we need to consider other possible orderings of the quantities in \ref{eqnKPT5}. The three main additional cases to consider correspond to the orderings: $v_1<v_1+\gamma<v_2<v_2+\gamma$, $v_1+\gamma<v_2+\gamma<v_1<v_2$ and $v_1+\gamma<v_1<v_2+\gamma<v_2$.\footnote{By relabeling the time periods if necessary, we can alway assume without loss of generality that $v_1\leq v_2$. Then, the latter three cases and \ref{eqnKPT5} represent all possible orderings of the quantities $v_1$, $v_2$, $v_1+\gamma$ and $v_2+\gamma$ that do not involve ties. There are then 6 possible additional orderings of the same quantities that involve ties. These are: $v_1=v_2<v_1+\gamma=v_2+\gamma$, $v_1+\gamma=v_2+\gamma<v_1=v_2$, $v_1=v_2=v_1+\gamma=v_2+\gamma$, $v_1=v_1+\gamma<v_2=v_2+\gamma$, $v_1<v_1+\gamma=v_2<v_2+\gamma$  and $v_1+\gamma<v_1=v_2+\gamma<v_2$. Hence a total of 10 cases (and we thus only need to solve 10 MOLPs) need to be considered to generate all the inequalities that characterize the identified set.} Repeating the above procedure on each of these additional cases should generate all the inequalities that were obtained in \cite{KPT} for $T=2$. 
Note that while \cite{KPT} showed that the conditional moment inequalities that characterize the sharp identified set of $\theta$ when $T=2$ also characterize the sharp identified set when $T>2$ (if we consider all such inequalities for all possible pairs of time periods), the algorithm presented in this paper can only be applied on a case-by-case basis. Specifically, running the algorithm for $T=2$ generates all the inequalities that characterize the identified set for $T=2$, but our approach does not provide information on whether these inequalities, when applied to all pairs of time periods, will characterize the identified set for $T>2$.

\end{example}
\begin{example}[\textbf{\cite{PP1}} with one lag ]
\label{exampDyn3}

In this example, I extend the model presented in \cite{PP1} to a dynamic setting. Specifically, I analyze model \ref{eqndyn} under the assumption of conditional stationarity \ref{assumptionCS}, where $T=2$ and $D=4$.\footnote{This example can also be viewed as an extension of the model of \cite{KPT} to a setting with more than two alternatives.} Following the procedure outlined in Section \ref{sectionCond}, I construct the matrices A and $R_S$ for the input values: $v_1=(0,\cdots,0)$, $v_2=(0,3,5,7)$, $\gamma=7$, and the initial condition $y_0=3$. Here, similar to Example \ref{exampPPS}, given a fixed covariate value $x$ and a fixed parameter value $\beta$, for $t\in[2]$ and $d\in[4]$, we have $v_{t}=(v_{1t},\cdots,v_{Dt})$, where $v_{dt}=x_{dt}'\beta$ represents the "index component" of utility for alternative $d$ at time $t$. Solving the resulting MOLP using Benson's algorithm, yields the following set of inequalities, where $p_{dd'}=P(Y_1=d,Y_2=d'|X=x,Y_0=3)$ (the whole procedure takes about 1800 seconds):
\begin{enumerate}
\item $p_{31}+p_{32}\leq p_{11}+p_{12}+p_{13}+p_{14}+p_{21}+p_{22}+p_{23}+p_{24}$
\item $p_{12}+p_{13}+p_{43}\leq p_{21}+p_{22}+p_{24}+p_{31}+p_{32}+p_{33}+p_{34}$
\item $p_{12}+p_{13}+p_{14}\leq p_{21}+p_{22}+p_{24}+p_{31}+p_{32}+p_{33}+p_{34}+p_{41}+p_{42}+p_{44}$
\item $p_{31}\leq p_{11}+p_{12}+p_{13}+p_{14}$
\item $p_{41}+p_{42}+p_{43}\leq p_{14}+p_{22}+p_{24}+p_{34}$
\item $p_{21}+p_{23}+p_{41}+p_{43}\leq p_{11}+p_{12}+p_{14}+p_{32}+p_{33}+p_{34}$
\item $p_{21}+p_{23}+p_{24}\leq p_{11}+p_{12}+p_{14}+p_{32}+p_{33}+p_{34}+p_{42}+p_{44}$
\item $p_{13}+p_{23}+p_{43}\leq p_{31}+p_{32}+p_{33}+p_{34}.$
\end{enumerate}
 With a little algebra, the latter inequalities can be rewritten as (all probabilities below are implicitly conditioned on $X=x$ and $Y_0=3$):

\begin{enumerate}
\item $P(Y_1=3,Y_2\in\{1,2\})\leq P(Y_1\in\{1,2\})$
\item $P(Y_1=1, Y_2=2)+P(Y_1\in \{1,2,4\},Y_2=3)\leq P(Y_1\in \{2,3\})$
\item $P(Y_1=1,Y_2\in\{2,3,4\})+P(Y_1\in \{2,4\},Y_2=3)\leq P(Y_1\in \{2,3,4\})$
\item $P(Y_1=3,Y_2=1)\leq P(Y_1=1)$
\item $P(Y_2\in\{1,3\})+P(Y_1\in\{1,3,4\},Y_2=2)\leq P(Y_1\in\{1,2,3\})$
\item $P(Y_1\in\{2,3,4\},Y_2=1)+P(Y_1\in\{1,2,4\},Y_2=3)\leq P(Y_1\in\{1,3\})$
\item $P(Y_1\in \{2,3,4\},Y_2=1)+P(Y_1\in \{1,2,4\},Y_2=3)+P(Y_1=2,Y_2=4)\leq P(Y_1\in \{1,3,4\})$
\item $P(Y_1\in \{1,2,4\},Y_2=3)\leq P(Y_1=3)$.
\end{enumerate}

These inequalities represent all of the restrictions that the model places on the CCP vectors in ${\cal P}_{CS}(y_0,x,\theta)$, when $y_0=3$ and $x$ and $\theta$ are such that $\gamma=7$, $v_1=(0,\cdots,0)$ and $v_2=(0,3,5,7)$. As in Example \ref{exampPPE}, I proceed to prove these inequalities analytically, and from the proof I guess and prove their general form (Theorem \ref{thmPPD1} below), which I now discuss.\par

Given a parameter value $\theta$, a covariate value $x$ and an initial condition $y_0\in [D]$ (with $D\geq2$) let the family of sets ${\cal A}(y_0,x,\theta)\subseteq 2^{[D]}$ be defined as follows: For each $d_1\in [D]$, let $\Delta(d,d_1,y_0,x,\theta)$ be defined by
\[
\Delta(d,d_1,y_0,x,\theta)=v_{d2}-v_{d1}+\gamma\mathbbm{1}\{d=d_1\}-\gamma\mathbbm{1}\{d=y_0\}
\]
where for $t\in[2]$, $v_{dt}:=x_{dt}'\beta$. The quantity $\Delta(d,d_1,y_0,x,\theta)$ represents the improvement (from period 1 to 2) of the deterministic component of the utility of alternative $d$, when the "state" in period 2 is given by $d_1$ (i.e., $d_1$ is the alternative chosen in period 1) and the state in period 1 is given by $y_0$ (i.e., $y_0$ is the initial condition). Note that if $\gamma=0$, then $\Delta(d,d_1,y_0,x,\theta)$ simply represents the index function differences, and does not depend on the initial condition $y_0$ nor on $d_1$. Let ${\cal L}(d_1,y_0,x,\theta)$ be the family of subsets of $[D]$, the "lower sets", defined by
\begin{equation}
\label{PPD1}
{\cal L}(d_1,y_0,x,\theta)=\{A\subset [D]\ | \ \emptyset \subsetneq A \subsetneq [D], \ \Delta(d',d_1,y_0,x,\theta)\leq \Delta(d'',d_1,y_0,x,\theta) \ \forall  d'\in A,  d''\in [D]\backslash A\}.
\end{equation}
For instance, if we have 3 alternatives, and the quantities $\Delta(d,d_1,y_0,x,\theta)$ satisfy the ordering $\Delta(2,d_1,y_0,x,\theta)<\Delta(1,d_1,y_0,x,\theta)=\Delta(3,d_1,y_0,x,\theta)$, then ${\cal L}(d_1,y_0,x,\theta)=\{ \{2\},\{1,2\},\{2,3\} \}$. The family ${\cal A}(y_0,x,\theta)$ is then defined by
\begin{equation}
\label{PPD2}
{\cal A}(y_0,x,\theta)=\bigcup_{d\in[D]}{\cal L}(d,y_0,x,\theta).
\end{equation}
With the foregoing definitions, the following theorem generalizes the eight inequalities generated by the algorithm to a setting with arbitrary $D$, arbitrary values of the initial condition $y_0$, the covariates $x$, and arbitrary values of the parameter $\theta$; its proof is provided in Section \ref{remDetails} of the Appendix.

\begin{theorem}
\label{thmPPD1}
Consider model \ref{eqndyn} in a two-period setting, and suppose that Assumption \ref{assumptionCS} holds. Let $\theta_0=(\beta_0,\gamma_0)$ denote the true parameter value. Then for all $(y_0,x)\in supp(Y_0,X)$ and for all $A\in {\cal A}(y_0,x,\theta_0)$, we have:
\begin{equation}
\label{eqnthmPPD}
\begin{split}
&P\left(\bigcup_{d \in [D]}\Biggl\{Y_1=d,Y_2\in \bigcup \Bigl\{B \ | \ B\subseteq A, B \in {\cal L}(d_1,y_0,x,\theta_0)\Bigl\}\Biggl\} \bigg|Y_0=y_0,X=x\right) \\
 &\quad \quad \quad \leq P(Y_1\in A|Y_0=y_0,X=x).
\end{split}
\end{equation}
\end{theorem}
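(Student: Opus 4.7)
The plan is to reinterpret the inequality as a statement about choice regions in the shock space, establish a geometric set-inclusion, and close with a single appeal to conditional stationarity. I fix $(y_0, x) \in supp(Y_0, X)$, condition throughout on the fixed effect $\lambda$ (which will be integrated out at the end), and write $\zeta_t := \lambda + \epsilon_t$. For $\zeta \in \mathbb{R}^D$ and state $s \in [D]$, let $Y_t^*(\zeta, s) := \arg\max_{d \in [D]}\{v_{dt} + \gamma\mathbbm{1}\{d = s\} + \zeta_d\}$ and $R_t(C, s) := \{\zeta : Y_t^*(\zeta, s) \in C\}$. Given $Y_0 = y_0$, the event $\{Y_1 = d,\, Y_2 \in B\}$ is exactly $\{\zeta_1 \in R_1(\{d\}, y_0)\} \cap \{\zeta_2 \in R_2(B, d)\}$.

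The core of the proof is the following set-inclusion lemma: for every $d \in [D]$, writing
\[
B_d := \bigcup\bigl\{B : B \subseteq A,\; B \in {\cal L}(d, y_0, x, \theta_0)\bigr\},
\]
one has $R_2(B_d, d) \subseteq R_1(A, y_0)$. I would first observe that $B_d$ is itself a lower set relative to the preorder induced by $\Delta(\cdot, d, y_0, x, \theta_0)$, since a union of lower sets sharing a common ordering is again a lower set. The inclusion is then proved by contradiction: suppose $\zeta \in R_2(B_d, d) \setminus R_1(A, y_0)$, and set $d^{\star} := Y_2^*(\zeta, d) \in B_d$ and $d^{\star\star} := Y_1^*(\zeta, y_0) \notin A \supseteq B_d$. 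Adding the winning inequality for $d^{\star}$ in period $2$ at state $d$ to the winning inequality for $d^{\star\star}$ in period $1$ at state $y_0$ causes the $\zeta$-terms to cancel, yielding $\Delta(d^{\star}, d, y_0, x, \theta_0) > \Delta(d^{\star\star}, d, y_0, x, \theta_0)$. This contradicts the lower-set property of $B_d$, which forces $\Delta(d^{\star}, d, y_0, x, \theta_0) \leq \Delta(d^{\star\star}, d, y_0, x, \theta_0)$ because $d^{\star} \in B_d$ and $d^{\star\star} \notin B_d$.

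With the lemma, the remainder is bookkeeping. Since the events $\{Y_1 = d\}$ are disjoint in $d$, the conditional probability on the left-hand side of \eqref{eqnthmPPD} given $(Y_0, X, \lambda)$ equals $\sum_{d \in [D]} P(\zeta_1 \in R_1(\{d\}, y_0),\, \zeta_2 \in R_2(B_d, d) \mid Y_0, X, \lambda)$. Replacing $R_2(B_d, d)$ by the superset $R_1(A, y_0)$ in each summand and using that the indicators $\mathbbm{1}\{\zeta_1 \in R_1(\{d\}, y_0)\}$ sum to $1$ bounds the sum above by $P(\zeta_2 \in R_1(A, y_0) \mid Y_0, X, \lambda)$. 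Conditional stationarity (Assumption \ref{assumptionCS}) then gives $\zeta_2 \stackrel{d}{=} \zeta_1$ given $(Y_0, X, \lambda)$, so this rewrites as $P(\zeta_1 \in R_1(A, y_0) \mid Y_0, X, \lambda) = P(Y_1 \in A \mid Y_0, X, \lambda)$; integrating $\lambda$ against its conditional distribution given $(Y_0, X)$ yields \eqref{eqnthmPPD}. The main obstacle is the set-inclusion lemma: the lower-set structure of $B_d$ is precisely what is needed to make a period-$2$ winner from $B_d$ (at state $d$) force a period-$1$ winner from $A$ (at state $y_0$), so the geometric content of the theorem lives entirely in that step, while the probabilistic part reduces to a one-line application of stationarity.
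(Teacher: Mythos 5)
Your proposal is correct and follows essentially the same route as the paper: the same decomposition into the disjoint events $\{Y_1=d,\,Y_2\in B_d\}$, the same observation that $B_d$ is itself a lower set, the same key inclusion of each such event into the region where the period-2 shock would produce a period-1 choice in $A$, and the same single application of conditional stationarity at the end. The only cosmetic difference is that you verify the set inclusion by contradiction (adding the two winning inequalities so the $\zeta$-terms cancel) whereas the paper establishes it by a direct chain of inclusions via shifting the utilities by $\Delta(d,d_1,y_0,x,\theta)$; both arguments rest on the identical lower-set property.
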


\begin{remark}
\label{remPPD1}
Note that for the input values we use in our algorithm to generate the 8 inequalities listed above (i.e., $D=4$, $y_0=3$, $\gamma=7$, $v_1=(0,0,0,0)$ and $v_2=(0,3,5,7)$), the inequalities in Theorem \ref{thmPPD1} correspond exactly to these 8 inequalities. Indeed, for these inputs, when $d_1=1$, the quantities $\Delta(d,d_1,y_0,x,\theta)$ are given by:
\[
\Delta(1,1,y_0,x,\theta)=7 \quad \Delta(2,1,y_0,x,\theta)=3 \quad \Delta(3,1,y_0,x,\theta)=-2 \quad \Delta(4,1,y_0,x,\theta)=7
\] 
leading to the ordering
\[
\Delta(3,1,y_0,x,\theta)<\Delta(2,1,y_0,x,\theta)<\Delta(1,1,y_0,x,\theta)=\Delta(4,1,y_0,x,\theta),
\]
and the resulting set ${\cal L}(1,y_0,x,\theta)$ is given by 
\[
{\cal L}(1,y_0,x,\theta)=\{\{3\},\{2,3\},\{1,2,3\},\{2,3,4\}\}.
\]
Similar computations show that the sets ${\cal L}(2,y_0,x,\theta)$, ${\cal L}(3,y_0,x,\theta)$ and ${\cal L}(4,y_0,x,\theta)$ are given by

\begin{align*}
{\cal L}(2,y_0,x,\theta)&=\{\{3\},\{1,3\},\{1,3,4\} \}\\
{\cal L}(3,y_0,x,\theta)&=\{\{1\},\{1,2\},\{1,2,3\} \}\\
{\cal L}(4,y_0,x,\theta)&=\{\{3\},\{1,3\},\{1,2,3\} \}.
\end{align*}
Hence the set ${\cal A}(y_0,x,\theta)$ is given by
\[
{\cal A}(y_0,x,\theta)=\{  \{1\}, \{3\}, \{1,2\}, \{1,3\}, \{2,3\}, \{1,2,3\}, \{1,3,4\}, \{2,3,4\}       \}.
\]
The latter set has 8 elements, and I now show that each of its elements corresponds to one of the 8 inequalities generated by the algorithm. Consider the case where $A=\{1\}$. Since the set $\{1\}$ only belongs to ${\cal L}(d,y_0,x,\theta)$ when $d=3$, inequality \ref{eqnthmPPD} can be written as follows in this case:
\[
P(Y_1=3,Y_2=1)\leq P(Y_1=1),
\]
which coincides with the $4^{th}$ inequality generated by the algorithm. When $A=\{1,2,3\}$, the  largest element of ${\cal L}(d,y_0,x,\theta)$ contained in $A$ is given by $\{1,2,3\}$ when $d\in\{1,3,4\}$, and by  $\{1,3\}$ when $d=2$. Thus, inequality \ref{eqnthmPPD} yields
\[
P(Y_1\in\{1,3,4\},Y_2\in \{1,2,3\})+P(Y_1=2,Y_2\in\{1,3\})\leq P(Y_1\in\{1,2,3\}),
\]
which coincides with the $5^{th}$ inequality generated by the algorithm. The remaining inequalities can be derived by analogous computations. Furthermore, as the algorithm deletes redundant inequalities, the 8 inequalities that the theorem produces in this case give a compact (i.e., free of redundancies) characterization of all the restrictions that the model puts on the CCPs.
\end{remark}

\begin{remark}
\label{remPPD2}
When $\gamma=0$, the dynamic model in \ref{eqndyn} reduces to the static model in \ref{eqnstat}. In this case, the assumptions of conditional stationarity (\ref{assumptionCS}) and stationarity (\ref{assumptionS}) are observationally equivalent. This is so since, for a fixed value of the index parameter, the sets of conditional choice probability vectors (for periods 1 and 2) that are consistent with either assumption coincide. Furthermore, the inequalities of Theorem \ref{thmPPD1} coincide with those of Proposition 1 in \cite{PP1}. Indeed, when $\gamma=0$, the quantities $\Delta(d,d',y_0,x,\theta)$ and the sets ${\cal L}(d'',y_0,x,\theta)$ do not respectively depend on $d'$ and $d''\in[D]$, and we have $A\in {\cal A}(y_0,x,\theta)$ if and only if $A^c\in \bar{\mathbb{D}}(x,\theta)$, where the set $\bar{\mathbb{D}}(x,\theta)$ is as defined in Proposition 1 of \cite{PP1}. Therefore, when $\gamma=0$, Theorem \ref{thmPPD1} encompasses the result of Proposition 1 in \cite{PP1}.
\end{remark}

\begin{remark}[\textbf{Comparison with the inequalities in \cite{PP2}}]
\label{remPPD5} 

\cite{PP2} also consider model \ref{eqndyn},\footnote{\label{footPPSC1}In \cite{PP2}, the actual specification of the utility level for alternative d at time t is given by $U_{dti}=X_{dti}'\beta-\gamma \mathbbm{1}_{[d\neq y_{(t-1)i}]}+\lambda_{di}+\epsilon_{dti}$ (a positive $\gamma$ in this specification can be interpreted as a "switching cost"). The specification in model \ref{eqndyn} is obtained from the latter by adding the constant $\gamma$ to all utility levels (a positive $\gamma$ in the resulting specification can be interpreted as  "inertia" or "stickiness").} and derive inequality restrictions on the CCPs under a stochastic restriction that is stronger than Assumption (\ref{assumptionCS}). As such, the inequalities of Theorem  \ref{thmPPD1} are valid in the setting of \cite{PP2}. Moreover, I show that under the assumptions of \cite{PP2}, inequalities \ref{eqnthmPPD} are complementary to those derived in Theorem 3.5 of \cite{PP2}, as they neither imply nor are implied by the latter inequalities. Thus, under their maintained assumptions, the inequalities derived in Theorem 3.5 of \cite{PP2} do not characterize the sharp identified set for the parameter $\theta$, and Theorem \ref{thmPPD1} provide some additional inequality restrictions that can be used to identify $\theta$.\par

Specifically, \cite{PP2} consider the model given by Equation \ref{eqndyn}, and assume that the shocks satisfy the following stochastic restriction:
\[
\epsilon_{ti}|Y_{(t-1)i},Y_{(t-2)i},\cdots,Y_{1i},Y_{0i},X_i,\lambda_i \ \sim \ \epsilon_{ti}|\lambda_i \  \sim \ \epsilon_{1i}|\lambda_i.
\]
It can be shown that the latter is observationally equivalent to assuming that, conditional on the fixed effects, the shocks are serially i.i.d., and independent of the initial condition $Y_0$ and the covariates $X$:\footnote{Note that given a parameter value $\theta\in \Theta$, the CCPs that are consistent with the model under either assumption are those that can be represented as $P(Y_1=d_1,\cdots,Y_T=d_T|Y_0=d_0,X=x)=\int \prod_{t\in [T]}G_{\lambda}(\varepsilon_{d_t,t,d_{t-1}})d\mu(\lambda|Y_0=d_0,X=x)$, for all $d_0,d_1,\cdots,d_T\in [D]$, where $\varepsilon_{d_t,t,d_{t-1}}$ is as defined in Equation \ref{eqnPartDyn1} and $G_{\lambda}$ are continuous distributions on $\mathbb{R}^D$.} i.e.,
\begin{equation}
\label{eqnPPSC1}
 \epsilon_{1i}\perp\cdots\perp \epsilon_{Ti}|\lambda_i \quad \text{and} \quad \epsilon_{ti}|\lambda_i \sim \epsilon_{1i}|\lambda_i\quad  \text{ and } \quad \epsilon_i|\lambda_i,Y_{0i},X_i \sim \epsilon_i|\lambda_i.
\end{equation}
Clearly, as a conditionally i.i.d. sequence is stationary, Assumption \ref{assumptionCS} is implied by Assumption \ref{eqnPPSC1}, and the inequalities of Theorem \ref{thmPPD1} are valid in the setting of \cite{PP2}.\footnote{ As stated in \cite{PP2}, under Assumption \ref{eqnPPSC1}, all serial dependence in choices not associated with the covariates is attributed to the fixed effects and to the state dependence parameter. Under our weaker assumption \ref{assumptionCS}, all serial dependence in the choices that is not associated with the covariates can additionally be attributed to the serial correlation of the shocks.}\par

When $T=2$, the inequalities in \cite{PP2} can be stated as follows. Let ${\cal D}(y_0,x,\theta)$ be the family of subsets of $[D]$ defined by
\begin{equation}
\label{eqnPPSC2}
{\cal D}(y_0,x,\theta)=\{A\subsetneq [D] \ |\ A^c\in {\cal L}(d,y_0,x,\theta), \forall d\in A\},
\end{equation}
where the family ${\cal L}(d_1,y_0,x,\theta)$ is as defined in Equation \ref{PPD1}.  Then the inequalities provided in Theorem 3.5 of \cite{PP2} are: For all $A\in {\cal D}(y_0,x,\theta)$, we have
\begin{equation}
\label{eqnPPSC3}
P(Y_2\in A|Y_1\in A,Y_0=y_0,X=x)\geq P(Y_1\in A|Y_0=y_0,X=x).
\end{equation}
I show in Section \ref{remDetails} of the Appendix that the latter inequalities do not imply, nor are implied by inequalities \ref{eqnthmPPD}, as it is possible for inequalities \ref{eqnPPSC3} to be informative about $\theta$ while inequalities \ref{eqnthmPPD} are not, and vice-versa. As a consequence, the inequalities of Theorem 3.5 of \cite{PP2} do not yield a characterization of the sharp identified set, and using them in conjunction with the inequalities of Theorem \ref{thmPPD1} can yield more informative bounds on the parameter $\theta$ (i.e., a smaller outer set of the identified set). Furthermore, as the conditional exchangeability assumption (\ref{assumptionCE}) is stronger than conditional stationarity (\ref{assumptionCS}) but weaker than Assumption \ref{eqnPPSC1}, the algorithm can be used to generate the inequalities that characterize the identified set for $\theta$ under the assumption of conditional exchangeability, and the resulting inequalities will be valid in setting of \cite{PP2}, and will provide further identifying restrictions for $\theta$. \par
Note that Assumption \ref{eqnPPSC1} does not impose any restrictions on the conditional distribution of the fixed effects, given the initial condition and the covariates. Therefore, model \ref{eqndyn} under Assumption \ref{eqnPPSC1} is observationally equivalent to model \ref{eqndyn} under the assumption\footnote{Note that if the covariates $X$ contain a "strongly exogenous" component $Z$ (here $X=(W,Z)$), in the sense that condition \ref{eqnPPSC1} holds and $\lambda_i | Y_{0i},W_i,Z_i\sim \lambda_i | Y_{0i},W_i$, then such a $Z$ makes it possible to exploit "between" variations in choices across individuals to identify $\theta$.}
\begin{equation*}
 \epsilon_{1i}\perp\cdots\perp \epsilon_{Ti}|Y_{0i},X_i,\lambda_i \quad \text{and}\quad  \epsilon_{ti}|Y_{0i},X_i,\lambda_i \sim \epsilon_{1i}|Y_{0i},X_i,\lambda_i.
\end{equation*}
The latter is a natural extension of the conditional IID stochastic restriction \ref{assumptionI} to the dynamic setting, and as in Proposition \ref{propImp} it can be shown that the set ${\cal P}(y_0,x,\theta)$ of CCPs that are consistent with the model at the covariate value $X=x$, the initial condition $Y_0=y_0$, and the parameter value $\theta$, are not polytopes \footnote{If the parameter value $\theta=(\beta,\gamma)$ is such that $\gamma=0$ (i.e., there are no dynamics), then the sets ${\cal P}(y_0,x,\theta)$ coincide with those considered in Proposition \ref{propImp}}. Hence, it follows from Theorem \ref{thm2} that, under the assumptions of  \cite{PP2}, the sharp identified set for the parameter $\theta$ does not admit a "simple" characterization,\footnote{Note however that the inequalities \ref{eqnPPSC3} of \cite{PP2} are nonlinear in the CCP vector $p=\{P(Y_1=d_1,Y_2=d_2| Y_0=y_0,X=x)\}_{d_1,d_2\in[D]}$.} and our algorithm cannot be used to generate the inequalities that characterize the sharp identified set.

%

\end{remark}
\begin{remark}
\label{remPPD3}
When $D=T=2$, the setting of Theorem \ref{thmPPD1} reduces to a two-period dynamic binary choice model with the assumption of conditional stationarity on the shocks. In this setting, the sharp identified set for the common parameter $\theta=(\beta,\gamma)$ is characterized by the conditional moment inequalities provided in Theorem 2 of \cite{KPT}. In the Appendix, I demonstrate that these inequalities are equivalent to those of Theorem \ref{thmPPD1}. Therefore, Theorem \ref{thmPPD1} encompasses the findings of Theorem 2 of \cite{KPT} when $D=T=2$.\footnote{When $D=2$, the model in Theorem \ref{thmPPD1} is relatively small, thus to confirm that the inequalities of Theorem \ref{thmPPD1} characterize the sharp identified set, one can use the computational algorithm proposed in this paper for all relevant configurations of the initial condition $y_0$, covariates $x$, and parameter $\theta$ (see Proposition \ref{propPatch}).} For further details, see Section \ref{remDetails} of the Appendix.

\end{remark}
\begin{remark}
\label{remPPD4}

When $D$ is large and Benson's algorithm is not computationally practical, we can use a probabilistic approach to determine whether the inequalities in Theorem \ref{thmPPD1} characterize the sharp identified set. Following Footnote \ref{foot13}, this approach involves solving the LPs $\max\{w'y \ | \ A(y_0,x,\theta)^Ty\leq R_{CS}(y_0,x,\theta)^Tz,\ \|y\|_{\infty} \leq 1 \}$ for a large number, $K$, of randomly chosen objective vectors $w>0$. The solution of each such LP is an undominated extreme point of the DDCP and corresponds to an inequality restriction on the CCPs. When $K$ is large, the inequalities that correspond to the solutions of these $K$ LPs will contain most, if not all, of the inequality restrictions that the model places on the CCPs. We can then compare this resulting set of inequalities to those of Theorem \ref{thmPPD1}, to determine whether all of the inequalities produced by the probabilistic approach are accounted for by the inequalities in \ref{eqnthmPPD} (or are redundant given the inequalities in \ref{eqnthmPPD}). If not, then the inequalities in Theorem \ref{thmPPD1} do not characterize the sharp identified set of $\theta$, and the new (unaccounted for) inequalities represent additional restrictions that the model places on the CCPs beyond those in Theorem \ref{thmPPD1}.\par
It is important to reemphasize that the probabilistic approach is not guaranteed to retrieve all of the undominated extreme points of the DDCP, and as such it may not retrieve all of the inequality restrictions that the model places on the CCPs. However, it is a useful alternative when Benson's algorithm is not feasible, as solving even a relatively large number of LPs can be done within a reasonable amount of time. Below, I illustrate the performance of the probabilistic approach with a simulation experiment. I consider the same inputs as those used by the algorithm to generate the eight inequalities listed above (i.e., $y_0=3$ and $x$ and $\theta$ are such that $\gamma=7$, $v_1=(0,\cdots,0)$ and $v_2=(0,3,5,7)$). These eight inequalities represent all of the inequality restrictions that the model places on CCPs vector in the local model ${\cal P}(y_0,x,\theta)$ associated with the given input values. I implement the probabilistic approach with different values of $K$: I consider $K\in\{50,100,500,1000\}$. For a given value of $K$, I consider 100 iterations. In each iteration, I randomly draw $K$ independent and identically distributed  objective vectors $\{w_i\}_{i=1}^K$ from a distribution $\mu$, where each objective vector $w_i$ has dimension equal to $D^T=16$ (the dimension of the CCP vectors). In particular, I let $\mu$ be the distribution of a random vector of dimension 16, whose components are independently and identically distributed with distribution given by an exponential distribution with parameter 1.\footnote{Other choices of $\mu$ are possible, and the only requirement is that the closure of the radial projection of the support of $\mu$ onto the unit sphere is equal to the set of all vectors in the unit sphere (in $\mathbb{R}^{16}$) with nonnegative entries . This projection is of interest because the LPs with objectives $w$ and $\alpha w$ have the same solutions for any scalar $\alpha>0$. Essentially, we want $\mu$ to "sample" all non-negative directions and only non-negative directions.} Given each objective $w_i$, I solve the corresponding LP and obtain the corresponding solution $y_i$. Thus solving all K LPs yields a set $\{y_i\}_{i=1}^K$ of undominated extreme points of the DDCP, and each one of these solutions corresponds to an inequality restriction on the CCPs. When $K$ is large the resulting set of inequalities is bound to have redundancies, and I use the redundancy removal algorithm outlined at the end of Section \ref{sectionCuttingPlane} to produce an equivalent and minimal (i.e., redundancy free) subset of $M$ inequalities. I report in Table \ref{table3} below the number of time (out of the 100 iterations) that $M$ takes each one of its possible values, for different values of $K$. Note that since Benson's algorithm solves for all the undominated extreme points of the DDCP, and using it produces  eight non-redundant inequalities, it must be the case that $M$, the number of non-redundant inequalities generated by the probabilistic approach, is always at most eight, and is equal to eight only when the probabilistic approach is able to retrieve all of the eight inequalities generated by Benson's algorithm. As can be seen from the table's output, when $K=1000$, for 99 out of 100 iteration, the probabilistic approach is able to obtain all eight inequalities that were generated by Benson's algorithm. Moreover, these iterations each have an average running time of around 33 seconds, compared to the running time of 1800s for Benson's algorithm. Even when $K=500$, the probabilistic approach successfully retrieves all eight inequalities for about 90\% of the time, with an average running time of around 13 seconds per iteration, and the approach is able to recover seven of the eight restrictions for the remaining 10\% of the iterations. Thus, our experiment suggests that even for moderate values of $K$, the probabilistic approach is able to retrieve, with probability close to 1, all of the inequality restrictions that the model places on the CCPs, while using a running time that is much smaller than that of Benson's algorithm.\footnote{While it would be ideal to have a theory that guides the selection of the sampling distribution $\mu$ and determines the appropriate number of objective vectors $K$ needed to ensure that the probabilistic approach recovers all inequalities with a probability greater than $1-\delta$, for a given value of $\delta\in (0,1)$, such an analysis is beyond the scope of this paper and is left for future research.}

\begin{center}
\begin{adjustbox}{max width = \linewidth}
\begin{threeparttable}

\caption{Simulation output}

\vspace{-1cm}\setlength\tabcolsep{10.pt} 
\renewcommand{\arraystretch}{0.8} 

%
%
%
%
%
%
%
%
%
%
%

\begin{tabular}{cc|cccccccc|cc}

\hline
\multicolumn{2}{c}{\#of objectives} & $M=1$& $M=2$& $M=3$& $M=4$& $M=5$& $M=6$& $M=7$& $M=8$ &  \multicolumn{2}{c}{Running time (in seconds)} \tabularnewline
\hline 
\multicolumn{2}{c}{$N=50$} & 0 & 0 & 3& 14 & 37& 33& 11 & 2 & \multicolumn{2}{c}{146.57}\\

 \hline

\multicolumn{2}{c}{$N=100$} & 0 & 0 & 0 & 2 & 16 & 35 & 36 & 11 & \multicolumn{2}{c}{322.13}\\

\hline

\multicolumn{2}{c}{$N=500$} & 0 & 0 & 0 & 0 & 0 & 0 & 12 & 88 & \multicolumn{2}{c}{1289.68}\\

\hline

\multicolumn{2}{c}{$N=1000$} & 0 & 0 & 0 & 0  & 0 & 0 & 1 & 99 & \multicolumn{2}{c}{3297.08}\\

\hline
\end{tabular}
\begin{tablenotes}
      \small
      \item The running time listed in the last column represents the running time for all 100 iterations. This includes the time to compute the $A$ and $R$ matrices used in the LPs. This experiment was conducted on the same machine used to generate the eight inequalities that precede Theorem \ref{thmPPD1}, and the running times in the table (divided by 100) can be directly compared to  that of Benson's algorithm (1800 seconds).
    \end{tablenotes}

\label{table3}
\end{threeparttable}
\end{adjustbox}
\end{center}
\end{remark}

\end{example}

\begin{example}[\textbf{\cite{KPT}} with two lags ]
\label{exampDyn2}

I now consider an AR(2) extension of the model in Example \ref{exampDyn1}. Specifically, I generalize the model in \ref{exampDyn1} to include an additional lag as follows:

\begin{equation}
\label{eqnAR21}
Y_{ti}=\mathbbm{1}\{X_{it}'\beta+\gamma_1 Y_{i,t-1}+\gamma_2 Y_{i,t-2}-\lambda_i-\varepsilon_{ti}> 0\},
\end{equation}

where $\gamma_1$ and $\gamma_2$ denote respectively the coefficients on the first and second lags. The goal of the analysis is now to characterize the identified set of the common parameters $\beta$, $\gamma_1$, and $\gamma_2$, denoted by $\theta=(\beta,\gamma_1,\gamma_2)$.

I consider a setting with three time periods, i.e., $t\in[3]$, and when $t=1$, I assume that the pair $(Y_0,Y_{-1})$ is observed and represents the initial condition. In contrast to Example \ref{exampDyn1}, I replace the unconditional stationarity restriction on the shocks with the following conditional (on the initial condition) stationarity restriction: For $s,t\in[3]$

\begin{equation}
\label{eqnAR22}
\varepsilon_{ti}|Y_{0i},Y_{-1i},X_i,\lambda_i \sim \varepsilon_{si}|Y_{0i},Y_{-1i},X_i,\lambda_i ,
\end{equation}

where $X_i=(X_{1i}',X_{2i}',X_{3i}')'$ is the vector of covariates for all three time periods.\par

Below, I extend the constructions of Section \ref{sectionCond} to the current setting with two lags, and use it to generate the conditional moment inequalities that characterize the identified set. Given a parameter value $\theta$, a covariate value $x$, and a value $y=(y_0,y_{-1})$ of the initial condition, let ${\cal P}(y,x,\theta)$ denote the set of all CCP vectors that are consistent with the model \ref{eqnAR21} and the stochastic restriction \ref{eqnAR22} at the covariate value $x$ and the initial condition $y$, when the true parameter is equal to $\theta$. For $t\in [3]$, let $v_t:=x_t'\beta$. As in Section \ref{sectionCond} the sets ${\cal P}(y,x,\theta)$ are polytopes, and I now show how to construct their "discretization". 

The first step in the discretization consists of constructing the patches, which as in Section \ref{sectionCond} are sets of values of realizations of the shocks that induce an agent to make a determined choice in each time period $t\in[3]$ and in each potential state. Here, the "state" in period t, denoted $s_t$, simply represents the choices that were made in the preceding two time periods. At time 1, since we are conditioning on the initial condition, the unique state is given by the initial condition; at time 2, there are only 2 potential states since the choice made in period 0 is determined by the initial condition; there are four states in period 3, corresponding to all possible choices made in period 1 and 2. Thus, a patch can be encoded by a vector $f=(d_1,\cdots,d_7)\in \{0,1\}^7$, where the $d_1$ represents the choice that the agent makes in period 1 when the unique state is given by $s_1=(y_0,y_{-1})$ (i.e., the state is equal to the initial condition), the second and third entries of $f$ represent respectively the choices made in period 2 when the state is given by $s_2=(0,y_0)$ and $s_2=(1,y_0)$ (for interpretation, the state $s_2=(1,y_0)$ indicates that alternative 1 was chosen in period 1 and alternative $y_0$ was chosen in period 0). Finally, the fourth through seventh entry of $f$ represent the choices made in period 3 when the states are respectively given by $s_3=(0,0)$, $s_3=(0,1)$, $s_3=(1,0)$ and $s_3=(1,1)$. The set of all patches $F$ then consists of all vectors $f=(d_1,\cdots,d_7)\in \{0,1\}^7$ such that the following system of inequalities in $\varepsilon$ admits a solution:
\begin{equation}
\begin{aligned}
\label{eqnAR23}
(2d_1-1)(\varepsilon-(v_1+\gamma_1y_0+\gamma_2y_{-1}))&<0\\
(2d_2-1)(\varepsilon-(v_2+\gamma_2y_0))&<0\\
(2d_3-1)(\varepsilon-(v_2+\gamma_1+\gamma_2y_0))&<0\\
(2d_4-1)(\varepsilon-v_3)&<0\\
(2d_5-1)(\varepsilon-(v_3+\gamma_2))&<0\\
(2d_6-1)(\varepsilon-(v_3+\gamma_1))&<0\\
(2d_7-1)(\varepsilon-(v_3+\gamma_1+\gamma_2))&<0.
\end{aligned}
\end{equation}
As in \ref{eqnKPT1}, checking whether a solution to the latter system exists is equivalent to checking whether 7 intervals (one for each inequality in \ref{eqnAR23}) have non-empty intersection. This can easily be done by solving a LP as in \ref{eqnKPT2}. Moreover, it follows from \ref{eqnAR23} that the set of patches $F$ is uniquely determined by the relative ordering between the quantities $v_1+\gamma_1y_0+\gamma_2 y_{-1}$, $v_2+\gamma_2y_0$, $v_2+\gamma_1+\gamma_2 y_0$, $v_3$, $v_3+\gamma_2$, $v_3+\gamma_1$ and $v_3+\gamma_1+\gamma_2$.

  Let the chosen values of $x$, $y$ and $\theta$  be such that 
\begin{equation}
\label{eqnAR25}
v_1=0, \ v_2=4, \ v_3=2, \ \gamma_1=3, \ \gamma_2=-4 \text{ and } y=(1,1). 
\end{equation}
This particular choice leads to the ordering:
\begin{equation}
\label{eqnAR24}
v_3+\gamma_2<v_1+\gamma_1y_0+\gamma_2y_{-1}<v_2+\gamma_2y_0<v_3+\gamma_1+\gamma_2<v_3<v_2+\gamma_1+\gamma_2 y_0 <v_3+\gamma_1,
\end{equation}
and the inequalities that I generate below represent all the restrictions that the model (\ref{eqnAR21} and \ref{eqnAR22}) places on the set of CCPs ${\cal P}(y,x,\theta)$ whenever $y$, $x$ and $\theta$ are chosen such that  \ref{eqnAR24} holds. In order to generate all the inequalities that characterize the identified set of the parameter $\theta$, the construction that I discuss below can be repeated for alternative orderings of the quantities in \ref{eqnAR24}. Given the values in \ref{eqnAR25}, the set $F$ of all patches can easily be obtained by solving \ref{eqnAR23} for all potential vector $f\in\{0,1\}^7$; doing so leads to only 8 "non-empty" patches (i.e., |F|=8). As in Section \ref{sectionCond}, this set of patches can then be used to partition $\mathbb{R}^3$, the set of all possible realizations of $\varepsilon=(\varepsilon_1,\varepsilon_2,\varepsilon_3)'$ (the vector of shocks in periods 1 through 3), into "rectangular regions" formed by taking the Cartesian product of elements of $F$. The resulting partition ${\cal R}$ has $8^3=512$ elements/regions. 

The matrix $A$ is a 8-by-512 array with each row indexed by a triple $\textbf{d}=(d,d',d'')\in\{0,1\}^3$ and each column indexed by a region $\textbf{R}=f_1\times f_2\times f_3\in{\cal R}$ such that the entry in the $\textbf{d}^{th}$ row and $\textbf{R}^{th}$ column is equal to 1 (and equal to zero otherwise) if and only if realizations of the shocks in the region $\textbf{R}$ induce the agent to respectively choose alternatives $d$, $d'$ and $d''$ in periods 1, 2 and 3, when the initial condition is given by $y=(1,1)$. That is, the entry in the $\textbf{d}^{th}$ row and $\textbf{R}^{th}$ column of $A$ is equal to 1 iff:
\[
d=f_1(1), \ d'=f_2(2+d), \text{ and } d''=f_3(4+d+2d')
\]
where for $j\in[7]$ and $t\in [3]$, $f_t(j)$ represents the $j^{(th)}$ entry of the patch $f_t$. I now discuss the construction of the matrix $R_S$ which enforces the conditional stationarity restriction on the discretized local model. Let $B_1$ and $B_2$ be two 8-by-512 matrices. Each row of $B_i$, for $i\in[2]$, indexes a patch $f\in F$ (recall that $|F|=8$ for the particular choice in \ref{eqnAR25}) and each column of $B_i$ indexes a region $\textbf{R}=f_1\times f_2 \times f_3 \in {\cal R}$. Let the entries in the $f^{th}$ row and $\textbf{R}^{th}$ column of $B_1$ and $B_2$ be respectively given by
\[
B_1(f,\textbf{R})=\mathbbm{1}\{f_1=f\}-\mathbbm{1}\{f_3=f\},
\]
and 
\[
B_2(f,\textbf{R})=\mathbbm{1}\{f_2=f\}-\mathbbm{1}\{f_3=f\}.
\]
Essentially, the matrix $B_1$ enforces the conditional stationarity stochastic restriction \ref{eqnAR22} between periods 1 and 3, and the matrix $B_2$ enforces the same restriction between periods 2 and 3. The matrix $R_S$ is then obtained by vertically concatenating the matrices $B_1$ and $B_2$, i.e.,
\[
R_S=(B_1',B_2')'.
\]

The set of all inequality restrictions that the model places on the CCP vectors in ${\cal P}(y,x,\theta)$ is then simply the set of solutions of the MOLP \ref{eqnMolp}, with the $A$ and $R_S$ matrices that we constructed in the preceding paragraph. Solving this MOLP (using Benson's algorithm) yields the following solution set: \footnote{The running time of the whole procedure (from the construction of the patches to solving for the MOLP) takes less than two seconds on the same machine used to generate the outputs of Table \ref{Table1} and \ref{Table2}.}

\[
\bordermatrix{& p_{000}&p_{001}&p_{010}&p_{011}&p_{100}&p_{101}&p_{110}&p_{111}\cr
&0&-1&1&0&-1&-1&0&-1\cr
&-1&-1&1&0&-1&-1&0&0\cr
&0&0&-1&-1&1&1&0&0\cr
&0&-1&0&-1&0&0&1&0
}\qquad
\]
where for $d_1,d_2,d_3 \in\{0,1\}$, $p_{d_1d_2d_3}$ represents the entry of the CCP vector equal to $P(Y_1=d_1, Y_2=d_2, Y_3=d_3|X=x,Y_0=1,Y_{-1}=1)$.
The first row corresponds to the inequality
\[
p_{010}\leq p_{001}+p_{100}+p_{101}+p_{111}
\]
which, after adding $p_{110}$ to both sides, can be rewritten as
\begin{equation}
\label{eqnAR26}
\begin{aligned}
P(Y_2=1,Y_3=0| x,Y_0=1,Y_{-1}=1)&\leq P(Y_1=0,Y_2=0,Y_3=1| x,Y_0=1,Y_{-1}=1)\\
&+P(Y_1=1| x,Y_0=1,Y_{-1}=1).
\end{aligned}
\end{equation}
The second row corresponds to the inequality
\begin{equation}
\label{eqnAR27}
P(Y_1=0,Y_2=1,Y_3=0| x,Y_0=1,Y_{-1}=1)\leq P(Y_2=0| x,Y_0=1,Y_{-1}=1).
\end{equation}
The third row (after adding $p_{001}+p_{000}$ to both sides) corresponds to the inequality
\begin{equation}
\label{eqnAR28}
P(Y_2=0| x,Y_0=1,Y_{-1}=1)\leq P(Y_1=0| x,Y_0=1,Y_{-1}=1).
\end{equation}
The fourth row (after adding $p_{010}+p_{000}$ to both sides) corresponds to the inequality
\begin{equation}
\label{eqnAR29}
P(\{Y_1=0\}\cup \{Y_2=1\},Y_3=0| x,Y_0=1,Y_{-1}=1)\leq P(Y_1=0| x,Y_0=1,Y_{-1}=1).
\end{equation}

Analytical proofs of inequalities \ref{eqnAR26}--\ref{eqnAR29} are provided in Section \ref{ineqDetails} of the Appendix. I provide a proof of Inequalities \ref{eqnAR27}--\ref{eqnAR29}, as well as their generalization, in Proposition \ref{propAR2P1} of the Appendix; these proofs are relatively simple, and follow arguments similar to those in \cite{KPT}. However, establishing inequality \ref{eqnAR26} is more challenging and relies on an intricate use of the stationarity assumption \footnote{Of all the inequalities that were generated by the algorithm in this paper, Inequality \ref{eqnAR26} was the most difficult one to verify analytically.}. As such, in the absence of an algorithm like the one presented in this paper, it may be difficult to guess an inequality like \ref{eqnAR26}. Its proof is provided in Proposition \ref{propAR2P2} of the Appendix.\par

The inequalities \ref{eqnAR26}--\ref{eqnAR29} generated by the algorithm represent some, but not all, of the inequalities that characterize the identified set. In particular, they are valid only when the ordering \ref{eqnAR24} holds. To determine \textit{all} of the inequalities that characterize the identified set, it can be helpful to prove analytically the inequalities generated by the algorithm for a few orderings of the quantities in \ref{eqnAR24}. By doing so, one can then guess and prove their generalization to alternative orderings of the quantities in \ref{eqnAR24} (see Proposition \ref{propAR2P1} for instance). These predictions can then be compared against the inequality restrictions that the algorithm generates for alternative orderings of \ref{eqnAR24}. If there is a mismatch (i.e., the algorithm generates some new inequalities that are not implied by the predicted inequalities) for some ordering, the new inequalities can be proven analytically, and the future predictions can be refined by taking into consideration these new inequalities.
\end{example}

\section{Conclusion}
\label{sectionConclusion}
This paper has presented an algorithm for generating the conditional moment inequality restrictions that delineate the sharp identified set of the common parameter for various multinomial choice models. The algorithm is compatible with a broad range of stochastic restrictions, and I illustrated through numerous examples that it can recover established results and generate new ones. \par 
One drawback of the algorithm is that it may not be computationally feasible for large-scale models, particularly when $D^T$ is "large". To overcome this limitation, future research could further explore the specific properties of MOLPs that are linked to multinomial choice models, in order to devise more practical algorithms for larger problems, as I have attempted to do in Section \ref{sectionCuttingPlane}. And although my focus in this paper was on panel multinomial choice models, the methodology introduced here should be applicable to other discrete outcome models, such as the bivariate models and the panel data games discussed in \cite{HDP}, as well as to cross-sectional settings. The key features needed for the approach to succeed are that the local models are polytopes and that the "local model map" $(x,\theta)\rightarrow {\cal P}(x,\theta)$ assumes a finite number of values (as shown in Proposition \ref{propPatch}). As long as these conditions hold, the approach pursued in this paper, with the requisite modifications, ought to be viable. Such extensions are left for future research.
\section{Appendix A}

\begin{proof}[\textbf{Proof of Proposition \ref{prop1}}]
Let $p$ be a given CCP in ${\cal P}_i(x,\theta)$, and let $q$ (satisfying $R_i q=0$) be a discrete probability vector that rationalizes $p$. For each patch $f\in F$, let $g_f$ be a density on $\mathbb{R}^D$ that is supported on a compact subset of $f$ (such densities exist since the patches $f$ are open sets). Then the density $g$ defined on $\mathbb{R}^{DT}$ by
\[
g(x_1,\cdots,x_T):=\sum_{f_1,\cdots,f_T\in F} q_{f_1 \times \cdots, \times f_T} \prod_{t\in[T]} g_{f_t}(x_t)
\]
puts the same mass as $q$ in each region of $R\in {\cal R}$ (hence it rationalizes $p$ and is thus observationally equivalent to $q$), and it can be checked that g is stationary (resp. exchangeable) if q is stationary (resp. exchangeable).
\end{proof}
\begin{proof}[\textbf{Proof of Proposition \ref{propPatch}}]
For fixed $D$ and $T$, and $i\in\{E,S\}$, there are only finitely many possible matrices $A$ and $R_i$ that we can obtain from the discretization step from Section \ref{sectionDiscretization} (the maximum number of columns of the $A$ and $R$ matrices is upper bounded by $D^{T^2}$, and their entries are in $\{0,\pm 1\}$), and if the $A$ and $R$ matrices are the same at two covariate-parameter pairs, then DDCP is the same at these two points, and thus the corresponding MOLPs coincide.\par
Let $\{(A^{(k)},R_i^{(k)})\}_{k\in [m]}$ denote all possible configurations of the matrices $A$ and $R_i$. For $k\in[m]$, let $O_k$ denote the set of all $(x,\theta)\in {\cal X}\times \Theta$ such that the DCP at $(x,\theta)$ is given by $\{p=A^{(k)}q \ | \ R_i^{(k)}q=0, \ \text{and } q\geq 0\}$. And let $I_k$ denote the solution of the MOLP \ref{eqnMolp} when the $A$ and $R_i$ matrices are equal to $A^{(k)}$ and $R_i^{(k)}$. Then the statement of the proposition follows with $m$, $O_k$ and $I_k$ as above.\par
Note that although the suggested upper bound on the quantity $m$ from the preceding argument can be quite large, the actual value of $m$ is much smaller, and by exploiting the symmetries of the problem we can further reduce the number of MOLPs that need to be solved ($m=3$ in the example of Section \ref{sectionHeuristics} where $D=T=2$, and $m=5$ in Examples \ref{exampPPS} and \ref{exampPPE} where $D=4$ and $T=2$).
\end{proof}
\begin{proof}[\textbf{Proof of Proposition \ref{propImp}}]
I prove the proposition here for a two periods binary choice model (setting of Section \ref{sectionHeuristics}) under the conditional IID assumption, but the same argument can be extended (with some additional work) to both dynamic and static models with $D\geq 2$ alternatives, $T\geq 2$ time periods.\par
Let $v_t:=x_{t}'\theta$, for $t\in[2]$, and assume that $v_1>v_2$ (a similar argument holds if $v_1=v_2$ or $v_1<v_2$). Each $p \in {\cal P}(x,\theta)$, is of the form:
\begin{equation}
\label{eqn002}
p=\begin{pmatrix}
           p_{00} \\
           p_{01} \\
           p_{10}\\
          p_{11}         
         \end{pmatrix}
         =\int_{\mathbb{R}} \begin{pmatrix}
           (1-r_\lambda)(1-s_{\lambda}) \\
           (1-r_{\lambda})s_{\lambda} \\
            r_{\lambda}(1-s_{\lambda})\\
            r_{\lambda}s_{\lambda}        
         \end{pmatrix}
         d\nu(\lambda)
\end{equation}

where $\nu$ is a conditional distribution of the fixed effects given $x$, $r_{\lambda}=P(\epsilon_1 +\lambda < v_1|x,\lambda)$, $s_{\lambda}=P(\epsilon_2 +\lambda < v_2|x,\lambda)$. Since $\epsilon_1$ and $\epsilon_2$ are IID conditional on $x$ and $\lambda$ and $v_1>v_2$, we have $r_{\lambda}\geq s_{\lambda}$ for all values of $\lambda$ in the support of $\nu$. Equation \ref{eqn002} implies that ${\cal P}(x,\theta)$ is convex, and is the convex hull of the compact set
\begin{equation}
\label{eqn003}
{\cal S}=\{p=((1-r)(1-s), s(1-r), r(1-s), rs)^T \ | \ r\in [0,1] \ \text{and } 0\leq s\leq r \}.
\end{equation}
which is closed (by Caratheodory's Theorem and the compactness of ${\cal S}$, each element of ${\cal P}(x,\theta)$ can be written as a convex combination of at most 4 elements of ${\cal S}$, and it is without loss of generality to assume that the conditional distribution of the fixed effects given $x$ is supported on at most 4 support points\footnote{The set ${\cal S}$ is a subset of $\mathbb{R}^4$ and  its elements satisfy the restriction $\mathbbm{1}^Tp=1$. Hence by Caratheodory's Theorem, every convex combination of elements of ${\cal S}$ can be represented as a convex combination of at most 4 elements of ${\cal S}$.}\footnote{The set ${\cal C}_1$ obtained by mixing elements of ${\cal S}$ using a general probability measure for the fixed effects is a closed and convex set. The set ${\cal C}_2$ obtained by mixing elements of ${\cal S}$ using probability measures for the fixed effects supported on at most 4 points is convex and dense in ${\cal C}_1$, and it is closed since ${\cal S}$ is compact. Thus ${\cal C}_1={\cal C}_2$ }).  By duality, a point $p\in \mathbb{R}^4$ belongs to ${\cal P}(x,\theta)$ if and only if $w'p\leq \mu_{{\cal P}}(w)$ for all $w \in \mathbb{R}^{4}$. where $\mu_{{\cal P}}$ denotes the support function of the set ${\cal P}(x,\theta)$ and is defined by 
\[
\mu_{{\cal P}}(w)=\sup\{w'p\ | \ p \in {\cal P}(x,\theta)\}=\max\{w'p\ | \ p \in {\cal S}\}.
\]
As the support function of a polytope is piecewise-linear, its derivative is constant on any open set on which it is differentiable. To show that ${\cal P}(x,\theta)$ is not a polytope, it suffices to find an open set on which $\mu_{{\cal P}}$ is differentiable with non-constant derivative (see \cite{HW}, Theorem 2.10). \par
For $w \in \mathbb{R}^4$ an element of the open set $\{w=(w_1,w_2,w_3,w_4)^T \in \mathbb{R}^4\ | \ w_1>w_3>w_4, \ w_3+w_4>2 w_1, \ \text{and } w_2+w_3>w_1+w_4 \}$, solving the maximization problem defining the support function yields
\[
\mu_{{\cal P}}(w)=w_1+ \frac{(w_2+w_3-2w_1)^2}{4(w_2+w_3-w_1-w_4)}
\] 
which is differentiable on the open set under consideration, and is clearly non-linear.
\end{proof}


\begin{proof}[\textbf{Proof of Theorem \ref{thm2}}]
As $supp(X)={x_0}$, let ${\cal P}(\theta):={\cal P}(x_0,\theta)$.  Assume, for a contradiction, that the sharp identified set has a representation of the type \ref{eqn_2}. Then for any observable CCP vector $p_0$ at $x_0$, the sharp identified set $\Theta(p_0)$ (possibly empty, if $p_0$ is not consistent with the model) can be written as:
\begin{equation}
\label{eqnt21}
\Theta(p_0)=\{\theta \in \Theta \ | \ \text{For all } k \in [M], \ \text{s.t. }\theta \in \tilde{\theta}_k,  \ \text{we have } \alpha_k ^T p_0\leq \beta_k \}
\end{equation}
where
\[
\tilde{\Theta}_k:=\{\theta \in \Theta \ | \ (x_0,\theta)\in S_k\},
\]
and $S_k$ is as in \ref{eqn_2}, and some of the $\tilde{\Theta}_k$ possibly empty. Note that since $\cup_{k\in [M]}S_k={\cal X}\times \Theta$, we have 
\[
\bigcup_{k\in [M]} \tilde{\Theta}_k=\Theta.
\]
Let $\theta_0$ be as in the statement of Theorem \ref{thm2}, and let $\Delta=\{k \in [M]\ | \ \theta_0\in \tilde{\Theta}_k\}$. Then $\forall p \in {\cal P}(\theta_0)$, Equation \ref{eqnt21} implies that $\theta_0\in \Theta(p)$ and
\[
{\cal P}(\theta_0)\subseteq {\cal Q}:=\{q \ | \ \alpha_k^T q \leq \beta_k, \ \forall k \in \Delta\}.
\]
As ${\cal Q}$ is a polytope and ${\cal P}(\theta_0)$ is not (by assumption), there exists $\tilde{p}\in {\cal Q}\backslash{\cal P}(\theta_0)$. But then $\theta_0\in \Theta(\tilde{p})$ (by Equation \ref{eqnt21}, the definition of $\Delta$ and ${\cal Q}$), and by the definition of the identified set $\tilde{p}$, $\tilde{p}$ should be a CCP vector consistent with the model at $(x_0,\theta_0)$. But this contradicts $\tilde{q}\notin {\cal P}(\theta_0)$.
\end{proof}

\begin{proof}[\textbf{Proof of Theorem \ref{thmE}}]
Let $x$, $\theta$ and $D\geq 2$ be fixed, and assume (w.l.o.g.) that 
\begin{equation}
\label{eqn0}
\Delta v_1\geq \Delta v_2\geq\cdots\geq\Delta v_D,
\end{equation}
which can be made to hold by relabeling the alternatives. Without loss of generality, we can assume that, $\forall d\in[D]$, $v_{d1}=0$  and $v_{d2}=\Delta v_d$ (see Footnote \ref{foot19}). Let the sets ${\cal \varepsilon}_{d,t}$, for $d\in [D]$ and $t\in[2]$, be defined as in Equation \ref{eqnPart}. Then for $A=\bigcup_{i=1}^m U_{k_i}\times L_{k'_i}$, we have
\[
P((y_1,y_2)\in A)=P\left((\zeta_1,\zeta_2) \in \bigcup_{i=1}^m \left[(\bigcup_{d\leq k_i} {\cal \varepsilon}_{d,1})\times(\bigcup_{d'\geq k'_i} {\cal \varepsilon}_{d',2})\right]\right)
\]
where $\zeta_1,zeta_2\in \mathbb{R}^D$ are the composite errors (i.e, $\zeta_1=\lambda+\epsilon_1$), and where we have dropped the conditioning on $x$ for notational simplicity. \par
I now prove the following two inclusions
\begin{equation}
\label{eqnincl1}
\bigcup_{d\leq k_i} {\cal \varepsilon}_{d,1} \subseteq \bigcup_{d\leq k_i} {\cal \varepsilon}_{d,2},
\end{equation}
and
\begin{equation}
\label{eqnincl2}
\bigcup_{d'\geq k'_i} {\cal \varepsilon}_{d',2} \subseteq \bigcup_{d'\geq k'_i} {\cal \varepsilon}_{d',1},
\end{equation}
from which it will follow that 
\begin{equation}
\label{eqnincl3}
 \bigcup_{i=1}^m \left[(\bigcup_{d\leq k_i} {\cal \varepsilon}_{d,1})\times(\bigcup_{d'\geq k'_i} {\cal \varepsilon}_{d',2})\right]\subseteq  \bigcup_{i=1}^m \left[(\bigcup_{d\leq k_i} {\cal \varepsilon}_{d,2})\times(\bigcup_{d'\geq k'_i} {\cal \varepsilon}_{d',1})\right].
\end{equation}
Combining inclusion \ref{eqnincl3} with the fact that $(\zeta_1,\zeta_2)$ is exchangeable then yields
\[
\begin{split}
P\left((\zeta_1,\zeta_2) \in \bigcup_{i=1}^m \left[\bigcup_{d\leq k_i} {\cal \varepsilon}_{d,1} \times \bigcup_{d'\geq k'_i} {\cal \varepsilon}_{d',2} \right]\right) & \leq P\left((\zeta_1,\zeta_2) \in \bigcup_{i=1}^m \left[(\bigcup_{d\leq k_i} {\cal \varepsilon}_{d,2})\times(\bigcup_{d'\geq k'_i} {\cal \varepsilon}_{d',1})\right]\right)\\
&= P\left((\zeta_2,\zeta_1) \in \bigcup_{i=1}^m \left[(\bigcup_{d\leq k_i} {\cal \varepsilon}_{d,2})\times(\bigcup_{d'\geq k'_i} {\cal \varepsilon}_{d',1})\right]\right)\\
&=P\left((y_1,y_2)\in \bigcup_{i=1}^m L_{k'_i} \times  U_{k_i} \right)\\
&=P\left((y_1,y_2)\in Per(A)\right).
\end{split}
\]
Thus, it remains to show Inclusions \ref{eqnincl1} and \ref{eqnincl2}. To see \ref{eqnincl1}, note that we have (using $v_{d1}=0$ for all $d\in[D]$)
\[
\bigcup_{d\leq k_i} {\cal \varepsilon}_{d,1}=\{\zeta \in \mathbb{R}^D \ | \ \max_{d\leq k_i}\zeta_d > \max_{k_i<d'\leq D}\zeta_{d'}\}.
\]
Inequalities \ref{eqn0} implies that 
\[
\min_{d\leq k_i} v_{d2} \geq \max_{k_i<d'\leq D}v_{d'2}
\]
 (recall that $v_{d2}=\Delta v_d$ for all $d\in[D]$), and thus whenever $\max_{d\leq k_i}\zeta_d > \max_{k_i<d'\leq D}\zeta_{d'}$, we must have 
\[
\max_{d\leq k_i}\zeta_d+v_{d2} > \max_{k_i<d'\leq D}\zeta_{d'}+v_{d'2}.
\]
Combining the preceding observations yields
\[
\bigcup_{d\leq k_i} {\cal \varepsilon}_{d,1} \subseteq \{\zeta \in \mathbb{R}^D \ | \ \max_{d\leq k_i}\zeta_d+v_{d2} > \max_{k_i<d'\leq D}\zeta_{d'}+v_{d'2}\}=\bigcup_{d\leq k_i} {\cal \varepsilon}_{d,2},
\]
and this establishes Inclusion \ref{eqnincl1}. Inclusion \ref{eqnincl2} follows by a similar argument; we have
\[
\bigcup_{d'\geq k'_i} {\cal \varepsilon}_{d',2}=\{\zeta \in \mathbb{R}^D \ | \ \max_{d< k'_i}v_{d2}+\zeta_d < \max_{k'_i\leq d'\leq D}v_{d'2}+\zeta_{d'}\}.
\]
Inequalities \ref{eqn0} imply that
\[
\max_{d< k'_i}-v_{d2} \leq \max_{k'_i\leq d'\leq D}-v_{d'2},
\]
which when combined with the preceding equality yields
\[
\bigcup_{d'\geq k'_i} {\cal \varepsilon}_{d',2}\subseteq \{\zeta \in \mathbb{R}^D \ | \ \max_{d< k'_i} \zeta_d < \max_{k'_i\leq d'\leq D} \zeta_{d'}\}=\bigcup_{d'\geq k'_i} {\cal \varepsilon}_{d',1},
\]
and this establishes \ref{eqnincl2}.
\end{proof}


\subsubsection{Proof of Theorem \ref{thmPPD1} and omitted details of Remarks \ref{remPPD5} and \ref{remPPD3}}
\label{remDetails}
\begin{proof}[\textbf{Proof of Theorem \ref{thmPPD1}}]
As in Remark \ref{rem2}, it is without loss of generality to assume that the fixed effects are identically equal to zero, and I do so below. Fix $A\in {\cal A}(y_0,x,\theta)$, and let $E_A$ denote the event that appears on the left hand side of Inequality \ref{eqnthmPPD}:
\[
E_A:=\bigcup_{d \in [D]}\Bigl\{Y_1=d,Y_2\in \cup \bigl\{B \ | \ B\subseteq A, B \in {\cal L}(d_1,y_0,x,\theta_0)\bigl\}\Bigl\}.
\]
For each $d\in[D]$, let the set $B_{d|A}$$(\subset [D])$, possibly empty, be defined by
\[
B_{d|A}:=\cup \bigl\{B \ | \ B\subseteq A, B \in {\cal L}(d,y_0,x,\theta_0)\bigl\}
\]
and let the \textit{associated} event $E_{d|A}$ be defined by
\[
E_{d|A}:=\{Y_1=d, Y_2 \in B_{d|A}\}.
\]
Note that the events $E_{d|A}$ are pairwise disjoint, and their union is equal to $E_A$. Moreover, it follows from the definition of elements of ${\cal L}(d_1,y_0,x,\theta_0)$, that $B_{d|A}$ is itself a "lower set", in that it satisfies $ B_{d|A}\subsetneq [D]$ (since $A\subsetneq [D]$) and $\Delta(d',d,y_0,x,\theta)\leq \Delta(d'',d,y_0,x,\theta)$ for all $d'\in B_{d|A}$ and $d''\in [D]\backslash B_{d|A}$. I now show that the events $E_{d|A}$ are all included in the event $F_A:=\{\epsilon_2 \ | \max_{d\in A} v_{d1}+\gamma \mathbbm{1}\{d=y_0\}+\epsilon_{d2}>  \max_{d\in [D]\backslash A} v_{d1}+\gamma \mathbbm{1}\{d=y_0\}+\epsilon_{d2}\}$. Assuming the latter is true, it then follows that $P(E_A)\leq P(F_A)$, since the events $E_{d|A}$ are pairwise disjoint, and the stationarity  restriction then gives $P(F_A)=P(\{\epsilon_1 \ |  \max_{d\in A} v_{d1}+\gamma \mathbbm{1}\{d=y_0\}+\epsilon_{d1}>  \max_{d\in [D]\backslash A} v_{d1}+\gamma \mathbbm{1}\{d=y_0\}+\epsilon_{d1}\})=P(Y_1\in A)$, which yields Inequality \ref{eqnthmPPD}. Hence, it remains to show that $E_{d_1|A}\subseteq F_A$ for all $d_1\in [D]$. We have
\begin{align*}
E_{d_1|A} &\subseteq \bigl\{\epsilon_2 \ | \max_{d\in B_{d_1|A}} v_{d2}+\gamma \mathbbm{1}\{d=d_1\}+\epsilon_{d2}> \max_{d\in [D]\backslash B_{d_1|A}} v_{d2}+\gamma \mathbbm{1}\{d=d_1\}+\epsilon_{d2}\bigl\}\\
&\subseteq  \bigl\{\epsilon_2 \ | \max_{d\in B_{d_1|A}} v_{d2}+\gamma \mathbbm{1}\{d=d_1\}-\Delta(d,d_1,y_0,x,\theta)+\epsilon_{d2}>\\
& \max_{d\in [D]\backslash B_{d_1|A}} v_{d2}+\gamma \mathbbm{1}\{d=d_1\}-\Delta(d,d_1,y_0,x,\theta)+\epsilon_{d2}\bigl\}\\
&=\bigl\{\epsilon_2 \ | \max_{d\in B_{d_1|A}} v_{d1}+\gamma \mathbbm{1}\{d=y_0\}+\epsilon_{d2}> \max_{d\in [D]\backslash B_{d_1|A}} v_{d1}+\gamma \mathbbm{1}\{d=y_0\}+\epsilon_{d2}\bigl\}\\
&\subseteq \bigl\{\epsilon_2 \ | \max_{d\in A} v_{d1}+\gamma \mathbbm{1}\{d=y_0\}+\epsilon_{d2}> \max_{d\in [D]\backslash A} v_{d1}+\gamma \mathbbm{1}\{d=y_0\}+\epsilon_{d2}\bigl\}=F_A
\end{align*}
where the second inclusion follows since $B_{d_1|A}$ is a lower set, and where the last inclusion follows from $B_{d_1|A}\subseteq A$.
\end{proof}

\underline{\textbf{Omitted details of Remark \ref{remPPD5}}}
I show below that the inequalities of Theorem \ref{thmPPD1} and Theorem 3.5 of \cite{PP2} are complementary, under Assumption \ref{eqnPPSC1}. By the latter I mean that it is possible for the inequalities of Theorem \ref{thmPPD1} to be informative about the model's parameters while the inequalities \ref{eqnPPSC3} are not, and it is equally possible for the inequalities \ref{eqnPPSC3} to be informative about the model's parameters while those of Theorem \ref{thmPPD1} are not. I will illustrate this point using the application considered in \cite{PP2}.\par

 \cite{PP2} use the following dynamic discrete choice model to estimate the extend of state dependence in the choice of health insurance plans. The indirect utility of individual i for health insurance plan d ($d\in[D]$) at time $t\in[2]$ is modeled by 
\[
U_{dti}=-X_{dti}-\gamma \mathbbm{1}_{[Y_{(t-1)i}\neq d]}+\lambda_{di}+\epsilon_{dti},
\]
where $X_{dti}$ is a scalar that denotes the price of health insurance plan d at time t. The goal is to identify the state dependence parameter $\gamma$ (the switching cost) from choice data, when the only stochastic restriction on the shocks is given by Assumption \ref{eqnPPSC1}. As noted in Footnote \ref{footPPSC1}, the preceding specification of the indirect utilities is observationally equivalent to the following  one
\[
U_{dti}=-X_{dti}+\gamma \mathbbm{1}_{[Y_{(t-1)i}= d]}+\lambda_{di}+\epsilon_{dti}.
\]
Suppose that we know a priori that $\gamma\geq0$, and consider for simplicity a setting where $D=2$, and where the support of the covariates and initial conditions is degenerate. That is, assume that $X$ takes a single value $\bar{x}$, and the initial condition takes the single value $Y_0=1$. Then the conditional choice probability vector $\bar{p}:=\{P(Y_1=d,Y_2=d'|Y_0=1,X=\bar{x})\}_{d,d'\in\{0,1\}}$, $\bar{p}\in\mathbb{R}^4$, is identified from a random sample. For $d\in[2]$, let $\Delta \bar{x}_d:=\bar{x}_{d2}-\bar{x}_{d1}$ (i.e., the change in price of the $d^{th}$ alternative from period 1 to 2). Suppose that the observed vector of prices in period 1 and 2, $\bar{x}$, is such that 
\[
\Delta \bar{x}_2-\Delta \bar{x}_1>0.
\]
 I show below, by considering two cases, that depending on the value of $\bar{p}$, inequalities \ref{eqnthmPPD} rule out some values of the parameter $\gamma$ that inequalities \ref{eqnPPSC3} do not rule out, and vice-versa. \par

Using the definitions of the quantities $\Delta(d,d_1,y_0,x,\theta)$ given in the paragraph that precedes Theorem \ref{thmPPD1}, we have
\[
\Delta(1,1,1,\bar{x},\gamma)=(-\bar{x}_{12}+\gamma)-(\bar{x}_{11}+\gamma)=-\Delta \bar{x}_1 \quad \Delta(2,1,1,\bar{x},\gamma)=(-\bar{x}_{22})-(\bar{x}_{21})=-\Delta \bar{x}_2,
\]
and
\[
\Delta(1,2,1,\bar{x},\gamma)=(-\bar{x}_{12})-(\bar{x}_{11}+\gamma)=-\Delta \bar{x}_1-\gamma \quad \Delta(2,2,1,\bar{x},\gamma)=(-\bar{x}_{22}+\gamma)-(\bar{x}_{21})=-\Delta \bar{x}_2+\gamma.
\]
Note that since $\Delta \bar{x}_2-\Delta \bar{x}_1>0$, we have $\Delta(1,1,1,\bar{x},\gamma)>\Delta(2,1,1,\bar{x},\gamma)$ and ${\cal L}(1,1,\bar{x},\gamma)=\{\{2\}\}$ for all values of $\gamma\geq 0$, where ${\cal L}(1,1,\bar{x},\gamma)$ is as defined in Equation \ref{PPD1}. It thus follows that for all $\gamma\geq 0$, $\{1\}\in {\cal D}(1,\bar{x},\gamma)$, where ${\cal D}(1,\bar{x},\gamma)$ is as defined in Equation \ref{eqnPPSC2}, and the corresponding inequality from \ref{eqnPPSC3} yields
\begin{equation}
\label{eqnPPSCP1}
P(Y_2=1|Y_1=1,Y_0=1,X=\bar{x})\geq P(Y_1=1|Y_0=1,X=\bar{x}).
\end{equation}
As inequality \ref{eqnPPSCP1} is valid for all values $\gamma\geq0$, it must hold for the identified CCP vector $\bar{p}$ (unless the model is misspecified). Thus in both cases that I consider below, I assume that $\bar{p}$ satisfies inequality \ref{eqnPPSCP1}. Note that  $\Delta(1,2,1,\bar{x},\gamma)-\Delta(2,2,1,\bar{x},\gamma)=\Delta \bar{x}_2-\Delta \bar{x}_1 -2\gamma$; the two cases that I consider below depend on whether the latter quantity is positive or negative.

\underline{Case 1}. Suppose that $0\leq \gamma < (\Delta \bar{x}_2-\Delta \bar{x}_1 )/2$. Then $\Delta(1,2,1,\bar{x},\gamma)>\Delta(2,2,1,\bar{x},\gamma)$, ${\cal L}(2,1,\bar{x},\gamma)=\{\{2\}\}$, and we have ${\cal A}(1,\bar{x},\gamma)=\{\{2\}\}$ and ${\cal D}(1,\bar{x},\gamma)=\{\{1\}\}$, where ${\cal A}(1,\bar{x},\gamma)$ is as defined in equations \ref{PPD2}. In this case, inequality \ref{eqnthmPPD} then becomes
\begin{equation}
\label{eqnPPSCP2}
P(Y_2=2|Y_0=1,X=\bar{x})\leq P(Y_1=2|Y_0=1,X=\bar{x})
\end{equation}
and inequality \ref{eqnPPSC3} yields inequality \ref{eqnPPSCP1}. Suppose that $\bar{p}$ satisfies \ref{eqnPPSCP1} but does not satisfy inequality \ref{eqnPPSCP2}\footnote{It can easily be shown that there are probability vectors $p\in \mathbb{R}^4$ that satisfy inequality \ref{eqnPPSCP1} but violate inequality \ref{eqnPPSCP2}.} Then, for such $\bar{p}$, the inequalities in Theorem \ref{thmPPD1} will rule out values of the parameter $\gamma$ in the range $[0, (\Delta \bar{x}_2-\Delta \bar{x}_1 )/2)$, while the inequalities in \cite{PP2} will fail to do so. 

\underline{Case 2} Suppose now that $\gamma> (\Delta \bar{x}_2-\Delta \bar{x}_1 )/2$. Then $\Delta(1,2,1,\bar{x},\gamma)<\Delta(2,2,1,\bar{x},\gamma)$, ${\cal L}(2,1,\bar{x},\gamma)=\{\{1\}\}$, and we have ${\cal A}(1,\bar{x},\gamma)=\{\{1\},\{2\}\}$ and ${\cal D}(1,\bar{x},\gamma)=\{\{1\},\{\}2\}$. The inequalities \ref{eqnthmPPD} then yield
\begin{equation}
\label{eqnPPSCP3}
P(Y_2=2,Y_1=1|Y_0=1,X=\bar{x})\leq P(Y_1=2|Y_0=1,X=\bar{x})
\end{equation}
and
\begin{equation}
\label{eqnPPSCP4}
P(Y_2=1,Y_1=2|Y_0=1,X=\bar{x})\leq P(Y_1=1|Y_0=1,X=\bar{x}).
\end{equation}
And the inequalities \ref{eqnPPSC3} yield inequality \ref{eqnPPSCP1} and 
\begin{equation}
\label{eqnPPSCP5}
P(Y_2=2|Y_1=2,Y_0=1,X=\bar{x})\geq P(Y_1=2|Y_0=1,X=\bar{x}).
\end{equation}
Suppose that the identified CCP vector $\bar{p}$ satisfies inequalities \ref{eqnPPSCP1}, \ref{eqnPPSCP3} and \ref{eqnPPSCP4}, but violates inequality \ref{eqnPPSCP5}\footnote{It is straight forward to show that there exists such probability vectors $p\in\mathbb{R}^4$.}. Then, for such $\bar{p}$, the inequalities in \cite{PP2} rule out values of $\gamma$ in the range $((\Delta \bar{x}_2-\Delta \bar{x}_1 )/2,+\infty)$, while the inequalities in Theorem \ref{thmPPD1} fail to do so.

\underline{\textbf{Omitted details of Remark \ref{remPPD3}}}
I here show that the restrictions of Theorem \ref{thmPPD1} coincide with those of Theorem 2 of \cite{KPT} when $D=T=2$. In the latter setting, the model \ref{eqndyn} becomes: Alternative 2 is chosen at time $t\in[2]$ if and only if (and alternative 1 is chosen otherwise)
\[
X_{2ti}'\beta+\gamma \mathbbm{1}[Y_{t-1,i}=2]+\lambda_{2i}+\epsilon_{2ti} > X_{1ti}'\beta+\gamma \mathbbm{1}[Y_{t-1,i}=1]+\lambda_{1i}+\epsilon_{1ti}.
\]
After relabeling the alternatives to the set of values $\{0,1\}$, the latter is equivalent to

\begin{equation}
\label{eqnPPD3}
Y_t=\mathbbm{1}_{\{X_{1ti}'\beta+\gamma Y_{t-1,i}+\lambda_{1i}+\epsilon_{1ti} > X_{0ti}'\beta+\gamma (1-Y_{t-1,i})+\lambda_{0i}+\epsilon_{0ti}\}}.
\end{equation}
When $d'=0$, the "deterministic utility increments" $\Delta(d,d',y_0,x,\theta)$ are given by:
\[
\Delta(0,0,y_0,x,\theta)=v_{02}-v_{01}+\gamma-\gamma(1- y_0) \quad \text{and} \quad \Delta(1,0,y_0,x,\theta)=v_{12}-v_{11}-\gamma y_0,
\]
and the set ${\cal L}(0,y_0,x,\theta)$ is determined by the order relation between $\Delta(0,0,y_0,x,\theta)$ and $\Delta(1,0,y_0,x,\theta)$. Specifically, ${\cal L}(0,y_0,x,\theta)=\{\{0\}\}$ (and is equal to $\{\{1\}\}$ otherwise \footnote{Here I have ignored ties for simplicity. Note that if $\Delta(0,0,y_0,x,\theta)=\Delta(1,0,y_0,x,\theta)$ then ${\cal L}(0,y_0,x,\theta)=\{\{0\},\{1\}\}$.}) if and only if
\begin{equation}
\label{eqnPPD4}
a_2-a_1>\tilde{\gamma} y_0
\end{equation}
where 
\begin{equation}
\label{eqnPPD0}
\tilde{\gamma}=2\gamma \quad \text{and}\quad a_t:=v_{1t}-v_{0t}
\end{equation}
 for $t\in[2]$. Similarly, when $d'=1$, the deterministic utility increments are given by
\[
\Delta(0,1,y_0,x,\theta)=v_{02}-v_{01}-\gamma(1- y_0) \quad \text{and} \quad \Delta(1,1,y_0,x,\theta)=v_{12}-v_{11}+\gamma-\gamma y_0,
\]
and, ignoring ties, we get  ${\cal L}(1,y_0,x,\theta)=\{\{0\}\}$ (and is equal to $\{\{1\}\}$ otherwise) if and only if
\begin{equation}
\label{eqnPPD5}
a_2-a_1+\tilde{\gamma} (1-y_0)>0.
\end{equation}

Hence, ignoring potential ties between the deterministic utility differences, the inequalities of Theorem \ref{thmPPD1}, in the $D=T=2$ setting, are explicitly given by:

\begin{enumerate}[a.]
\item If inequalities \ref{eqnPPD4} and \ref{eqnPPD5} are true, i.e., $a_2-a_1>\tilde{\gamma} y_0$ and $a_2-a_1+\tilde{\gamma} (1-y_0)>0$ (which can easily be shown to be equivalent to $a_2-a_1+\min\{0,\tilde{\gamma}\}>\tilde{\gamma} y_0$), then we have ${\cal L}(0,y_0,x,\theta)={\cal L}(1,y_0,x,\theta)= {\cal A}(y_0,x,\theta_0)=\{\{0\}\}$, and inequality \ref{eqnthmPPD} becomes
\begin{equation}
\label{eqnPPD6}
 P(Y_2=0|y_0,x) \leq P(Y_1=0|y_0,x).  
\end{equation}

\item If inequalities \ref{eqnPPD4} and \ref{eqnPPD5} are false, i.e., $a_2-a_1<\tilde{\gamma} y_0$ and $a_2-a_1+\tilde{\gamma} (1-y_0)<0$ (which is equivalent to $a_2-a_1+\max\{0,\tilde{\gamma}\}<\tilde{\gamma} y_0$), then we have ${\cal L}(0,y_0,x,\theta)={\cal L}(1,y_0,x,\theta)={\cal A}(y_0,x,\theta_0)=\{\{1\}\}$, and inequality \ref{eqnthmPPD} becomes
\begin{equation}
\label{eqnPPD7}
 P(Y_2=1|y_0,x) \leq P(Y_1=1|y_0,x).  
\end{equation}

\item If inequality \ref{eqnPPD4} is true and inequality \ref{eqnPPD5} is false, i.e., $a_2-a_1>\tilde{\gamma} y_0$ and $a_2-a_1+\tilde{\gamma} (1-y_0)<0$, then we have ${\cal L}(0,y_0,x,\theta)=\{\{0\}\}$, ${\cal L}(1,y_0,x,\theta)=\{\{1\}\}$, $ {\cal A}(y_0,x,\theta_0)=\{\{0\},\{1\}\}$, and Theorem \ref{thmPPD1} yields the following two inequalities:
\begin{equation}
\label{eqnPPD8}
\begin{aligned}
P(Y_1=0,Y_2=0|y_0,x)&\leq P(Y_1=0|y_0,x)\\
&\text{and}\\
P(Y_1=1,Y_2=1|y_0,x)&\leq P(Y_1=1|y_0,x).
\end{aligned}
\end{equation}

\item If inequality \ref{eqnPPD4} is false and inequality \ref{eqnPPD5} is true, i.e., $a_2-a_1<\tilde{\gamma} y_0$ and $a_2-a_1+\tilde{\gamma} (1-y_0)>0$, then we have ${\cal L}(0,y_0,x,\theta)=\{\{1\}\}$, ${\cal L}(1,y_0,x,\theta)=\{\{0\}\}$, $ {\cal A}(y_0,x,\theta_0)=\{\{0\},\{1\}\}$, and Theorem \ref{thmPPD1} yields the following two inequalities:
\begin{equation}
\label{eqnPPD9}
\begin{aligned}
P(Y_1=0,Y_2=1|y_0,x)&\leq P(Y_1=1|y_0,x)\\
&\text{and}\\
P(Y_1=1,Y_2=0|y_0,x)&\leq P(Y_1=0|y_0,x).
\end{aligned}
\end{equation}
\end{enumerate}
Inequalities \ref{eqnPPD6} -- \ref{eqnPPD9} represent all of the inequality restrictions on the CCPs that are implied by Theorem \ref{thmPPD1} when $D=2$. \par

To see that these are the same restrictions as those derived in Theorem 2 of \cite{KPT}, when $T=2$, note first that model \ref{eqnPPD3} is equivalent to
\begin{equation}
\label{eqnPPD10}
Y_t=\mathbbm{1}_{\{\tilde{X}_{ti}'\beta+\tilde{\gamma} Y_{t-1,i}+\tilde{\lambda}_{i}+\tilde{\epsilon}_{ti} >0\}}
\end{equation}
where $\tilde{X}_{ti}:=X_{1ti}-X_{0ti}$, $\tilde{\lambda}_{i}:=\lambda_{1i}-\lambda_{0i}$, and $\tilde{\epsilon}_{ti}:=\epsilon_{1ti}-\epsilon_{0ti}-\gamma$. It can easily be shown that $\tilde{\epsilon}_{ti}$ satisfies the conditional stationarity assumption given $y_0$, $\tilde{x}$ and $\tilde{\lambda}$, since the shocks $\epsilon_{ti}$ are assumed to satisfy Assumption  \ref{assumptionCS}. Moreover, given $y_0$, $x$ and $\theta$, the quantities $a_1$ and $a_2$, defined in \ref{eqnPPD0}, coincide respectively with $\tilde{x}_{1}'\beta$ and $\tilde{x}_{2}'\beta$. When choice is modeled as in \ref{eqnPPD10} and $T=2$, it is shown in \cite{KPT} that the inequality restrictions that characterize the sharp identified set are given by (rewritten here in a way that facilitates comparison):
\begin{enumerate}
\item If $a_2-a_1+\min\{0,\tilde{\gamma}\}\geq \tilde{\gamma} y_0$ then 
\begin{equation}
\label{eqnPPD11}
P(Y_2=0|y_0,x)\leq P(Y_1=0| y_0,x).
\end{equation}

\item If $a_2-a_1+\max\{0,\tilde{\gamma}\}\leq \tilde{\gamma} y_0$ then 
\begin{equation}
\label{eqnPPD12}
P(Y_2=1|y_0,x)\leq P(Y_1=1| y_0,x).
\end{equation}

\item If $a_2-a_1+\tilde{\gamma}(1-y_0)\geq 0$ then
\begin{equation}
\label{eqnPPD13}
P(Y_1=1,Y_2=0|y_0,x)\leq P(Y_1=0|y_0,x).
\end{equation}

\item If $a_2-a_1\geq \tilde{\gamma} y_0$ then
\begin{equation}
\label{eqnPPD14}
P(Y_1=0,Y_2=0|y_0,x)\leq P(Y_1=0|y_0,x).
\end{equation}

\item If $a_2-a_1+\tilde{\gamma}(1-y_0)\leq 0$ then
\begin{equation}
\label{eqnPPD15}
P(Y_1=1,Y_2=1|y_0,x)\leq P(Y_1=1|y_0,x).
\end{equation}

\item If $a_2-a_1\leq \tilde{\gamma} y_0$ then
\begin{equation}
\label{eqnPPD16}
P(Y_1=0,Y_2=1|y_0,x)\leq P(Y_1=1|y_0,x).
\end{equation}
\end{enumerate}

When inequalities \ref{eqnPPD4} and \ref{eqnPPD5} hold, Theorem \ref{thmPPD1} and Theorem 2 of \cite{KPT} have equivalent implications. Specifically, Theorem \ref{thmPPD1} yields inequality \ref{eqnPPD6}, while Theorem 2 of \cite{KPT} yields inequalities \ref{eqnPPD11}, \ref{eqnPPD13} and \ref{eqnPPD14}, and the latter two are easily seen to be redundant given \ref{eqnPPD11}. When inequality \ref{eqnPPD4} is true and inequality \ref{eqnPPD5} is false, Theorem \ref{thmPPD1} provides restrictions \ref{eqnPPD8}, while \cite{KPT} provides restrictions \ref{eqnPPD14} and \ref{eqnPPD15}, and both sets of restrictions coincide. When \ref{eqnPPD4} is false and \ref{eqnPPD5} is true, Theorem \ref{thmPPD1} yields restrictions \ref{eqnPPD9}, while \cite{KPT} yields restrictions \ref{eqnPPD13} and \ref{eqnPPD16}, and both sets of restrictions coincide. Finally, when \ref{eqnPPD4} and \ref{eqnPPD5} are false, Theorem \ref{thmPPD1} yields restriction \ref{eqnPPD7}, while \cite{KPT} yields restrictions \ref{eqnPPD12}, \ref{eqnPPD15}, and \ref{eqnPPD16}, and both sets of restrictions are equivalent as inequalities \ref{eqnPPD15} and \ref{eqnPPD16} are redundant given \ref{eqnPPD12}. Therefore, the inequality restrictions provided by Theorem \ref{thmPPD1} encompass those of Theorem 2 in \cite{KPT} when $D=T=2$.

\subsection{Proofs (and generalization) of Inequalities \ref{eqnAR26}--\ref{eqnAR29}}
\label{ineqDetails}
The following proposition gives the general form of inequalities \ref{eqnAR27}--\ref{eqnAR29}. The proof of inequality \ref{eqnAR26} is discussed further below.\par
 Consider model \ref{eqnAR21}. Given $X=x$, for $t\in[T]$, let $v_{t}:=x_t'\beta$. Let $\Delta_{1,2}(d_1)$, for $d_1\in\{0,1\}$, be defined by
\begin{equation}
\Delta_{1,2}(d_1|y_0,y_{-1},x,\theta):=(v_2+\gamma_{1} d_1+\gamma_2 y_{0})-(v_1+\gamma_1y_0+\gamma_2 y_{-1}),
\end{equation}
and for $t\geq 3$, and $d_{t-1},d_{t-2}\in\{0,1\}$, let $\Delta_{1,t}(d_{t-1},d_{t-2})$ be defined by
\begin{equation}
\Delta_{1,t}(d_{t-1},d_{t-2}|y_0,y_{-1},x,\theta):=(v_t+\gamma_{1} d_{t-1}+\gamma_2 d_{t-2})-(v_1+\gamma_1y_0+\gamma_2 y_{-1}).
\end{equation}
For $2 <t \leq T$ and $d_{t-1},d_{t-2}\in \{0,1\}$, let $\Delta_{2,t}^+(d_{t-1},d_{t-2})$ and $\Delta_{2,t}^-(d_{t-1},d_{t-2})$be defined by
\begin{equation}
\Delta_{2,t}^+(d_{t-1},d_{t-2}|y_0,y_{-1},x,\theta):=(v_t+\gamma_{1} d_{t-1}+\gamma_2 d_{t-2})-(v_2+\max\{\gamma_1,0\}+\gamma_2 y_{0}),
\end{equation}
and 
\begin{equation}
\Delta_{2,t}^-(d_{t-1},d_{t-2}|y_0,y_{-1},x,\theta):=(v_t+\gamma_{1} d_{t-1}+\gamma_2 d_{t-2})-(v_2+\min\{\gamma_1,0\}+\gamma_2 y_{0}).
\end{equation}
For $3\leq s <t \leq T$ and $d_{t-1},d_{t-2}\in\{0,1\}$,  let $\Delta_{s,t}^+(d_{t-1},d_{t-2})$ and $\Delta_{s,t}^-(d_{t-1},d_{t-2})$ be defined by
\begin{equation}
\Delta_{s,t}^+(d_{t-1},d_{t-2}|y_0,y_{-1},x,\theta):=(v_t+\gamma_{1} d_{t-1}+\gamma_2 d_{t-2})-(v_s+\max\{\gamma_1,0\}+\max\{\gamma_2,0\}).
\end{equation}
and
\begin{equation}
\Delta_{s,t}^-(d_{t-1},d_{t-2}|y_0,y_{-1},x,\theta):=(v_t+\gamma_{1} d_{t-1}+\gamma_2 d_{t-2})-(v_s+\min\{\gamma_1,0\}+\min\{\gamma_2,0\}).
\end{equation}

\begin{proposition}
\label{propAR2P1}
Consider model \ref{eqnAR21}, with $T\geq 2$, and suppose that Assumption \ref{eqnAR22} holds. Let $\theta=(\beta,\gamma_1,\gamma_2)$ denote the true parameter value. Then for all $(y_0,y_{-1},x)\in supp(Y_0,Y_{-1},X)$, the following conditional moment inequalities hold (all probabilities are computed conditional on $Y_0=y_0$, $Y_{-1}=y_{-1}$ and $X=x$):\par
 \begin{equation}
\label{eqnAR2P1}
P\left(\bigcup \bigl\{Y_1=d_1,Y_2=0 \ | \ d_1\in \{0,1\}, \ s.t. \ \Delta_{1,2}(d_1)\geq 0 \bigl\}\right)\leq P(Y_1=0),
\end{equation}
\begin{equation}
\label{eqnAR2P2}
P\left(\bigcup \bigl\{Y_1=d_1,Y_2=1 \ | \ d_1\in \{0,1\}, \ s.t. \ \Delta_{1,2}(d_1)\leq 0 \bigl\}\right)\leq P(Y_1=1).
\end{equation}
For $t\geq 3\leq T$, we have
\begin{equation}
\label{eqnAR2P3}
P\left(\bigcup \bigl\{Y_{t-2}=d_{2},Y_{t-1}=d_1,Y_t=0 \ | \ d_1,d_2\in \{0,1\}, \ s.t. \ \Delta_{1,t}(d_{1},d_{2})\geq 0 \bigl\}\right)\leq P(Y_1=0),
\end{equation}
\begin{equation}
\label{eqnAR2P4}
P\left(\bigcup \bigl\{Y_{t-2}=d_{2},Y_{t-1}=d_1,Y_t=1 \ | \ d_1,d_2\in \{0,1\}, \ s.t. \ \Delta_{1,t}(d_{1},d_{2})\leq 0 \bigl\}\right)\leq P(Y_1=1),
\end{equation}
 \begin{equation}
\label{eqnAR2P5}
P\left(\bigcup \bigl\{Y_{t-2}=d_{2},Y_{t-1}=d_1,Y_t=0 \ | \ d_1,d_2\in \{0,1\}, \ s.t. \ \Delta_{2,t}^+(d_{1},d_{2})\geq 0 \bigl\}\right)\leq P(Y_2=0),
\end{equation}
 \begin{equation}
\label{eqnAR2P6}
P\left(\bigcup \bigl\{Y_{t-2}=d_{2},Y_{t-1}=d_1,Y_t=1 \ | \ d_1,d_2\in \{0,1\}, \ s.t. \ \Delta_{2,t}^-(d_{1},d_{2})\leq 0 \bigl\}\right)\leq P(Y_2=1).
\end{equation}
For $3\leq s <t\leq T$, we have
 \begin{equation}
\label{eqnAR2P7}
P\left(\bigcup \bigl\{Y_{t-2}=d_{2},Y_{t-1}=d_1,Y_t=0 \ | \ d_1,d_2\in \{0,1\}, \ s.t. \ \Delta_{s,t}^+(d_{1},d_{2})\geq 0 \bigl\}\right)\leq P(Y_s=0),
\end{equation}
 \begin{equation}
\label{eqnAR2P8}
P\left(\bigcup \bigl\{Y_{t-2}=d_{2},Y_{t-1}=d_1,Y_t=1 \ | \ d_1,d_2\in \{0,1\}, \ s.t. \ \Delta_{s,t}^-(d_{1},d_{2})\leq 0 \bigl\}\right)\leq P(Y_s=1).
\end{equation}
\end{proposition}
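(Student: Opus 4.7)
\textbf{Proof proposal for Proposition \ref{propAR2P1}.}

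The plan is to extend the transport-by-stationarity argument used in the proof of Theorem \ref{thmPPD1} to the AR(2) binary setting. Throughout, I would write $\zeta_t := \lambda + \varepsilon_t$; as in Remark \ref{rem2}, $\zeta_t$ inherits conditional stationarity from $\varepsilon_t$, and the model rewrites as $\{Y_t = 0\} = \{\zeta_t \geq v_t + \gamma_1 Y_{t-1} + \gamma_2 Y_{t-2}\}$ and $\{Y_t = 1\} = \{\zeta_t < v_t + \gamma_1 Y_{t-1} + \gamma_2 Y_{t-2}\}$. All statements below would be made conditional on $(Y_0, Y_{-1}, X, \lambda) = (y_0, y_{-1}, x, \lambda)$, and the final inequalities follow by integrating against the conditional distribution of $\lambda$ given $(Y_0, Y_{-1}, X)$ using the law of iterated expectations.

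First I would dispatch \ref{eqnAR2P1}--\ref{eqnAR2P4}, in which the reference period is $r = 1$ and the lagged outcomes $(Y_0, Y_{-1}) = (y_0, y_{-1})$ are fixed by the conditioning. For each index $(d_1, d_2)$ with $\Delta_{1,t}(d_1, d_2) \geq 0$, the event $\{Y_{t-2} = d_2, Y_{t-1} = d_1, Y_t = 0\}$ forces $\zeta_t \geq v_t + \gamma_1 d_1 + \gamma_2 d_2 \geq v_1 + \gamma_1 y_0 + \gamma_2 y_{-1}$. Taking the union over admissible pairs then yields a set contained in $\{\zeta_t \geq v_1 + \gamma_1 y_0 + \gamma_2 y_{-1}\}$, and conditional stationarity lets me replace $\zeta_t$ by $\zeta_1$, delivering the bound $P(\zeta_1 \geq v_1 + \gamma_1 y_0 + \gamma_2 y_{-1}) = P(Y_1 = 0)$. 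Swapping $\geq$ for $<$ and $\max$ for $\min$ gives \ref{eqnAR2P2} and \ref{eqnAR2P4}, and \ref{eqnAR2P1}--\ref{eqnAR2P2} correspond to the degenerate case $t = 2$ where $d_2$ is replaced by the fixed value $y_0$.

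The main subtlety arises in \ref{eqnAR2P5}--\ref{eqnAR2P8}, where the reference period $r \in \{2, s\}$ has at least one lagged outcome that is itself random. The key observation I plan to exploit is that, for any realization of the lagged outcomes,
\[
v_2 + \gamma_1 Y_1 + \gamma_2 y_0 \leq v_2 + \max\{\gamma_1, 0\} + \gamma_2 y_0, \qquad v_s + \gamma_1 Y_{s-1} + \gamma_2 Y_{s-2} \leq v_s + \max\{\gamma_1, 0\} + \max\{\gamma_2, 0\},
\]
so that $\{\zeta_r \geq c_r^+\} \subseteq \{Y_r = 0\}$, where $c_2^+ := v_2 + \max\{\gamma_1, 0\} + \gamma_2 y_0$ and $c_s^+ := v_s + \max\{\gamma_1, 0\} + \max\{\gamma_2, 0\}$. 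The definitions of $\Delta_{2,t}^+$ and $\Delta_{s,t}^+$ are calibrated precisely so that the positivity conditions $\Delta_{2,t}^+(d_1, d_2) \geq 0$ and $\Delta_{s,t}^+(d_1, d_2) \geq 0$ push $\zeta_t$ past these conservative thresholds $c_r^+$. Conditional stationarity then transfers $\zeta_t$ to $\zeta_r$, and the inclusions above bound the total probability by $P(Y_r = 0)$. The symmetric bounds $v_r + \gamma_1 Y_{r-1} + \gamma_2 Y_{r-2} \geq v_r + \min\{\gamma_1, 0\} + \min\{\gamma_2, 0\}$ give the dual inclusions $\{\zeta_r < c_r^-\} \subseteq \{Y_r = 1\}$ and yield \ref{eqnAR2P6} and \ref{eqnAR2P8}.

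I do not anticipate a serious obstacle: once the conservative-threshold bookkeeping is pinned down, each of the eight inequalities reduces to a short arithmetic check on the defining inequality for the relevant $\Delta$. The one point that will require care is recognizing that the thresholds $c_r^\pm$ do not depend on $(d_1, d_2)$, which is why the bound on the union can be packaged as a single threshold event on $\zeta_t$ before invoking stationarity; this is essential because disjointness of the events indexed by distinct $(d_1, d_2)$ is not needed for the argument to go through.
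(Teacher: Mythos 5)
Your proposal is correct and follows essentially the same route as the paper's proof: each event in the union is nested inside a common threshold event on the period-$t$ shock (using the conservative thresholds $v_r+\max\{\gamma_1,0\}+\cdots$ when the reference period's lagged outcomes are random), stationarity transfers that threshold event to period $r$, and the resulting event is contained in $\{Y_r=0\}$ (or $\{Y_r=1\}$). The only cosmetic difference is that the paper invokes pairwise disjointness of the indexed events, whereas you correctly observe that the containment in a single threshold event already suffices.
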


\begin{proof}
As in Remark \ref{rem2}, we can assume without loss of generality that the fixed effects are identically equal to zero. To see why Inequality \ref{eqnAR2P1} holds, for $d_1\in\{0,1\}$, let the event $E_{d_1,0}^{1,2}$ be defined by 
\begin{align*}
E_{d_1,0}^{1,2}&:=\{(\epsilon_1,\epsilon_2) \in\mathbb{R}^2 \ | \  Y_1=d_1,Y_2=0 \}\\
&=\{(\epsilon_1,\epsilon_2)\in\mathbb{R}^2\ | \ (2d_1-1)[v_1+\gamma_1y_0+\gamma_{2} y_{-1}-\epsilon_1]>0, \ v_2+\gamma_1 d_1+\gamma_2y_0-\epsilon_2<0\}.
\end{align*}
When  $\Delta_{1,2}(d_1)\geq 0$ we have $\{\epsilon_2 \in\mathbb{R} \ | \ v_2+\gamma_1 d_1+\gamma_2y_0-\epsilon_2<0\}\subseteq \{\epsilon_2\in\mathbb{R}\ | \ v_1+\gamma_1y_0+\gamma_{2} y_{-1} -\epsilon_2<0\}$. Inequality \ref{eqnAR2P1} then follows since the events $E_{0,0}^{1,2}$ and $E_{1,0}^{1,2}$ are disjoint, and the stationarity restriction (Assumption \ref{eqnAR22}) yields
\[
P(\{\epsilon_2\in\mathbb{R}\ | \ v_1+\gamma_1y_0+\gamma_{2} y_{-1} -\epsilon_2<0\})=P(\{\epsilon_2\in\mathbb{R}\ | \ v_1+\gamma_1y_0+\gamma_{2} y_{-1} -\epsilon_1<0\})=P(Y_1=0).
\]
The proof of Inequality \ref{eqnAR2P2} follows by a similar argument and is omitted.\par

I now prove Inequality \ref{eqnAR2P3}. For $d_1,d_2,d\in \{0,1\}$, let the event $E_{d_2,d_1,d}^{t-2,t-1,t}$ be defined by
\begin{equation}
\label{eqnAR2P9}
E_{d_2,d_1,d}^{t-2,t-1,t}:=\{(\epsilon_{t-2},\epsilon_{t-1},\epsilon_t) \in\mathbb{R}^3 \ | \  Y_{t-2}=d_2,Y_{t-1}=d_1,Y_t=d \}
\end{equation}
We have $E_{d_2,d_1,0}^{t-2,t-1,t}\subseteq \{\epsilon_t \in\mathbb{R}\ | \ v_t+\gamma_1d_1+\gamma_2 d_2<\epsilon_t\}$, and the inequality $\Delta_{1,t}(d_{1},d_{2})\geq 0$ implies that $\{\epsilon_t \in\mathbb{R}\ | \ v_t+\gamma_1d_1+\gamma_2 d_2<\epsilon_t\}\subseteq \{\epsilon_t \in\mathbb{R}\ | \ v_1+\gamma_1y_0+\gamma_2 y_{-1}<\epsilon_t\}$. As above, Inequality \ref{eqnAR2P3} then follows since the events $E_{d_2,d_1,0}^{t-2,t-1,t}$ are pairwise disjoint for $d_1,d_2\in \{0,1\}$ and the stationarity restriction yields
\[
P(\{\epsilon_t \in\mathbb{R}\ | \ v_1+\gamma_1y_0+\gamma_2 y_{-1}<\epsilon_t\})=P(\{\epsilon_1 \in\mathbb{R}\ | \ v_1+\gamma_1y_0+\gamma_2 y_{-1}<\epsilon_1\})=P(Y_1=0).
\]
Inequality \ref{eqnAR2P4} follows by a similar argument and its proof is omitted.\par

To establish Inequality \ref{eqnAR2P5}, for the events $E_{d_2,d_1,0}^{t-2,t-1,t}$ defined as in \ref{eqnAR2P9}, note that when $\Delta_{2,t}^+(d_{1},d_{2})\geq 0$, we have 
\[
E_{d_2,d_1,0}^{t-2,t-1,t}\subseteq \{\epsilon_t \in\mathbb{R}\ | \ v_t+\gamma_1d_1+\gamma_2 d_2<\epsilon_t\}\subseteq \{\epsilon_t \in\mathbb{R}\ | \ v_2+\max\{\gamma_1,0\}+\gamma_2 y_0<\epsilon_t\}.
\]
It then follows from the fact that the events $E_{d_2,d_1,0}^{t-2,t-1,t}$ are pairwise disjoint, and by the stationarity restriction, that the term on the left of Inequality \ref{eqnAR2P5} is upper bounded by
\[
P(\{\epsilon_2 \in\mathbb{R}\ | \ v_2+\max\{\gamma_1,0\}+\gamma_2 y_0<\epsilon_2\})\leq P(\{\epsilon_2 \in\mathbb{R}\ | \ v_2+\gamma_1 Y_1+\gamma_2 y_0<\epsilon_2\})=P(Y_2=0).
\]
The proof of Inequality \ref{eqnAR2P6} proceeds by a similar argument and is omitted.\par
I now prove Inequality \ref{eqnAR2P8}. For $d_1,d_2\in \{0,1\}$ and $\Delta_{s,t}^-(d_{1},d_{2})\leq 0$, we have
\begin{align*}
E_{d_2,d_1,1}^{t-2,t-1,t}&\subseteq \{ \epsilon_t \in\mathbb{R}\ | \ v_t+\gamma_1d_1+\gamma_2 d_2>\epsilon_t \}\\
&\subseteq \{ \epsilon_t \in\mathbb{R}\ | \ v_s+\min\{\gamma_1,0\}+\min\{\gamma_2,0\}>\epsilon_t \}.
\end{align*}
Since the events $E_{d_2,d_1,1}^{t-2,t-1,t}$ (for $d_1,d_2\in \{0,1\}$) are pairwise disjoint, the left hand side of Inequality \ref{eqnAR2P8} is upper bounded by $P(\{ \epsilon_t \in\mathbb{R}\ | \ v_s+\min\{\gamma_1,0\}+\min\{\gamma_2,0\}>\epsilon_t \} )$ which, by the stationarity restriction, is equal to $P(\{ \epsilon_s \in\mathbb{R}\ | \ v_s+\min\{\gamma_1,0\}+\min\{\gamma_2,0\}>\epsilon_s \} )$. Inequality \ref{eqnAR2P8} then follows since
\[
P(\{ \epsilon_s \in\mathbb{R}\ | \ v_s+\min\{\gamma_1,0\}+\min\{\gamma_2,0\}>\epsilon_s \} )\leq P(\{ \epsilon_s \in\mathbb{R}\ | \ v_s+\gamma_1 Y_{s-1}+\gamma_2 Y_{s-2}>\epsilon_s \} )=P(Y_s=1).
\]
The proof of Inequality \ref{eqnAR2P7} follows by a similar argument, and is omitted.
\end{proof}

\textbf{\underline{Analytic verification of Inequalities \ref{eqnAR27}--\ref{eqnAR29}:}} Recall that in the setting of Example \ref{exampDyn2}, we have $v_1=0$, $v_2=4$, $v_3=2$, $\gamma_1=3$, $\gamma_2=-4$, $y_0=y_{-1}=1$.\par
Inequality \ref{eqnAR27} is an instance of Inequality \ref{eqnAR2P5}. Indeed, since $v_2+\max\{0,\gamma_1\}+\gamma_2y_{0}=v_2+\gamma_1+\gamma_2=3$, we have $\Delta_{2,t}^+(1,0)=v_3+\gamma_1-(v_2+\max\{0,\gamma_1\}+\gamma_2y_{0})=5-3>0$. Similar computations show that $\Delta_{2,t}^+(0,1)=-4<0$, $\Delta_{2,t}^+(0,0)=-1<0$ and $\Delta_{2,t}^+(1,1)=-2<0$. \par
Inequality \ref{eqnAR28} is an instance of Inequality \ref{eqnAR2P1}. Indeed, both $\Delta_{1,2}(0)$ and $\Delta_{1,2}(1)$ are positive ($\Delta_{1,2}(0)=(v_2+\gamma_2)-(v_1+\gamma_1+\gamma_2)=1$ and $\Delta_{1,2}(1)=4$).\par
Finally, Inequality \ref{eqnAR29} is an instance of Inequality \ref{eqnAR2P3}. Indeed, $\Delta_{1,3}(0,0)=v_3-(v_1+\gamma_1+\gamma_2)=3$, $\Delta_{1,3}(0,1)=-1$, $\Delta_{1,3}(1,0)=6$ and $\Delta_{1,3}(1,1)=2$.\par

In the following proposition, I now turn to the proof of Inequality \ref{eqnAR26}.

\begin{proposition}
\label{propAR2P2}
Consider model \ref{eqnAR21}, with $T=3$, and suppose that Assumption \ref{eqnAR22} holds. Let $\theta=(\beta,\gamma_1,\gamma_2)$ denote the true parameter value, and suppose that $(y_0,y_{-1},x)\in supp(Y_0,Y_{-1},X)$ is such that 
\begin{equation}
\label{eqnAR2P10}
v_3+\gamma_2\leq v_1+\gamma_1y_0+\gamma_2y_{-1}\leq v_2+\gamma_2y_0\leq v_3+\gamma_1+\gamma_2\leq v_3 \leq v_2+\gamma_1+\gamma_2 y_0 \leq v_3+\gamma_1.
\end{equation}
where, for $t\in [3]$, $v_t:=x_t'\beta$. Then 
\begin{equation}
\label{eqnAR2P11}
\begin{aligned}
P(Y_2=1,Y_3=0| x,y_0,y_{-1})&\leq P(Y_1=0,Y_2=0,Y_3=1| x,y_0,y_{-1})\\
&+P(Y_1=1| x,y_{0},y_{-1}).
\end{aligned}
\end{equation}
\end{proposition}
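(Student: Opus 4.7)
The plan is to establish Inequality \ref{eqnAR2P11} by a pointwise comparison of indicator functions followed by a single use of the conditional stationarity restriction between periods $2$ and $3$. Following Remark \ref{rem2}, I will absorb the fixed effects into the shocks and work with the scalar composite errors $\epsilon_t$. Using the notation of Example \ref{exampDyn2}, set $c_2^0 := v_2 + \gamma_2 y_0$ (the second-period threshold when $Y_1=0$) and $c_3^{d_1 d_2} := v_3 + \gamma_1 d_2 + \gamma_2 d_1$ (the third-period thresholds). The first step will be to extract from the ordering hypothesis \ref{eqnAR2P10} the three inequalities
\[
c_2^0 \le c_3^{00},\qquad c_2^0 \le c_3^{01},\qquad c_2^0 \le c_3^{11},
\]
which are the only consequences of \ref{eqnAR2P10} that will enter the argument.

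Next, I will prove the pointwise bound
\begin{equation}
\label{eqnPlanKey}
\mathbbm{1}\{Y_2=1, Y_3=0\} - \mathbbm{1}\{Y_1=0, Y_2=0, Y_3=1\} - \mathbbm{1}\{Y_1=1\} \;\le\; \mathbbm{1}\{\epsilon_2 < c_2^0\} - \mathbbm{1}\{\epsilon_3 < c_2^0\}
\end{equation}
by case-splitting on the eight possible realizations of $(Y_1, Y_2, Y_3)$. The only case in which the left-hand side is strictly positive is $(0,1,0)$, and there $Y_2=1$ given $Y_1=0$ forces $\epsilon_2 < c_2^0$ while $Y_3=0$ given $(Y_1,Y_2)=(0,1)$ together with $c_3^{01} \ge c_2^0$ forces $\epsilon_3 > c_2^0$, so the right-hand side equals $1$. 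When the left-hand side is zero (cases $(0,0,0)$, $(0,1,1)$, $(1,1,0)$), the bound will follow analogously from $c_3^{00} \ge c_2^0$ or $c_3^{11} \ge c_2^0$; when it is $-1$ (the remaining four cases) it is automatic.

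Finally, I will take the conditional expectation of \ref{eqnPlanKey} given $(Y_0, Y_{-1}, X) = (y_0, y_{-1}, x)$. The right-hand side collapses to $P(\epsilon_2 < c_2^0 \mid y_0, y_{-1}, x) - P(\epsilon_3 < c_2^0 \mid y_0, y_{-1}, x)$, which vanishes by Assumption \ref{eqnAR22} (after integrating over $\lambda$), and rearranging the expectation of the left-hand side yields Inequality \ref{eqnAR2P11}. The hard part is to guess the correct ``witness'' half-line $(-\infty, c_2^0)$: in the language of Section \ref{sectionCond}, this is exactly the support of the dual variable $z^{(2)}$ produced by the MOLP of Example \ref{exampDyn2}, with the other dual coordinate $z^{(1)}$ taken to be constant, which explains why only the period-$2$-versus-period-$3$ marginal equality is ever invoked and no joint-distribution property of the shocks is used.
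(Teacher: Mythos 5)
Your proof is correct, and it takes a genuinely different and substantially shorter route than the paper's. The paper argues by inclusion--exclusion: it writes $P(Y_2=1,Y_3=0)=\circled{a}+\circled{b}$, swaps the roles of periods $2$ and $3$ inside $\circled{b}$ via a three-variable invariance identity that uses the equality of all three marginals, and after expanding into the terms $\circled{c}$ through $\circled{p}$ closes the argument with the chain $\circled{j}+\circled{k}\le P(\epsilon_3>a_3)\le \circled{n}+\circled{p}+\circled{i}$. You instead prove a single almost-sure bound on indicator functions and integrate. I checked the details: the three order relations $c_2^0\le c_3^{00}$, $c_2^0\le c_3^{01}$ and $c_2^0\le c_3^{11}$ do follow from \ref{eqnAR2P10} (via $v_2+\gamma_2 y_0\le v_3+\gamma_1+\gamma_2\le v_3\le v_3+\gamma_1$); the left-hand side of your pointwise bound equals $1$ only at $(Y_1,Y_2,Y_3)=(0,1,0)$, where $\epsilon_2<c_2^0$ and $\epsilon_3>c_3^{01}\ge c_2^0$ are both forced so that the right-hand side is also $1$; and the three zero cases $(0,0,0)$, $(0,1,1)$, $(1,1,0)$ are disposed of by $c_3^{00}\ge c_2^0$ and $c_3^{11}\ge c_2^0$ exactly as you describe, with the four $-1$ cases trivial. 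Taking conditional expectations then uses the stochastic restriction only through $P(\epsilon_2<c_2^0\mid y_0,y_{-1},x)=P(\epsilon_3<c_2^0\mid y_0,y_{-1},x)$, i.e., equality of the period-$2$ and period-$3$ marginals at the single cutoff $c_2^0=v_2+\gamma_2y_0$ (the paper's $a_2$). Your route is not only shorter but more informative: it isolates exactly which piece of Assumption \ref{eqnAR22} the inequality rests on (one marginal equality at one threshold, with no period-$1$ or joint-distribution information), whereas the paper's derivation invokes marginal equalities across all three periods at several cutoffs and obscures this. The only thing the paper's calculation avoids is having to guess the witness set $(-\infty,c_2^0)$ in advance, and your closing remark correctly identifies where that guess comes from, namely the dual variable $z$ of the MOLP in Example \ref{exampDyn2}.
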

\begin{proof}
Below, all probabilities are computed conditional on $\{X=x,Y_0=y_0,Y_{-1}=y_{-1}\}$ and, as in Remark \ref{rem2}, it is without loss of generality to assume that the fixed effects are identically equal to zero. We have
\begin{align*}
P(Y_2=1,Y_3=0)&=P(Y_1=1,Y_2=1,Y_3=0)+P(Y_1=0,Y_2=1,Y_3=0)\\
&=P(\epsilon_1<a_1,\epsilon_2<a_2+\gamma_1,\epsilon_3>a_3+\gamma_2)+P(\epsilon_1>a_1,\epsilon_2<a_2,\epsilon_3>a_3)\\
&=\circled{a}+\circled{b},
\end{align*}
where, for notational simplicity, $a_1$, $a_2$ and $a_3$ are defined by
\[
a_1=v_1+\gamma_1y_0+\gamma_2y_{-1}, \quad \quad a_2=v_2+\gamma_2y_0, \quad \quad a_3=v_3+\gamma_1,
\]
and note that Inequality \ref{eqnAR2P10} implies that $a_1\leq a_2\leq a_3$.
The second term yields
\begin{align*}
\circled{b}&=P(\epsilon_1>a_1,\epsilon_2<a_2,\epsilon_3>a_3)=P(\epsilon_1>a_1,\epsilon_2>a_3,\epsilon_3<a_2)\\
&+P(\epsilon_1>a_1,\epsilon_2<a_2,\epsilon_3>a_3)-P(\epsilon_1>a_1,\epsilon_2>a_3,\epsilon_3<a_2)\\
&=P(\epsilon_1>a_1,\epsilon_2>a_3,\epsilon_3<a_2)+\circled{c}.
\end{align*}
The first term of $\circled{b}$ can be expanded as follows

\begin{align*}
P(\epsilon_1>a_1,\epsilon_2>a_3,\epsilon_3<a_2)&=P(\epsilon_1>a_1, \epsilon_2>a_2,\epsilon_3<a_3-\gamma_1)-P(\epsilon_1>a_1,a_2<\epsilon_2<a_3,\epsilon_3<a_2)\\
&-P(\epsilon_1>a_1,\epsilon_2>a_2,a_2<\epsilon_3<a_3-\gamma_1)\\
&=P(Y_1=0,Y_2=0,Y_3=1)-\circled{d}-\circled{e}
\end{align*}

By the stationarity restriction,\footnote{Note that since $P(X_1\in A, \{X_2\in B\} \cup \{X_3\in C\})+P(\{X_1 \in A\}\cup \{X_2 \in B\})+P(\{X_1\in A\}\cup \{X_3\in C\})+P(X_1\in A, X_2\in B, X_3\in C)=2P(X_1\in A)+P(X_2\in B)+P(X_3\in C)$, the quantity $P(X_i\in A, \{X_j\in B\} \cup \{X_k\in C\})+P(\{X_i \in A\}\cup \{X_j \in B\})+P(\{X_i\in A\}\cup \{X_k\in C\})+P(X_i\in A, X_j\in B, X_k\in C)$ does not depend on the choice of $i,j,k\in[3]$, if $X_1$, $X_2$ and $X_3$ have the \textit{same marginal distribution} and are arbitrarily coupled. In the present context, the latter invariance yields $P(\epsilon_1>a_1, \epsilon_2<a_2, \epsilon_3>a_3)+P(\epsilon_1>a_1, \{\epsilon_2<a_2\}\cup \{\epsilon_3>a_3\})+P(\{\epsilon_1>a_1\}\cup \{\epsilon_2<a_2\})+P(\{\epsilon_1>a_1\}\cup \{\epsilon_3>a_3\})=P(\epsilon_1>a_1, \epsilon_2>a_3, \epsilon_3<a_2)+P(\epsilon_1>a_1, \{\epsilon_2>a_3\}\cup \{\epsilon_3<a_2\})+P(\{\epsilon_1>a_1\}\cup \{\epsilon_2>a_3\})+P(\{\epsilon_1>a_1\}\cup \{\epsilon_3<a_2\})$.} $\circled{c}$ can be rewritten as
\begin{align*}
\circled{c}&=P(\{\epsilon_1>a_1\}\cup\{\epsilon_3<a_2\})+P(\{\epsilon_1>a_1\}\cup\{\epsilon_2>a_3\})-P(\{\epsilon_1>a_1\}\cup\{\epsilon_2<a_2\})\\
&-P(\{\epsilon_1>a_1\}\cup\{\epsilon_3>a_3\})+P(\epsilon_1>a_1,\{\epsilon_2>a_3\}\cup\{\epsilon_3<a_2\})-P(\epsilon_1>a_1,\{\epsilon_2<a_2\}\cup\{\epsilon_3>a_3\}).
\end{align*}
The first four terms of the last expression can be rewritten as
\begin{align*}
&P(\{\epsilon_1>a_1\}\cup\{\epsilon_3<a_2\})+P(\{\epsilon_1>a_1\}\cup\{\epsilon_2>a_3\})-P(\{\epsilon_1>a_1\}\cup\{\epsilon_2<a_2\})\\
&-P(\{\epsilon_1>a_1\}\cup\{\epsilon_3>a_3\})=P(\epsilon_1<a_1,\epsilon_3<a_2)+P(\epsilon_1<a_1,\epsilon_2>a_3)-P(\epsilon_1<a_1,\epsilon_2<a_2)\\
&-P(\epsilon_1<a_1,\epsilon_3>a_3)=\circled{f}+\circled{g}-\circled{h}-\circled{i}.
\end{align*}
Also, since $\circled{f}+\circled{g}=P(\epsilon_1<a_1,\epsilon_2<a_3,\epsilon_3<a_2)+P(\epsilon_1<a_1,\epsilon_2>a_3,\epsilon_3>a_2)+2P(\epsilon_1<a_1,\epsilon_2>a_3,\epsilon_3<a_2)$ and the events $\{\epsilon_1<a_1,\epsilon_2<a_3,\epsilon_3<a_2\}$, $\{\epsilon_1<a_1,\epsilon_2>a_3,\epsilon_3>a_2\}$, $\{\epsilon_1<a_1,\epsilon_2>a_3,\epsilon_3<a_2\}$ and $\{\epsilon_1<a_1,\epsilon_2<a_2+\gamma_1,\epsilon_3>a_3+\gamma_2\}$ are pairwise disjoint (this follows since $a_2<a_3+\gamma_2$ and $a_2+\gamma_1<a_3$) and included in the event $\{\epsilon_1<a_1\}$, we get
\[
\circled{a}+\circled{f}+\circled{g}\leq P(Y_1=1)+P(\epsilon_1<a_1,\epsilon_2>a_3,\epsilon_3<a_2)=P(Y=1)+\circled{j}.
\]
Thus to establish Inequality \ref{eqnAR2P11}, it suffices to show that
\begin{equation}
\label{eqnAR2P12}
P(\epsilon_1>a_1,\{\epsilon_2>a_3\}\cup\{\epsilon_3<a_2\})-P(\epsilon_1>a_1,\{\epsilon_2<a_2\}\cup\{\epsilon_3>a_3\})-\circled{d}-\circled{e}-\circled{h}-\circled{i}+\circled{j}\leq 0.
\end{equation}
We have
\begin{align*}
&P(\epsilon_1>a_1,\{\epsilon_2>a_3\}\cup\{\epsilon_3<a_2\})-P(\epsilon_1>a_1,\{\epsilon_2<a_2\}\cup\{\epsilon_3>a_3\})\\
&=P(\epsilon_1>a_1,\epsilon_2>a_3)-P(\epsilon_1>a_1,\epsilon_2<a_2)+P(\epsilon_1>a_1,\epsilon_2<a_3,\epsilon_3<a_2)\\
&-P(\epsilon_1>a_1,\epsilon_2>a_2,\epsilon_3>a_3)=\circled{k}-\circled{l}+\circled{m}-\circled{n},
\end{align*}
and Inequality \ref{eqnAR2P12} is equivalent to 
\[
\circled{j}+\circled{k}+\circled{m}\leq \circled{d}+\circled{e}+\circled{h}+\circled{i}+\circled{l}+\circled{n}.
\]
I first show that $\circled{m}\leq \circled{d}+\circled{h}+\circled{l}$. Indeed, $\circled{h}+\circled{l}=P(\epsilon_1<a_1,\epsilon_2<a_2)+P(\epsilon_1>a_1,\epsilon_2<a_2)=P(\epsilon_2<a_2)$ and $\circled{m}-\circled{d}=P(\epsilon_1>a_1,\epsilon_2<a_3,\epsilon_3<a_2)-P(\epsilon_1>a_1,a_2<\epsilon_2<a_3,\epsilon_3<a_2)=P(\epsilon_1>a_1, \epsilon_2<a_2,\epsilon_3<a_2)$, since $a_3\geq a_2$. Hence
\begin{align*}
\circled{m}- \circled{d}-\circled{h}-\circled{l}&=P(\epsilon_1>a_1, \epsilon_2<a_2,\epsilon_3<a_2)-P(\epsilon_2<a_2)\\
&=-P(\epsilon_2<a_2,\{\epsilon_1<a_1\}\cup \{\epsilon_3>a_2\})\\
&=-P(\epsilon_1<a_1,\epsilon_2<a_2)-P(\epsilon_1>a_1,\epsilon_2<a_2,\epsilon_3>a_2)=-\circled{o}-\circled{p},
\end{align*}
and Inequality \ref{eqnAR2P12} is equivalent to
\begin{equation}
\label{eqnAR2P13}
\circled{j}+\circled{k}\leq \circled{e}+\circled{i}+\circled{n}+\circled{o}+\circled{p}.
\end{equation}
We have 
\begin{equation}
\label{eqnAR2P14}
\circled{j}+\circled{k}=P(\epsilon_1<a_1,\epsilon_2>a_3,\epsilon_3<a_2)+P(\epsilon_1>a_1,\epsilon_2>a_3)\leq P(\epsilon_2>a_3)=P(\epsilon_3>a_3)
\end{equation}
where the last equality follows from the stationarity restriction. Also, $\circled{n}+\circled{p}=P(\epsilon_1>a_1,\epsilon_2>a_2,\epsilon_3>a_3)+P(\epsilon_1>a_1,\epsilon_2<a_2,\epsilon_3>a_2)\geq P(\epsilon_1>a_1,\epsilon_3>a_3)$ (since $a_2\leq a_3$), and the latter implies that
\begin{equation}
\label{eqnAR2P15}
\circled{n}+\circled{p}+\circled{i}\geq P(\epsilon_1>a_1,\epsilon_3>a_3)+P(\epsilon_1<a_1,\epsilon_3>a_3)=P(\epsilon_3>a_3).
\end{equation}
Therefore, combining inequalities \ref{eqnAR2P14} and \ref{eqnAR2P15} yields
\[
\circled{j}+\circled{k}\leq P(\epsilon_3>a_3) \leq \circled{n}+\circled{p}+\circled{i},
\]
and Inequality \ref{eqnAR2P13} follows.
\end{proof}

\subsection{Appendix B (\textbf{Proof of Theorem \ref{thmIntE}})}
In this section I give a proof of Theorem \ref{thmIntE}, starting first with the exchangeable case (Assumption \ref{assumptionE}). I present the proof 
under the assumption of stationarity further below.\par
Let us assume for simplicity that the index function differences for the $D$ alternatives are distinct (a similar argument can be made if there are ties among them), and assume (after relabeling alternatives is necessary) that 
\begin{equation}
\label{eqnPatchIneq}
\Delta v_1<\Delta v_2<\cdots<\Delta v_D.
\end{equation}

From Remark \ref{rem3}, the set F of patches is given by $F:=\{(d,d') \in [D]\times [D] \ | \ d\leq d'\}$. Let $F^c$ denote the complement of $F$ in $[D]\times [D]$. 
\subsubsection{Exchangeable case}
The following claim gives an alternative description of the polytope ${\cal Q}_E$ using its  representation given by Equation \ref{eqnRep}.
\begin{claim}
\label{clB1}
The polytope ${\cal Q}_E$ is the set of all $y\in \mathbb{R}^{D^2}$ (with components denoted by $y_{d,d'}$, with $d,d'\in D$) such that:
\begin{equation}
\label{eqnc11}
y_{c,d} \geq -1 \quad \quad \forall c,d \in [D]
\end{equation}
and 
\begin{equation}
\label{eqnc13}
y_{c,d'}+y_{c',d}\leq 0 \quad \forall c,c',d,d' \in [D], \ s.t. \ c\leq d\  and \  c'\leq d'.
\end{equation}
Moreover, Inequality \ref{eqnc13} implies
\begin{equation}
\label{eqnc12}
y_{c,d}\leq 0 \quad \quad \forall (c,d) \in F.
\end{equation}

\end{claim}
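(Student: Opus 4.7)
The plan is to unpack the definitions of $A$, $P_E$, and the representation in \ref{eqnRep} in the $T=2$ setting, then read off the three stated inequality systems directly. Under the ordering \ref{eqnPatchIneq}, Remark \ref{rem3} identifies the set $F$ of patches with $\{(c,d)\in[D]^2:c\leq d\}$, with the interpretation that a shock vector in patch $(c,d)$ induces choice $c$ at period $1$ and choice $d$ at period $2$. A rectangular region is then a pair of patches $((c_1,d_1),(c_2,d_2))$, and the realized choice sequence in that region is $(c_1,d_2)$: the period-$1$ choice comes from the period-$1$ patch, and the period-$2$ choice comes from the period-$2$ patch. Consequently, the entry of $A$ in row $(a,b)\in[D]^2$ and column $((c_1,d_1),(c_2,d_2))$ equals $1$ iff $a=c_1$ and $b=d_2$, and so
\[
(A^Ty)_{((c_1,d_1),(c_2,d_2))}=y_{c_1,d_2}.
\]

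Next, by the explicit description of $P_E$ recorded just before \ref{eqnRep}, the rows of $P_E$ are the vectors $\delta_{(f,f)}$ for $f\in F$ together with $\delta_{(f,f')}+\delta_{(f',f)}$ for $f<f'$. Pairing these rows against $A^Ty$, the constraint $P_EA^Ty\leq 0$ unpacks into
\[
y_{c_1,d_2}+y_{c_2,d_1}\leq 0\qquad \text{for all } (c_1,d_1),(c_2,d_2)\in F,
\]
with the diagonal choice $(c_1,d_1)=(c_2,d_2)$ giving $2y_{c,d}\leq 0$. After relabeling, this is precisely Inequality \ref{eqnc13}, which also immediately yields \ref{eqnc12} by taking $c=c'$ and $d=d'$.

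It remains to translate the $\|y\|_\infty\leq 1$ constraint. The lower bound $y_{c,d}\geq -1$ is \ref{eqnc11}. I would then show that the upper bound $y_{a,b}\leq 1$ is redundant: if $(a,b)\in F$, specializing \ref{eqnc13} with $c=c'=a$, $d=d'=b$ yields $y_{a,b}\leq 0\leq 1$; if $(a,b)\notin F$, the pairs $(c,d)=(a,a)$ and $(c',d')=(b,b)$ are both in $F$, so \ref{eqnc13} produces $y_{a,b}+y_{b,a}\leq 0$, and combining with \ref{eqnc11} applied to $y_{b,a}$ gives $y_{a,b}\leq 1$. Therefore ${\cal Q}_E$ coincides with the polytope described in the claim.

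The only potential source of difficulty is the bookkeeping of patches, regions and choice sequences, and in particular the asymmetric role of the first and second coordinates of a patch; once the entries of $A$ and $A^Ty$ are computed correctly, every remaining step is a routine identification. I therefore expect no substantive obstacle beyond carefully setting up the indexing in the first paragraph.
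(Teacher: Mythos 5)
Your proposal is correct and follows essentially the same route as the paper: unpack the representation \ref{eqnRep} of ${\cal Q}_E$ via the rows of $P_E$ and the entries of $A^Ty$ to obtain \ref{eqnc13} (hence \ref{eqnc12}), keep the lower bound \ref{eqnc11} from $\|y\|_\infty\leq 1$, and show the upper bound $y_{a,b}\leq 1$ is redundant. The only cosmetic difference is in that last step, where you pair $(a,a)$ with $(b,b)$ to get $y_{a,b}+y_{b,a}\leq 0$ while the paper pairs against $y_{1,D}$; both are immediate consequences of \ref{eqnc13} together with \ref{eqnc11}.
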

\begin{proof}
From \ref{eqnRep} we have ${\cal Q}_E=\{y | \ P_E A^T y\leq 0, \ ||y||_{\infty}\leq 1\}$, where each row the matrix $P_E$ corresponds to an extreme point of the set of exchangeable distributions on the set of rectangular regions ${\cal R}$. Such distributions are determined (up to scale) by vectors of the type $q^{(f,f')}=\delta_{(f,f')}+\delta_{(f',f)}$ where $f,f'\in [F]$, and $\delta_{(f,f')}\in \mathbb{R}^{|F|^2}$ is the vector with all entries equal to zero, except for an entry of 1 in the position corresponding to the region $R=f\times f'$. The corresponding inequality ${q^{(f,f')}}^T A ^T y \leq 0$ can be written as (for $f=(c,d)$ and $f'=(c',d')$):
\begin{equation}
\label{eqnc1p1}
y_{c,d'}+y_{c',d}\leq 0
\end{equation}
which yields Inequality \ref{eqnc13} (recall from the construction of the matrix $A$ that each row corresponding to a choice sequence $(d,d')$ will have an entry of 1 in each column corresponding to a region $f_1\times f_2$ such that $f_1,f_2 \in [F]$ with $f_1=(d,d_1)$ and $f_2=(d_2,d')$, for some $d_1,d_2\in [D]$). Hence the inequalities in $P_E A^T y\leq 0$ are equivalent to the inequalities in \ref{eqnc13}. That Inequality \ref{eqnc12} follows from Inequality \ref{eqnc13} is immediate. That Inequality \ref{eqnc11} holds for all elements of ${\cal Q}_E$, follows from the fact that all elements of ${\cal Q}_E$ satisfy $\|y\|_{\infty}\leq 1$. It thus remains to show that the inequalities 
\begin{equation}
\label{eqnc1p2}
y_{c,d}\leq 1 \quad \quad \forall c,d\in [D]
\end{equation}
are redundant given inequalities \ref{eqnc11} and \ref{eqnc13}. When $c\leq d$, \ref{eqnc1p2} follows from \ref{eqnc12}. Assume for a contradiction that $y_{c,d}>1$ for some $c>d$. Then Inequality \ref{eqnc13} implies that 
\[
y_{c,d}+y_{1,D}\leq 0
\]
which can only hold if $y_{1,D}<-1$, contradicting the constraint \ref{eqnc11}.
\end{proof}

\begin{lemma}
\label{lemB1}
Let ${\cal Q}$ be a polytope given by 
\[
{\cal Q}=\{y \in \mathbb{R}^n\  | \ Ay\leq 0, \ a\leq y\leq b\}
\]
Where the matrix $A$ ($A\in \mathbb{R}^{m\times n}$) has exactly two entries equal to 1 in each row (with all other entries equal to zero), and the vectors $a$ and $b$ (both in $\mathbb{R}^n$) are integral (i.e, all their entries are integers). Then ${\cal Q}$ is integral (i.e., all of its extreme points are integral)
\end{lemma}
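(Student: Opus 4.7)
The plan is to argue by contradiction: suppose $y^* \in {\cal Q}$ is an extreme point with at least one non-integer coordinate, and exhibit a nonzero $\delta \in \mathbb{R}^n$ such that $y^* \pm \varepsilon \delta \in {\cal Q}$ for all sufficiently small $\varepsilon > 0$, contradicting extremality. Let $S := \{i \in [n] : y_i^* \notin \mathbb{Z}\}$, which is nonempty by assumption. Since $a$ and $b$ are integral while $y_i^*$ is non-integer for $i \in S$, the box constraints $a_i \leq y_i \leq b_i$ are strictly slack at $y^*$ for every $i \in S$; only the row constraints and the box constraints at indices $i \notin S$ can be tight at $y^*$.

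Next, I would identify each row of $A$ with the unordered pair $\{j,k\} \subseteq [n]$ indexing its two 1's, so that the row constraint reads $y_j + y_k \leq 0$. Let $H$ be the undirected graph on vertex set $S$ whose edges are the pairs $\{j,k\} \subseteq S$ corresponding to rows that are tight at $y^*$ (i.e., $y_j^* + y_k^* = 0$). A key preliminary observation is that no tight row can have exactly one endpoint in $S$: if $j \in S$ and $k \notin S$, then $y_j^* = -y_k^* \in \mathbb{Z}$, contradicting $j \in S$. Hence every tight row has both endpoints in $S$ (contributing an edge to $H$) or both endpoints outside $S$.

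The crux of the argument is then to show that $H$ is bipartite. Suppose for contradiction that $v_1 - v_2 - \cdots - v_{2\ell+1} - v_1$ is an odd cycle in $H$. Chaining the tightness equalities $y_{v_i}^* = -y_{v_{i+1}}^*$ around the cycle yields $y_{v_1}^* = -y_{v_2}^* = y_{v_3}^* = \cdots = y_{v_{2\ell+1}}^* = -y_{v_1}^*$, forcing $y_{v_1}^* = 0 \in \mathbb{Z}$ and contradicting $v_1 \in S$. Given bipartiteness, I would properly 2-color each connected component of $H$ with colors $\pm 1$ and define $\delta \in \{-1,0,+1\}^n$ by assigning the chosen color to each vertex of $H$ and $\delta_i = 0$ for $i \notin S$; by construction $\delta_j + \delta_k = 0$ for every edge of $H$.

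Finally, I would verify $y^* \pm \varepsilon \delta \in {\cal Q}$ for all sufficiently small $\varepsilon > 0$: the box constraints survive because $\delta_i \neq 0$ only on $S$, where the box is strictly slack; each tight row with both endpoints in $S$ stays tight since $\delta_j + \delta_k = 0$; each tight row with both endpoints outside $S$ is unchanged since $\delta_j = \delta_k = 0$; and non-tight rows remain satisfied by continuity. Then $y^* = \tfrac12(y^* + \varepsilon \delta) + \tfrac12(y^* - \varepsilon \delta)$ is a strict midpoint of two distinct points of ${\cal Q}$, contradicting extremality. The main obstacle is precisely the bipartiteness step, which is where the hypothesis "exactly two 1's per row" is essential — a third 1 in any row would force linear relations of the form $\delta_j + \delta_k + \delta_\ell = 0$ that generally cannot be solved in $\{-1,0,+1\}^n$ — but the odd-cycle computation handles the two-1 case in a single line.
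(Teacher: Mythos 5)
Your proof is correct, and it takes a genuinely different route from the paper's. The paper invokes the rank lemma to write an extreme point as the unique solution of a system $A_{I_1}y=0$, $y_{I_2}=a_{I_2}$, $y_{I_3}=b_{I_3}$, builds a graph on all of $[n]$ from the tight rows, propagates integrality outward from the coordinates pinned at $a$ or $b$, and then argues that the coordinates unreachable from those pinned ones must vanish by uniqueness of the solution to a homogeneous subsystem. You instead use the perturbation characterization of extreme points: you restrict attention to the fractional support $S$, observe that the box constraints are strictly slack there and that no tight row can straddle $S$ and its complement, and reduce everything to showing the tight-row graph on $S$ is bipartite --- which your odd-cycle computation settles in one line, since chaining $y_j^*+y_k^*=0$ around an odd cycle forces the value $0$. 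The resulting $\pm 1$ two-coloring gives the perturbation direction $\delta$. Your argument is more self-contained (no rank lemma needed) and it isolates exactly why the homogeneous right-hand side matters: with $Ay\leq c$ for integral $c\neq 0$ an odd tight cycle forces the half-integral value $c/2$ rather than $0$, which is precisely the counterexample the paper records in the remark following the lemma. The paper's propagation argument, on the other hand, yields slightly more structural information about which coordinates are forced to equal $a_i$, $b_i$, or $0$. One small point worth making explicit in a write-up: "exactly two entries equal to $1$ in each row" means two distinct columns, so each row really does define an edge $\{j,k\}$ with $j\neq k$; in the application the degenerate inequalities of the form $2y_{c,d}\leq 0$ are absorbed into the box bounds, so this reading is the intended one.
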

\begin{proof}
Let $y$ be an extreme point of ${\cal Q}$. Then by the \textit{rank lemma} (see Theorem 5.7 in \cite{AS}), there exists sets $I_1\subseteq [m]$ and $I_2, I_3\subseteq [n]$ (with some sets possibly empty), such that $|I_1|+|I_2|+|I_3|=n$, and the point $y$ is the unique solution of the system of equations
\begin{equation}
\label{eqnl1p1}
A_{I_1}y=0 \quad y_{I_2}=a_{I_2} \quad y_{I_3}=b_{I_3}
\end{equation}
where $A_{I}$ represents the $ |I|-by-n$ matrix formed by the rows of $A$ with index in $I$, and $a_{I}$ represents the $|I|$ dimensional vector obtained by restricting the vector $a$ to its components with indices in $I$. I now show that the solution of the system \ref{eqnl1p1} is integral, under the assumptions of the Lemma.\par
Let $G=(V,E)$ be the graph with vertex set $V=[n]$, and edge set $E$ consisting of all $(i,j)$ (with $i,j\in[n]$) such that there exists a row of the matrix $A_{I_1}$ with a 1 in both the $i^{th}$ and $j^{th}$ column. For $i\in[n]$, let $e_i\in \mathbb{R}^n$ be the vector with an entry of 1 in the $i^{th}$ position, with all other entries equal to 0. Since the system \ref{eqnl1p1} has a unique solution, and $|I_1|+|I_2|+|I_3|=n$, the rows of $A_{I_1}$ and the vectors $e_j$, with $j\in I_2 \cup I_3$, must form a linearly independent set. In particular, no edge in the graph $G$ connects any $i\in I_2$ to any $j \in I_3$; if such an edge exists, then there is a row of $A_{I_1}$, say its $k^{th}$ row, with 1 in the $i^{th}$ and $j^{th}$ position, and we necessarily have that the $k^{th}$ row of $A_{I_1}$ and the vectors $e_i$ and $e_j$ are linearly dependent. Let $V_0\subseteq [n]$ be given by $V_0=I_2\cup I_3$. Note that if a vertex $i\in[n]$ is not in $V_0$, $i$ must be incident on an edge in $E$; otherwise, the system \ref{eqnl1p1} puts no restriction on the $i^{th}$ component of $y$, and there cannot be a unique solution. The solution $y$ of the system \ref{eqnl1p1} is such that $y_i$ is an integer for all $i\in V_0$ ($y$ satisfies  $y_{I_2}=a_{I_2}$ and $ y_{I_3}=b_{I_3}$). Let $V_1\subseteq [n]$ denote all vertices of the graph G that are reachable from the vertices in $V_0$; that is $V_1$ denotes the set of all vertices $i\in V$ such that there exists a path (in G) from i to a vertex in $V_0$. Arguing by induction on the distance from $v$ to $V_0$, we must have $y_i\in \mathbb{Z}$ for all $i\in V_1$; for instance, all elements of $i\in V_1$ at a distance at most 1 to $V_0$ are either in $V_0$ (and the corresponding entry of $y$ is an integer), or $i \in V_1\backslash V_0$ and there exists $j\in V_0$ such that $(i,j)\in E$, and the latter implies that $y_i+y_j=0$, hence $y_i \in \mathbb{Z}$.\par
Let $V_3=[n]\backslash V_1$ be the remaining vertices. Then since each element of $[n]\backslash V_0$ is incident on an edge in E (see the argument in the preceding paragraph), there exists $I_4\subseteq I_1$ such that the only restrictions on the elements of $V_3$ are from the equations $A_{I_4}y=0$, and since the solution to \ref{eqnl1p1} is unique, we must have $y_i=0$ for all $i\in V_3$. 
\end{proof}

\begin{remark}
Note that if we replace the constraint $Ay\leq 0$ by the constraint $Ay\leq c$ for some integral vector $c\neq 0$, then the conclusion of Lemma \ref{lemB1} is no longer valid. Take 
\[
A=\begin{pmatrix}
1&1&0\\
1&0&1\\
0&1&1
\end{pmatrix},
\quad c=\begin{bmatrix} 1\\1\\1 \end{bmatrix},
\quad a=\begin{bmatrix} 0\\0\\0 \end{bmatrix},
\quad \text{and}
\quad b=\begin{bmatrix} 1\\1\\1 \end{bmatrix}.
\]
Then $y=(0.5,0.5,0.5)'$ is a non-integral extreme point of ${\cal Q}$ (it is the unique solution to the equation $Ay=c$, and it satisfies the other restrictions $a\leq y \leq b$. See Theorem 5.7 in \cite{AS}). If one tries to prove integrality of ${\cal Q}$ following the reasoning of the proof of Lemma \ref{lemB1}, then the argument in the last paragraph, concerning the integrality of $y_i$ for $i\in V_3$,  will no longer be valid.\par
The validity of Lemma \ref{lemB1} can be "appreciated" numerically using Proposition \ref{propEG}.
\end{remark}
\begin{proposition}
\label{propB0}
The polytopes ${\cal Q}_E$ are integral (i.e., their extreme points are vectors with entries in $\{0,\pm 1\}$).
\end{proposition}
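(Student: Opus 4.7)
The plan is to apply Lemma \ref{lemB1} directly to the explicit description of $\mathcal{Q}_E$ provided in Claim \ref{clB1}. The main task is therefore to recast the constraints of Claim \ref{clB1} in the form $\{y : A y \leq 0,\ a \leq y \leq b\}$ with $a, b$ integral and each row of $A$ having exactly two entries equal to $1$.

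For the box constraints, I will set $a_{c,d} = -1$ for every $(c,d) \in [D]^2$ (encoding \eqref{eqnc11}) and define $b_{c,d} = 0$ if $c \leq d$ and $b_{c,d} = 1$ otherwise. The upper bounds $y_{c,d} \leq 1$ for $c > d$ are redundant (as already noted at the end of the proof of Claim \ref{clB1}) but harmless to include, while the bounds $y_{c,d} \leq 0$ for $c \leq d$ will absorb precisely the \emph{diagonal} cases of \eqref{eqnc13}, namely those with $(c, d') = (c', d)$. A short check shows that this equality can only occur when $c = c'$ and $d = d'$, in which case \eqref{eqnc13} reads $2 y_{c,d} \leq 0$, i.e., $y_{c,d} \leq 0$ with $c \leq d$.

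For the linear constraints, I will let the rows of $A$ be indexed by quadruples $(c, c', d, d')$ satisfying $c \leq d$, $c' \leq d'$, and $(c, d') \neq (c', d)$, placing a $1$ in the columns corresponding to $y_{c,d'}$ and $y_{c',d}$ and zeros elsewhere. By the previous observation, $A y \leq 0$ then encodes exactly the non-diagonal cases of \eqref{eqnc13}, and each row of $A$ has exactly two $1$s by construction. With $\mathcal{Q}_E$ written in this form, Lemma \ref{lemB1} applies directly and yields integrality; combined with $-1 \leq y_{c,d} \leq 1$, every extreme point lies in $\{0, \pm 1\}^{D^2}$. I do not foresee any substantive difficulty: the only care needed is the separate treatment of the diagonal constraints, which is purely bookkeeping, since the combinatorial heart of the argument --- passing from the two-entry-per-row structure to integrality of extreme points --- is already handled in Lemma \ref{lemB1}.
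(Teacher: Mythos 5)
Your proposal is correct and follows essentially the same route as the paper: the paper likewise splits the constraints of Claim \ref{clB1} into the off-diagonal instances of \eqref{eqnc13} (each giving a row of $A$ with exactly two unit entries) and the diagonal instances $(c,d')=(c',d)$, which collapse to $y_{c,d}\leq 0$ for $c\leq d$ and are absorbed, together with \eqref{eqnc11}, into integral box constraints $a\leq y\leq b$, after which Lemma \ref{lemB1} applies. Your explicit bookkeeping of the diagonal case and of the (redundant but harmless) upper bounds $y_{c,d}\leq 1$ matches the paper's argument.
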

\begin{proof}
The proof follows directly from Claim \ref{clB1} and \ref{lemB1}. Indeed the Inequalities \ref{eqnc13}, for all $c,c',d,d'\in[D]$ s.t. $c\neq c'$ or $d\neq d'$, can written in matrix form as $Cy\leq 0$, for some matrix $C$ that has each row containing exactly two entries equal to 1, and the remaining inequalities in the Claim \ref{clB1} can be written in the form $a\leq y \leq b$, where the vector $a$ has all entries equal $-1$, and $b$ has all entries equal to $0$ or $1$ (recall that from the proof of Claim \ref{clB1}, the inequalities $y_{c,d}\leq 1$, for all $c,d\in [D]$, are redundant given Inequalities \ref{eqnc11} and \ref{eqnc13}). 
\end{proof}
\begin{proposition}
All undominated extreme points $y$ of ${\cal Q}_E$ satisfy:
 \begin{equation}
\label{eqnpB1}
y_{c,d}\in \{0,1\} \quad\quad \forall \ (c,d) \in F^c
\end{equation}
 \begin{equation}
\label{eqnpB2}
y_{c,d}\in \{-1,0\} \quad\quad \forall \ (c,d) \in F
\end{equation}
and
 \begin{equation}
\label{eqnpB3}
y_{c,d}=-y_{d,c}  \quad\quad \forall \ c,d \in[D].
\end{equation}
As a consequence of \ref{eqnpB3}, all undominated extreme points of ${\cal Q}_E$ have rank 0.
\end{proposition}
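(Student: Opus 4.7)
The plan is to dispose of item \ref{eqnpB2} directly from what is already in hand, and then to establish items \ref{eqnpB1} and \ref{eqnpB3} jointly by combining Theorem \ref{thmIntE} with a global rank bound. Concretely, for item \ref{eqnpB2} I would read off Claim \ref{clB1}, inequality \ref{eqnc12}, the bound $y_{c,d}\le 0$ on $F$, intersect it with the integrality $y_{c,d}\in\{0,\pm 1\}$ provided by Proposition \ref{propB0}, and conclude $y_{c,d}\in\{-1,0\}$ on $F$ for every extreme point (hence in particular every undominated one).

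For items \ref{eqnpB1} and \ref{eqnpB3}, the idea is to extract from Claim \ref{clB1} a family of antisymmetry constraints. Specializing \ref{eqnc13} to the two diagonal patches $f=(c,c)\in F$ and $f'=(d,d)\in F$ yields, for every pair $c,d\in[D]$, the inequality $y_{c,d}+y_{d,c}\le 0$; in particular $2y_{c,c}\le 0$. Summing these over all unordered pairs would give
\[
\mathbbm{1}^T y \;=\; \sum_{c\in[D]} y_{c,c} \;+\; \sum_{c<d}\bigl(y_{c,d}+y_{d,c}\bigr) \;\le\; 0 \qquad \forall\, y\in {\cal Q}_E.
\]
Since $y=0$ lies in ${\cal Q}_E$ (all constraints of Claim \ref{clB1} are trivially satisfied) and has rank $0$, the maximal rank on ${\cal Q}_E$ equals exactly $0$. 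Theorem \ref{thmIntE} then forces every undominated extreme point of ${\cal Q}_E$ to have rank $0$, which is the stated corollary. Because rank $0$ is expressed as a sum of nonpositive terms, each summand must vanish: $y_{c,c}=0$ for all $c$, and $y_{c,d}+y_{d,c}=0$ for every pair $c<d$. This is precisely item \ref{eqnpB3}.

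Item \ref{eqnpB1} then drops out as a consequence: for $(c,d)\in F^c$ one has $(d,c)\in F$, so item \ref{eqnpB3} combined with item \ref{eqnpB2} yields $y_{c,d}=-y_{d,c}\in -\{-1,0\}=\{0,1\}$, as claimed.

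The main step that requires some thought is the observation that the antisymmetry inequalities $y_{c,d}+y_{d,c}\le 0$ are already encoded in Claim \ref{clB1} once one selects the right diagonal pair of patches; after that, the only substantive input is Theorem \ref{thmIntE}, and the rest is bookkeeping. I do not expect to need any explicit construction of a dominating $y'$ for dominated points, nor the network-flow techniques alluded to in Remark \ref{rem4} — those are already packaged inside Theorem \ref{thmIntE} and Proposition \ref{propB0}.
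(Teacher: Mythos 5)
There is a genuine gap, and it is a circularity. The proposition you are asked to prove is itself a step in the paper's proof of Theorem \ref{thmIntE} (Appendix B is titled ``Proof of Theorem \ref{thmIntE}'', and the claim that all undominated extreme points have the same, maximal rank in the exchangeable case is established precisely \emph{via} equation \ref{eqnpB3} of this proposition). So you cannot invoke Theorem \ref{thmIntE} to conclude that every undominated extreme point has rank $0$; that is the very statement being built here. Your treatment of \ref{eqnpB2} is fine (it matches the paper), and your observation that specializing \ref{eqnc13} to the diagonal patches $(c,c)$ and $(d,d)$ yields $y_{c,d}+y_{d,c}\le 0$ for all $c,d$ — hence $\mathbbm{1}^Ty\le 0$ on ${\cal Q}_E$ and maximal rank exactly $0$ — is correct and genuinely useful. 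But it only bounds the rank from above; it does not show that undominated points attain that bound.

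The missing step is exactly the one you declared you would not need: undominatedness does \emph{not} in general imply maximal rank. A point is undominated iff it maximizes $w^Ty$ for \emph{some} $w>0$ (Footnote \ref{foot13}), not necessarily for $w=\mathbbm{1}$; a polytope such as the triangle with vertices $(0,0)$, $(1,-2)$, $(-2,1)$ has the undominated extreme point $(1,-2)$ of rank $-1<0$. To rule this out for ${\cal Q}_E$ one must show that whenever some antisymmetry inequality is slack at $y$ (e.g. $y_{i,i}<0$, or $y_{i,j}=-1$ with $i>j$, or $y_{i,j}=-1$ with $i<j$ but $y_{j,i}\neq 1$), there is a feasible $y'\ge y$, $y'\neq y$. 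This is the content of the paper's proof, and the last case is not bookkeeping: one first shows that $y_{i,j}=-1$ with $i<j$ forces the existence of $c,d$ with $i\le d<c\le j$ and $y_{c,d}=1$ (otherwise raising $y_{i,j}$ to $0$ is feasible), and then uses that to verify that raising $y_{j,i}$ to $1$ preserves all constraints of type \ref{eqnc13}. Your rank-decomposition argument would be a clean way to finish \emph{after} these domination arguments establish \ref{eqnpB3}, or as an independent proof that the maximal rank equals $0$, but it cannot replace them.
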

\begin{proof}
Let $\tilde{y}$ denote an undominated extreme point of ${\cal Q}_E$. Then there exists a $w>0$ s.t. $\tilde{y}=argmax\{w^Ty \ | \ y\in {\cal Q}_E\}$ (see Footnote \ref{foot13}). Inequality \ref{eqnpB2} follows from Inequality \ref{eqnc12} and the fact that the extreme points of ${\cal Q}_E$ are integral. \par
To get inequality \ref{eqnpB1}, let $y \in {\cal Q}_E$ be such that $y_{i,j}=-1$, for some $i,j \in [D]$ such that $i>j$, and define $y'$ by: $y'_{a,b}=y_{a,b}$, $\forall (a,b)\neq (i,j)$, and $y'_{i,j}=0$. I show that $y'$ is feasible and has larger objective value, i.e.,  $y' \in {\cal Q}_E$ and $w^Ty'>w^Ty$ (the latter necessarily follows since $w>0$). Indeed, $y'$ clearly satisfies Inequalities \ref{eqnc11} and \ref{eqnc12}, and also satisfies Inequalities \ref{eqnc13} whenever $(c,d')\neq (i,j)$ and $(c',d)\neq (i,j)$ (in all these cases the restriction on $y'$ coincide with the corresponding restriction on $y$). If we now consider \ref{eqnc13} when $(c,d')=(i,j)$, then for all $c',d$ such that $c'<j$ and $i<d$, we have $(c',d)\in F$ (as $c'<d$) and 
\[
y'_{i,j}+y'_{c',d}= 0+ y_{c',d}\leq 0
\] 
since $y_{c',d}\leq 0$ by \ref{eqnc12}. Thus $y'$ is feasible, and we conclude that any undominated extreme point $\tilde{y}$ must satisfy $\tilde{y}_{i,m}\geq 0$ for all $(i,j)\in F^c$, and since $\tilde{y}$ is integral (by Proposition \ref{propB0}), we have \ref{eqnpB1}.\par
I now prove \ref{eqnpB3}. let $y \in {\cal Q}_E$ be such that $y_{i,i}=-1$, for some $i \in [D]$, and define $y'$ by: $y'_{a,b}=y_{a,b}$, $\forall (a,b)\neq (i,i)$, and $y'_{i,i}=0$. I show that $y'$ is feasible and has larger objective value (the latter immediately follows since $w>0$). Indeed, arguing as in the preceding paragraph, $y'$ clearly satisfies all inequalities of the type \ref{eqnc11} and \ref{eqnc13}, and $y'$ also satisfies all inequalities of the type \ref{eqnc12} such that $(c,d')$ and $(c',d)$ are not equal to $(i,i)$ (in all these cases the restrictions on $y'$ coincides with the corresponding restriction on  $y$). If we now consider inequalities of the type \ref{eqnc12} when $(c,d')=(i,i)$, then for all $c'\leq i$ and $d\geq i$, since $(c',d) \in F$, we have
\[
y'_{i,i}+y'_{c,d'}= 0+ y'_{c',d}\leq 0.
\]
We thus conclude (since \ref{eqnpB2} also holds) that $\tilde{y}_{i,i}=0$ and \ref{eqnpB3} holds for all $(c,d)=(i,i)$. \par

I now prove that $\tilde{y}_{i,j}=1$ with $i>j$ implies that $\tilde{y}_{j,i}=-1$. But this follows directly from \ref{eqnc13} (in conjunction with the integrality of $\tilde{y}$ from Proposition \ref{propB0}), as it implies
\[
\tilde{y}_{i,j}+\tilde{y}_{a,b}\leq 0
\]
for all $a\leq j$ and $i\leq b$ (hence $(a,b)\in F$), and the conclusion follows by taking $(a,b)=(j,i)$.\par

It now remains to show that $\tilde{y}_{i,j}=-1$, for some $i<j$ ($i,j\in [D]$), implies that $\tilde{y}_{j,i}=1$. I first show that it must be the case that there exists $c,d\in[D]$ such that $i\leq d <c \leq j$ such that $\tilde{y}_{c,d}=1$. If not, then letting $y_{a,b}=0$ for $(a,b)=(i,j)$ and $y_{a,b}=\tilde{y}_{a,b}$ otherwise, would yield a feasible improvement over $\tilde{y}$, and thus contradict the optimality of $\tilde{y}$. Indeed, $y$ clearly satisfies all inequalities of the type \ref{eqnc11} and \ref{eqnc12}, as well as all inequalities of the type \ref{eqnc13} where $(c,d')\neq (i,j)$ and $(c',d)\neq (i,j)$. For inequalities of the type \ref{eqnc13} where $(c,d')=(i,j)$, if $i\leq c'\leq d\leq j$, then inequality
\[
y_{i,j}+y_{c',d}\leq 0
\] 
follows from \ref{eqnc12} (since $(i,j), (c',d)\in F$). If $i\leq d<c' \leq j$, then
\[
y_{i,j}+y_{c',d}=0+\tilde{y}_{c',d}=0
\] 
since by assumption $\tilde{y}_{c,d}=0$ for all $i\leq d <c \leq j$. Hence $\tilde{y}_{i,j}=-1$ implies that there exists $c,d\in[D]$ such that $i\leq d <c \leq j$ and $\tilde{y}_{c,d}=1$. We then have 
\begin{equation}
\label{eqnpBp1}
\tilde{y}_{a,b}=-1 \quad \forall \ a\leq d \ \text{and } c\leq b,
\end{equation}
which follows from Inequality \ref{eqnc13}, as it implies that
\[
\tilde{y}_{c,d}+\tilde{y}_{a,b}=1+\tilde{y}_{a,b}\leq 0
\]
for all such $a$ and $b$.  Suppose now that $\tilde{y}_{j,i}\neq 1$. Then letting $y_{a,b}=1$ for $(a,b)=(j,i)$ and $y_{a,b}=\tilde{y}_{a,b}$ otherwise, yields a feasible improvement over $\tilde{y}$, thus contradicting its optimality. Indeed, all inequalities of the type \ref{eqnc11} and \ref{eqnc12} are clearly satisfied, as well as all inequalities of the type \ref{eqnc13} that do not involve the $(j,i)^{th}$ coordinate of $y$. For the inequalities of the type \ref{eqnpB3} involving $y_{j,i}$, it suffices to show that
\[
y_{j,i}+y_{a,b}\leq 0
\]
for all $a,b\in [D]$ such that $a\leq i $ and $j\leq b$. But for such $a$ and $b$, we have $a\leq d$ and $c\leq b$, and \ref{eqnpBp1} yields $y_{a,b}=\tilde{y}_{a,b}=-1$, thus $y_{j,i}+y_{a,b}=1-1=0$.

\end{proof}

\subsubsection{Stationary case}
Under Assumption \ref{eqnPatchIneq}, the set ${\cal R}$ of all the regions (see Section \ref{sectionDiscretization}) is given by ${\cal R}=\{f\times g \ | \ f,g \in F\}$, where $F=\{(d,d') \ | \ d,d' \in [D], \ d\leq d'\}$. In the following claim, I first give an alternative description of the polytopes ${\cal Q}_S$.

\begin{claim}
The polytopes ${\cal Q}_S$ is alternatively given by ${\cal Q}_S=\{y | \ A^T y\leq R_S^T z, \ y\geq -\textbf{1}\}$. Equivalently,  ${\cal Q}_S$ is the set of all $y\in \mathbb{R}^{D^2}$ (indexed by tuples $(a,b)\in [D]\times [D]$) for which there exists a $z \in \mathbb{R}^{|F|}$ (indexed by tuples $(a,b)\in [D]\times [D]$, with $a\leq b$) such that:
\begin{equation}
\label{eqnsc11}
y_{a,b} \geq -1 \quad \quad \forall a,b \in [D]
\end{equation}
and 
\begin{equation}
\label{eqnsc12}
y_{a,b}\leq z_{a,c}-z_{d,b} \ \ \forall a,b,c,d \in [D], \ s.t. \ a\leq c \ and \ d\leq b
\end{equation}
\end{claim}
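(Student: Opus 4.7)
The plan is to verify the claim by unwinding the construction of the matrices $A$ and $R_S$ from Section \ref{sectionDiscretization} to obtain an explicit elementwise form of the constraint $A^T y \leq R_S^T z$, and then to show that the upper bound $y \leq \mathbf{1}$ (part of $\|y\|_\infty \leq 1$) is redundant given the lower bound together with \eqref{eqnsc12}.

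First, I would translate the constraint $A^T y \leq R_S^T z$ into \eqref{eqnsc12}. Under the ordering $\Delta v_1 < \Delta v_2 < \cdots < \Delta v_D$, the set of patches is $F = \{(i,j)\in[D]\times[D] : i\leq j\}$, and a patch $(i,j)\in F$ is the set of shocks that induces choice $i$ at time $1$ and $j$ at time $2$. A region has the form $\textbf{R} = \textbf{f}_1\times \textbf{f}_2$ with $\textbf{f}_1=(a,c), \textbf{f}_2=(d,b)\in F$. By construction of $A$, the entry in row $(d_1,d_2)$ and column $(\textbf{f}_1,\textbf{f}_2)$ is $1$ iff $d_1=a$ and $d_2=b$; hence $(A^T y)_{(\textbf{f}_1,\textbf{f}_2)} = y_{a,b}$. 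By construction of $R_S$ (encoding $\sum_{\textbf{g}} q_{\textbf{f}\times \textbf{g}} = \sum_{\textbf{g}} q_{\textbf{g}\times \textbf{f}}$ for each patch $\textbf{f}$), the corresponding column entry of $R_S^T z$ equals $z_{\textbf{f}_1} - z_{\textbf{f}_2} = z_{a,c} - z_{d,b}$. Running over all regions gives exactly the inequalities $y_{a,b}\leq z_{a,c}-z_{d,b}$ for all $a\leq c$ and $d\leq b$, matching \eqref{eqnsc12}.

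Second, I would show that the upper bound $y_{a,b}\leq 1$ is implied by \eqref{eqnsc11} together with \eqref{eqnsc12}. Given any $(a,b)\in[D]^2$, applying \eqref{eqnsc12} with $c=D$, $d=1$ yields
\[
y_{a,b}\leq z_{a,D}-z_{1,b}.
\]
Applying \eqref{eqnsc12} to the index $(1,D)$ with the choices $c'=b$, $d'=a$ (both admissible since $1\leq b$ and $a\leq D$) gives
\[
y_{1,D}\leq z_{1,b}-z_{a,D}.
\]
Adding these two inequalities cancels the $z$-terms and produces $y_{a,b}+y_{1,D}\leq 0$, so by \eqref{eqnsc11} we conclude $y_{a,b}\leq -y_{1,D}\leq 1$. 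This shows that the upper bound $y\leq \mathbf{1}$ in the definition of ${\cal Q}_S$ is implied by the other constraints, so ${\cal Q}_S = \{y : A^T y\leq R_S^T z \text{ for some } z, \ y\geq -\mathbf{1}\}$, which is the desired alternative description.

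The main obstacle, as in Claim \ref{clB1}, is bookkeeping: one must correctly identify which entry of $A^T y$ and $R_S^T z$ corresponds to each region $(\textbf{f}_1,\textbf{f}_2)$, and then find the precise pair of inequalities from \eqref{eqnsc12} whose $z$-components cancel. Once the indexing is set up, both steps are essentially mechanical.
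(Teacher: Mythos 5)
Your proposal is correct and follows essentially the same route as the paper's own proof: the same unwinding of the $A$ and $R_S$ columns to obtain \eqref{eqnsc12}, and the same pairing of $y_{a,b}\leq z_{a,D}-z_{1,b}$ with $y_{1,D}\leq z_{1,b}-z_{a,D}$ and the lower bound $y_{1,D}\geq -1$ to show the upper bound $y\leq\mathbf{1}$ is redundant. Nothing is missing.
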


\begin{proof}
Recall (see Section \ref{sectionDDCP}) that the polytopes ${\cal Q}_S=\{y | \ A^T y\leq R_S^T z, \ ||y||_{\infty}\leq 1\}$. The inequalities \ref{eqnsc12} are just an alternative way of writing the restrictions $A^Ty\leq R_S^T z$; recall from the definition of the matrix $R_S$ in Section \ref{sectionDiscretization}, that its column corresponding to the region $f\times g \in {\cal R}$ has an entry of 1 in the row that corresponds to the patch f, and an entry of -1 in the row that corresponds to the patch g, and is equal to zero otherwise. Inequalities \ref{eqnsc11} are implied by the restriction $\|y\|_{\infty}\leq 1$ which is equivalent to $-1\leq y_{a,b}\leq 1$ for all $a,b\in [D]$. It remains to show that the inequalities $y_{a,b}\leq 1$ for all $a,b\in [D]$, are redundant given inequalities \ref{eqnsc11} and \ref{eqnsc12}. Indeed, inequalities \ref{eqnsc12} imply that $\forall a,b \in [D]$ we have
\[
-1\leq y_{1,D}\leq z_{1,b}-z_{a,D},
\]
and
\[
y_{a,b}\leq z_{a,D}-z_{1,b}.
\]
Combining the preceding inequalities yields $y_{a,b}\leq 1$.
\end{proof}
To establish the integrality of the polytope ${\cal Q}_S$, I use Proposition \ref{propEG}; more concretely, I will show below that for all $w\in \mathbb{Z}^{D^2}$, $max\{w^Ty\ | \ y\in {\cal Q}_S\}$ is an integer. 
Note that by the (strong) duality theorem of linear programing, we have
\[
\begin{aligned}
\max\{ w^Ty \ | \ A^Ty\leq R_S^Tz ,\ -y\leq \mathbbm{1}\}&= \min\{ u^T\mathbbm{1}\ | \ u,\lambda \geq 0,\ -u+A\lambda=w, \ R_S\lambda=0 \}\\
&=-w^T\mathbbm{1}+\min\{ \mathbbm{1}^TA\lambda \ | \ \lambda \geq 0,\  A\lambda\geq w, \ R_S\lambda=0 \}\\
&=-w^T\mathbbm{1}+\min\{ \mathbbm{1}^T\lambda \ | \ \lambda \geq 0,\  A\lambda\geq w, \ R_S\lambda=0 \}
\end{aligned}
\]
where the last equality follows from the fact that the matrix $A$ has a single non-zero entry equal to 1 in each column, which implies that $\mathbbm{1}^TA$ is a row vector with all entries equal to 1. As the goal is to show that the value of the latter LP is integral whenever $w$ is integral, it suffices to show that the value of the LP 
\begin{equation}
\label{eqnslp}
\min\{ \mathbbm{1}^T\lambda \ | \ \lambda \geq 0,\  A\lambda\geq w, \ R_S\lambda=0 \}
\end{equation}
is integral whenever $w$ is integral (since $-w^T\mathbbm{1}$ is necessarily integral). I establish the integrality of \ref{eqnslp} by relating it to a network flow problem. In technical language, the integrality proof that I give below consists of showing that the system of inequalities that define the polytope ${\cal Q}_S$ is \textit{total dual integral} (see \cite{EG1}). A good reference for the concepts introduced below relating to network flows is \cite{AMO}.\par
Let $G=(V,\mathbb{A})$ be a directed bipartite graph with vertex set $V$ given by $V=V_1\cup V_2$ where $V_i=\{(f,i) \ | \ f\in F\}$ for $i\in[2]$ (i.e., V is composed of two copies of the set of patches F), and the set of arcs $\mathbb{A}=\mathbb{A}_1\cup \mathbb{A}_2$ where 
\[
\mathbb{A}_1=\{(v,v')\ | \ v\in V_1 \ \text{and } v' \in V_2\}
\]
 and 
 \[
 \mathbb{A}_2=\{(v,v') \ | \ v\in V_2, \ v' \in V_1, \ \text{and } v=(f,2), \ v'=(f,1), \text{for some } f \in F\}. 
 \]
 That is arcs in $\mathbb{A}_1$ connect arbitrary vertices in $V_1$ to arbitrary vertices in $V_2$, and arcs in $\mathbb{A}_2$ connect vertices in $V_2$ to their "twin" vertex in $V_1$. Given an arc $a=(v,v') \in \mathbb{A}$, let $T(a)=v$ denote the tail of the arc a, and $H(a)=v'$ denote the head of the arc a. Given a vertex $v\in V$ and a "flow" $\chi$ ($\chi:\mathbb{A}\rightarrow \mathbb{R}$) define the \textit{inflow} at $v$ by 
 \[
 \delta_{\chi}^-(v)=\sum_{\{a\in \mathbb{A}\ | \ H(a)=v\}}\chi(a),
 \]
and define the \textit{outflow} at $v$ by
\[
\delta_{\chi}^+(v)=\sum_{\{a\in \mathbb{A}\ | \ T(a)=v\}}\chi(a).
\]
A \textit{circulation} is a flow that satisfies the flow conservation constraint at each vertex: i.e.,
\[
\delta_{\chi}^+(v)=\delta_{\chi}^-(v)\quad \forall \  v \in V.
\]
\begin{remark}
\label{rems1}
Note that by construction, for each vertex $v\in V_1$ there is a unique incoming arc at v; i.e., there is a unique $a\in \mathbb{A}$ such that $H(a)=v$. Similarly for each vertex $v\in V_2$ there is a unique outgoing arc at v; i.e., there is a unique arc $a\in \mathbb{A}$ such that T(a)=v. Indeed, if $v\in V_2$ is given by $v=(f,2)$ for some $f\in F$, then the unique outgoing arc at v is given by $a=(v,v')$, where $v'=(f,1)$ is the "twin" vertex to $v$ in $V_1$, and the arc a is also the unique incoming arc at $v'$. As a consequence for all flows $\chi$ on $G$, if $v=(f,1)$ and $v'=(f,2)$ (for some $f\in F$), we have $\delta_{\chi}^+(v')=\delta_{\chi}^-(v)$; moreover, if $\chi$ is a circulation, we have 
\begin{equation}
\label{eqnsfc}
\delta_{\chi}^+(v)=\delta_{\chi}^-(v').
\end{equation}
\end{remark}
 Given $f=(d,d')\in F$, let the functions $\Pi_1$ and $\Pi_2$, from $F$ to $[D]$, be defined by $\Pi_1(f)=d$ and $\Pi_2(f)=d'$.\par

Lemma \ref{lems1} below shows that the value of the LP \ref{eqnslp} coincides with that of a minimum cost flow LP (under some "lower (set) capacity constraints")  on the network given by the graph G. And Lemma \ref{lems2} (further below) will also show that the value of the LP \ref{eqnslp} is equal to that of another "simpler" minimum cost flow LP. These lemma makes it possible to leverage results from the study of network flows to show that the value of the LP \ref{eqnslp} is an integer whenever the vector $w$ is integral. Also, as these lemmas essentially state that the values of some LPs coincide, their validity can be ``assessed" numerically by checking that the values of these LPs are in fact equal for a large number of (randomly generated) instances of the LPs under consideration.

\begin{lemma}
\label{lems1}
Let the directed graph $G=(V,\mathbb{A})$ be defined as above. The value of the LP \ref{eqnslp} coincides with that of the LP
\begin{equation}
\label{eqnslp2}
\min_{\chi \in \mathbb{R}^{|\mathbb{A}|}} \sum _{a\in \mathbb{A}_1} \chi(a)
\end{equation}
subject to the constraints
\begin{equation}
\label{eqnsc1}
\chi(a)\geq 0 \quad \forall a \in \mathbb{A}
\end{equation}
\begin{equation}
\label{eqnsc2}
\delta_{\chi}^+(v)=\delta_{\chi}^-(v)\quad \forall v \in V
\end{equation}
and
\begin{equation}
\label{eqnsc3}
\sum_{a \in I_{d,d'}} \chi(a)\geq w_{d,d'} \quad \forall d,d'\in [D]
\end{equation}
where $I_{d,d'}:=\{a=(v,v')\in \mathbb{A}_1\  | \ v=(f,1), \ v'=(g,2), \text{ and } \Pi_1(f)=d, \ \Pi_2(g)=d' \}$. Moreover, as $d$ and $d'$ vary in $[D]$, the sets $I_{d,d'}$ that appear in the lower (set) capacity constraints \ref{eqnsc3} form a partition of $\mathbb{A}_1$.
\end{lemma}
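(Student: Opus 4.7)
\textbf{Proof proposal for Lemma \ref{lems1}.} The plan is to exhibit an objective-preserving bijection between feasible solutions of \ref{eqnslp} and feasible solutions of \ref{eqnslp2}, from which equality of the two LP values follows. The partition statement is a separate (and short) combinatorial observation that I will verify at the end.

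First I would describe the natural correspondence between the variables. Recall that $\lambda$ is indexed by regions $f\times g\in{\cal R}=F\times F$, while $\chi$ is indexed by arcs in $\mathbb{A}=\mathbb{A}_1\cup \mathbb{A}_2$. Given a feasible $\lambda$ for \ref{eqnslp}, define $\chi$ by
\[
\chi((f,1),(g,2)):=\lambda_{f,g}\quad\text{for arcs in }\mathbb{A}_1,\qquad \chi((f,2),(f,1)):=\sum_{g\in F}\lambda_{f,g}\quad\text{for arcs in }\mathbb{A}_2.
\]
Non-negativity of $\chi$ is immediate from $\lambda\geq 0$. The flow conservation constraint \ref{eqnsc2} at $v=(f,1)$ reduces, by Remark \ref{rems1} and the definition of $\mathbb{A}$, to $\chi((f,2),(f,1))=\sum_{g}\chi((f,1),(g,2))$, which holds by construction; at $v=(f,2)$ it reduces to $\chi((f,2),(f,1))=\sum_{g}\chi((g,1),(f,2))=\sum_{g}\lambda_{g,f}$, which is equivalent to $\sum_{g}\lambda_{f,g}=\sum_{g}\lambda_{g,f}$, precisely the stationarity restriction $R_S\lambda=0$ evaluated at the patch $f$. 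The lower set-capacity constraint \ref{eqnsc3} translates, via the definition of $I_{d,d'}$ and the construction of the matrix $A$ in Section \ref{sectionDiscretization} (where the $(d,d')$-row of $A$ has a $1$ in column $f\times g$ iff $\Pi_1(f)=d$ and $\Pi_2(g)=d'$), into $(A\lambda)_{d,d'}=\sum_{(f,g):\Pi_1(f)=d,\,\Pi_2(g)=d'}\lambda_{f,g}\geq w_{d,d'}$, which is the constraint $A\lambda\geq w$ in \ref{eqnslp}. Finally, the objective matches: $\sum_{a\in \mathbb{A}_1}\chi(a)=\sum_{f,g\in F}\lambda_{f,g}=\mathbbm{1}^T\lambda$.

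Conversely, given a feasible $\chi$ for \ref{eqnslp2}, set $\lambda_{f,g}:=\chi((f,1),(g,2))$. Non-negativity and $A\lambda\geq w$ are immediate from \ref{eqnsc1} and \ref{eqnsc3}, together with the same identification of $I_{d,d'}$ with the $(d,d')$-row of $A$. For stationarity, combine the conservation identities at the twin vertices $(f,1)$ and $(f,2)$: conservation at $(f,1)$ gives $\chi((f,2),(f,1))=\sum_{g}\lambda_{f,g}$, conservation at $(f,2)$ gives $\chi((f,2),(f,1))=\sum_{g}\lambda_{g,f}$, so the two right-hand sides agree, which is exactly the restriction $R_S\lambda=0$ at patch $f$. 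Again the objective values match. This establishes that \ref{eqnslp} and \ref{eqnslp2} have equal optimal values.

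For the partition claim, note that every arc in $\mathbb{A}_1$ has the form $a=((f,1),(g,2))$ for a unique pair $(f,g)\in F\times F$. The map $a\mapsto (\Pi_1(f),\Pi_2(g))$ is therefore well defined on $\mathbb{A}_1$, and $a$ lies in $I_{d,d'}$ if and only if $(\Pi_1(f),\Pi_2(g))=(d,d')$; hence the sets $\{I_{d,d'}\}_{d,d'\in[D]}$ partition $\mathbb{A}_1$ (some cells being possibly empty, though under the ordering \ref{eqnPatchIneq} both $\Pi_1$ and $\Pi_2$ are surjective onto $[D]$ and no cell is empty). There is no genuinely difficult step; the main thing to be careful about is matching bookkeeping conventions, in particular that the inflow at $(f,1)$ corresponds to the outflow of $\lambda$ out of patch $f$ (giving the ``$f$-row'' of $R_S$) while the outflow at $(f,2)$ corresponds to the inflow of $\lambda$ into patch $f$, so that stationarity is exactly the twin-vertex conservation identity.
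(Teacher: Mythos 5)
Your proposal is correct and follows essentially the same two-step correspondence as the paper's proof: identify $\lambda_{f\times g}$ with the flow on the $\mathbb{A}_1$-arc $((f,1),(g,2))$, read off $R_S\lambda=0$ from conservation at the twin vertices, $A\lambda\ge w$ from the set-capacity constraints via the structure of the rows of $A$, and match objectives since $\mathbbm{1}^TA\lambda=\mathbbm{1}^T\lambda$. The only cosmetic difference is that you define the $\mathbb{A}_2$-flow as $\sum_g\lambda_{f,g}$ where the paper uses $\sum_g\lambda_{g,f}$; these agree under stationarity, so the argument is unaffected.
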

\begin{proof}
I establish the claim by showing that for each flow $\chi$ that satisfies the constraints \ref{eqnsc1}, \ref{eqnsc2} and \ref{eqnsc3}, there corresponds a vector $\lambda \in \mathbb{R}^{|{\cal R}|}$ such that $\lambda$ satisfies the constraints of the LP \ref{eqnslp}  (and vice versa), i.e.
\begin{equation}
\label{eqnsc4}
\lambda\geq 0
\end{equation}
\begin{equation}
\label{eqnsc5}
R_S \lambda =0
\end{equation}
and
\begin{equation}
\label{eqnsc6}
A\lambda \geq w,
\end{equation}
 and $\sum _{a\in \mathbb{A}_1} \chi(a)=\mathbbm{1}^T\lambda$.\par

\underline{Step 1}. Let $\chi$ be a flow that satisfies constraints \ref{eqnsc1}, \ref{eqnsc2} and \ref{eqnsc3}. For each region $f\times g \in {\cal R}$, associate to it the arc $a=(v,v')\in \mathbb{A}_1$ where $v=(f,1)$, $v'=(g,2)$, and let the corresponding $f\times g^{th}$ entry of the vector $\lambda \in \mathbb{R}^{|{\cal R}|}$ (associated to the flow $\chi$) be defined by $\lambda_{f\times g}=\chi(a)$. We clearly have that $\sum _{a\in \mathbb{A}_1} \chi(a)=\mathbbm{1}^T\lambda$, and that $\lambda$ satisfies constraint \ref{eqnsc4}. I claim that $\lambda$ also satisfies constraints \ref{eqnsc5} and \ref{eqnsc6}. Recall that constraint \ref{eqnsc5} simply encodes the stationarity restriction, and is equivalent to: $\forall$ $f\in F$, we have
\begin{equation}
\label{eqns01}
\sum_{g\in F}\lambda_{f\times g}=\sum_{g\in F}\lambda_{g\times f}.
\end{equation}
Fix $f\in F$, and let $v,v'\in V$ be defined by $v=(f,1)$ and $v'=(f,2)$. By definition of $\lambda$, we have $\delta_{\chi}^+(v)=\sum_{g\in F}\lambda_{f\times g}$ and $\delta_{\chi}^-(v')=\sum_{g\in F}\lambda_{g\times f}$. Since $\chi$ is a circulation (it satisfies \ref{eqnsc2}), equation \ref{eqnsfc} implies that we also have $\delta_{\chi}^+(v)=\delta_{\chi}^-(v')$, and combining these latter observations implies that $\lambda$ satisfies restriction \ref{eqns01}. To show that $\lambda$ satisfies restriction \ref{eqnsc6}, recall that by construction of the matrix A (see Section \ref{sectionDiscretization}), for each $d,d'\in [D]$, there corresponds a row of the matrix $A$ with entries equal to 1 (and equal to 0 otherwise) in each column associated to a region $f\times g$ such that $\Pi_1(f)=d$ and $\Pi_2(g)=d'$. Hence the restriction \ref{eqnsc6} is equivalent to: For each $d,d'\in[D]$, we have
\begin{equation}
\label{eqns02}
\sum_{\{f\times g \in {\cal R} \ | \ \Pi_1(f)=d, \ \Pi_2(g)=d'\}}\lambda_{f\times g} \geq w_{d,d'}.
\end{equation}
But the latter is equivalent to the lower capacity constraint \ref{eqnsc3} by definition of $\lambda$.\par
\underline{Step 2}. Let now $\lambda\in \mathbb{R}^{|{\cal R}|}$ be a vector that satisfies constraints \ref{eqnsc4}, \ref{eqnsc5} and \ref{eqnsc6}. I now associate to it a flow $\chi$ that satisfies constraints \ref{eqnsc1}, \ref{eqnsc2}, \ref{eqnsc3}, and such that the identity $\sum _{a\in \mathbb{A}_1} \chi(a)=\mathbbm{1}^T\lambda$ holds. Let the associated (to $\lambda$) flow $\chi$ be defined by: For each $a=(v,v')\in \mathbb{A}_1$, with $v=(f,1)$ and $v'=(g,2)$, set $\chi(a):=\lambda_{f\times g}$, and for each $a=(v,v')\in \mathbb{A}_2$, with $v=(f,2)$ and $v'=(f,1)$, set $\chi(a):=\sum_{g\in F}\lambda_{g\times f}$. Clearly, by construction, $\chi$ satisfies \ref{eqnsc1} and the identity $\sum _{a\in \mathbb{A}_1} \chi(a)=\mathbbm{1}^T\lambda$. Also, by the definition of $\chi$ on the arcs in $\mathbb{A}_2$, we have that $\delta_{\chi}^+(v)=\delta_{\chi}^-(v)$ for all $v=(f,2)\in V_2$. To establish that $\chi$ satisfies \ref{eqnsc3}, it thus remains to show that $\delta_{\chi}^+(v)=\delta_{\chi}^-(v)$ for all $v=(f,1)\in V_1$. This follows since \ref{eqnsc5} is equivalent to \ref{eqns01} which implies that $\delta_{\chi}^+(v)=\delta_{\chi}^-(v')$ (with $v=(f,1)$, $v'=(f,2)$ and $f\in F$), and by Remark \ref{rems1} $\delta_{\chi}^+(v')=\delta_{\chi}^-(v)$, and thus $\delta_{\chi}^+(v)=\delta_{\chi}^-(v')=\delta_{\chi}^+(v')=\delta_{\chi}^-(v)$ (where the second inequality from the the fact that $\delta_{\chi}^+(v')=\delta_{\chi}^-(v')$ for all $v'\in V_2$). That $\chi$ satisfies restriction \ref{eqnsc3} follows since restriction \ref{eqnsc6} is equivalent to restriction \ref{eqns02}, and the latter is equivalent to restriction \ref{eqnsc3} by the construction of $\chi$. 

\end{proof}

Consider the auxiliary directed bipartite graph $\tilde{G}=(\tilde{V},\tilde{\mathbb{A}})$, with vertex set $\tilde{V}=\tilde{V}_1\cup \tilde{V}_2$ where (for $i\in[2]$) $\tilde{V}_i=\{(d,i) \ | \ d\in [D] \}$ (that is, each $\tilde{V}_i$ represents a copy of the set $[D]$), and the set of arcs $\tilde{\mathbb{A}}=\tilde{\mathbb{A}}_1\cup \tilde{\mathbb{A}}_2$ is such that 
\[
\tilde{\mathbb{A}}_1\{ (v,v') \ | \  v\in \tilde{V}_1 \text{ and } v' \in \tilde{V}_2\}
\]
and
\[
\tilde{\mathbb{A}}_2\{ (v',v) \ | \   v'=(d',2) \in \tilde{V}_2, \ v=(d,1)\in \tilde{V}_1, \text{ and } d\leq d'\}.
\]
The following lemma shows that the value of the LP \ref{eqnslp2}-\ref{eqnsc3}, which is a minimum cost circulation problem with lower capacity constraints on sets of arcs, coincides with the value of a minimum cost circulation problem on the auxiliary graph $\tilde{G}$ with lower capacity constraints on arcs (instead of sets of arcs). In Proposition \ref{props1} below, I will use the \textit{Integral Circulation Theorem} (see Theorem 12.1 in \cite{LEL}) to conclude that the value of the LP (\ref{eqnslp3}-\ref{eqnsc33}) (and thus that of the LP \ref{eqnslp}) is an integer whenever $w$ is an integer.

\begin{lemma}
\label{lems2}
Let the directed graph $\tilde{G}$ be defined as above. The value of the LP \ref{eqnslp2}-\ref{eqnsc3} coincides with that of the LP
\begin{equation}
\label{eqnslp3}
\min_{\eta \in \mathbb{R}^{|\tilde{\mathbb{A}}|}} \sum _{a\in \mathbb{\tilde{A}}_1} \eta(a)
\end{equation}
subject to the constraints
\begin{equation}
\label{eqnsc11}
\eta(a)\geq 0 \quad \forall a \in \tilde{\mathbb{A}}
\end{equation}
\begin{equation}
\label{eqnsc22}
\delta_{\eta}^+(v)=\delta_{\eta}^-(v)\quad \forall v \in \tilde{V}
\end{equation}
and, $\forall$ $ a=(v,v')\in \tilde{\mathbb{A}}_1$ with $v=(d,1)$ and $v'=(d',2)$
\begin{equation}
\label{eqnsc33}
\eta(a)\geq w_{d,d'}.
\end{equation}

\end{lemma}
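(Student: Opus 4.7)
The plan is to establish equality by a two-way correspondence between feasible flows that preserves the objective value. The starting point is a natural projection from arcs of $G$ to arcs of $\tilde{G}$: an $\mathbb{A}_1$-arc $(v,v')$ with $v=(f,1)$ and $v'=(g,2)$ projects to the $\tilde{\mathbb{A}}_1$-arc $((\Pi_1(f),1),(\Pi_2(g),2))$, while the $\mathbb{A}_2$-twin-arc at a patch $f\in F$ projects to $((\Pi_2(f),2),(\Pi_1(f),1))\in\tilde{\mathbb{A}}_2$ (which lands in $\tilde{\mathbb{A}}_2$ because $\Pi_1(f)\leq\Pi_2(f)$ for every $f\in F$). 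This projection is a bijection on $\mathbb{A}_2$ and a surjection on $\mathbb{A}_1$ whose fibers are precisely the sets $I_{d,d'}$ of Lemma \ref{lems1}.

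For the direction $\mathrm{val}(\ref{eqnslp2})\geq\mathrm{val}(\ref{eqnslp3})$, I would take a feasible $\chi$ for LP \ref{eqnslp2}--\ref{eqnsc3} and define $\eta$ on $\tilde{G}$ by summing $\chi$ along each fiber of the projection. Non-negativity and the per-arc lower bounds \ref{eqnsc33} follow directly from \ref{eqnsc1} and \ref{eqnsc3}. Conservation of $\eta$ at each $\tilde{v}\in\tilde{V}$ is a telescoping consequence of conservation of $\chi$ at the underlying patches: at $(d,1)\in\tilde{V}_1$, for instance, the $\eta$-outflow sums $\chi$ over all $\mathbb{A}_1$-arcs leaving some $(f,1)$ with $\Pi_1(f)=d$, which by the uniqueness of incoming twin arcs (Remark \ref{rems1}) equals $\sum_{f:\Pi_1(f)=d}\chi((f,2)\to(f,1))$, exactly the $\eta$-inflow at $(d,1)$; the argument at $(d',2)\in\tilde{V}_2$ is symmetric. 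The objective value is preserved by construction.

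For the reverse direction, I would first invoke standard non-negative flow decomposition to write $\eta=\sum_j\alpha_j\eta_{C_j}$ as a non-negative combination of unit circulations along simple directed cycles $C_j$ in $\tilde{G}$, and then lift each $C_j$ to a cycle in $G$ that projects back to it arc by arc. A simple cycle in $\tilde{G}$ has the alternating form $(d_1,1)\to(d_1',2)\to(d_2,1)\to\cdots\to(d_k,1)\to(d_k',2)\to(d_1,1)$, where the existence of the $\tilde{\mathbb{A}}_2$-arcs forces $d_{(i+1)\bmod k}\leq d_i'$ for every $i$. Setting $f_i:=(d_{(i+1)\bmod k},d_i')$, which lies in $F$ precisely because of this inequality, the sequence $(f_k,1)\to(f_1,2)\to(f_1,1)\to(f_2,2)\to(f_2,1)\to\cdots\to(f_k,2)\to(f_k,1)$ is a well-defined closed walk in $G$ whose arc-by-arc projection reproduces $C_j$. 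Summing the unit circulations on these lifts weighted by $\alpha_j$ yields a non-negative $\chi$ that projects to $\eta$; its objective matches that of $\eta$ (each lifted cycle has the same number of $\mathbb{A}_1$-arcs as $C_j$ has $\tilde{\mathbb{A}}_1$-arcs), and the set-capacity constraints \ref{eqnsc3} on $\chi$ follow from the per-arc constraints \ref{eqnsc33} on $\eta$ via the fiber description of $I_{d,d'}$.

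The main obstacle is the cycle-lifting step. This is the one point at which the asymmetric definition of $\tilde{\mathbb{A}}_2$ (the inequality $d\leq d'$) is used essentially: it is exactly what guarantees that the prescribed lifting patches $f_i=(d_{(i+1)\bmod k},d_i')$ lie in $F$. The remaining verification---that the lift is indeed a cycle in $G$ and that its arc-by-arc projection is $C_j$---is a routine unwinding of the projection map, enabled by the fact that $\mathbb{A}_1$ is complete bipartite (so any two prescribed consecutive patches can be connected) and that $\mathbb{A}_2$-arcs are in bijection with patches in $F$ (so the return pattern of the lifted cycle is uniquely pinned down by the $\tilde{\mathbb{A}}_2$-arcs of $C_j$).
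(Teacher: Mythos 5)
Your proposal is correct and follows essentially the same route as the paper's proof: the forward direction aggregates $\chi$ over the fibers $I_{d,d'}$ of the natural projection $G\to\tilde{G}$ (with twin arcs mapping bijectively to $\tilde{\mathbb{A}}_2$), and the reverse direction uses flow decomposition into simple cycles followed by an arc-by-arc lift in which each $\tilde{\mathbb{A}}_2$-arc $((d',2),(d,1))$ determines the patch $(d,d')\in F$ — exactly the paper's construction. The only cosmetic difference is that you organize the argument explicitly around the projection map and its fibers, which is a clean way to package the same correspondence.
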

Note that the lower capacity constraint \ref{eqnsc33} is imposed on the flow along individual arcs, in contrast to the constraint \ref{eqnsc3} which imposes a lower bound on the aggregate flow on sets of arcs (the sets $I_{d,d'}$).

\begin{proof}
As in Lemma \ref{lems1}, the proof proceeds in two steps.\par
 \underline{Step 1} Let $\chi$ be a feasible flow for LP \ref{eqnslp2}-\ref{eqnsc3}. I construct from $\chi$ a flow $\eta$ on $\tilde{G}$ that satisfies constraints \ref{eqnsc11}-\ref{eqnsc33} such that $\sum _{a\in \mathbb{\tilde{A}}_1} \eta(a)=\sum _{e\in \mathbb{A}_1} \chi(e)$. Indeed, for $a=(v,v')\in\tilde{ \mathbb{A}}_1$, with $v=(d,1)$ and $v'=(d',2)$, set $\eta(a)=\sum_{e\in I_{d,d'}}\chi(e)$ (where $I_{d,d'}$ is as defined in Lemma \ref{lems1}). And for $a=(v',v)\in \tilde{\mathbb{A}}_2$, with $v'=(d',2)$ and $v=(d,1)$, set $\eta(a)=\chi(e)$ where $e\in \mathbb{A}_2$ is given by $e=((f,2),(f,1))$ with $f=(d,d')$. As the sets $I_{d,d'}$ form a partition of $\mathbb{A}_1$, it easily follows that $\sum _{a\in \mathbb{\tilde{A}}_1} \eta(a)=\sum _{e\in \mathbb{A}_1} \chi(e)$. That $\eta$ satisfies the non-negativity constraint \ref{eqnsc11} directly follows from the definition of $\eta$ and the constraint \ref{eqnsc1}. That $\eta$ satisfies the arc capacity constraint \ref{eqnsc33} follows from the definition of $\eta$ on the arcs in $\tilde{\mathbb{A}}_1$. It thus remains to show that $\eta$ satisfies the flow conservation constraint \ref{eqnsc22}. Let $v=(d,1)\in \tilde{V}_1$; we have 
\[
\delta^+_{\eta}(v)=\sum_{\{a=(v,v') \ | \ v'\in \tilde{V}_2\}} \eta(a)=\sum_{d'\in[D]} \sum_{e\in I_{d,d'}}\chi(e)=\sum_{\{f\in F \ | \ \Pi_1(f)=d\}}\delta^+_{\chi}((f,1)).
\]
and
\[
\delta^-_{\eta}(v)=\sum_{\{a=(v',v)  | v'=(d',2)\in \tilde{V}_2, \ d'\geq d \}} \eta(a)=\sum_{\{e\in \mathbb{A}_2| e=((f,2),(f,1)), \ f\in F, \ \Pi_1(f)=d\}} \chi(e)=\sum_{\{f\in F | \Pi_1(f)=d\}}\delta^-_{\chi}((f,1)).
\]
Combining these two expressions with the fact that $\chi$ satisfies constraint \ref{eqnsc2} yields that $\eta$ satisfies the flow conservation constraint at $v$. Suppose now that $v'=(d',2)\in \tilde{V}_2$. We have
\[
\delta^-_{\eta}(v')=\sum_{\{a=(v,v') \ | \ v\in \tilde{V}_1\}} \eta(a)=\sum_{d\in[D]} \sum_{e\in I_{d,d'}}\chi(e)=\sum_{\{f\in F \ | \ \Pi_2(f)=d'\}}\delta^-_{\chi}((f,2))
\]
and
\[
\delta^+_{\eta}(v')=\sum_{\{a=(v',v)  | v=(d,1)\in \tilde{V}_1, \ d\leq d' \}} \eta(a)=\sum_{\{e\in \mathbb{A}_2| e=((f,2),(f,1)), \ f\in F, \ \Pi_2(f)=d'\}} \chi(e)=\sum_{\{f\in F | \Pi_2(f)=d'\}}\delta^+_{\chi}((f,2)).
\]
combining these two expressions with the fact that $\chi$ is a circulation yields that $\eta$ satisfies the flow conservation constraint at $v'$. The foregoing establishes that the value of the LP \ref{eqnslp3}-\ref{eqnsc33} is less than or equal to that of the LP \ref{eqnslp2}-\ref{eqnsc3}. The second step will establish the converse.\par

\underline{Step 2}. Let $\eta$ be a feasible flow for the LP \ref{eqnslp3}-\ref{eqnsc33}. I construct from $\eta$ a flow $\chi$ on the graph $G$, such that $\chi$ satisfies constraints \ref{eqnsc1}-\ref{eqnsc3}, and $\sum _{a\in \mathbb{\tilde{A}}_1} \eta(a)=\sum _{a\in \mathbb{A}_1} \chi(a)$. By the \textit{Flow Decomposition Theorem} (see Theorem 3.5 and Property 3.6 in \cite{AMO}, p.80), the flow $\eta$ can be decomposed into 
\[
\eta=\sum_{k\in[K]} \eta^{(k)}
\]
where each $\eta^{(k)}$ is a non-negative circulation supported on a directed cycle ${\cal C}_k$ (i.e., $\eta^{(k)}(a)=0$ for all $a\in \tilde{\mathbb{A}}\backslash {\cal C}_k$), and where $K$ is an integer upper bounded by $|\tilde{\mathbb{A}}|$. Here a directed cycle ${\cal C}$ in $\tilde{G}$ is a sequence of arcs $(v_1,v_2), (v_2,v_3),\cdots,$ $(v_{n-1},v_n),(v_n,v_1)$ such that each for each $i\in[n]$ we have $(v_i,v_{i+1})\in \tilde{\mathbb{A}}$ (with $v_{n+1}:=v_1$), and $v_i\neq v_j$ whenever $i,j\in[n]$ and $i\neq j$. As $\tilde{G}$ is a bipartite graph, each directed cycle ${\cal C}_k$ is necessarily even \footnote{A bipartite graph has no odd cycle.}, i.e., each cycle has an even number of vertices (or equivalently an even number of edges). Thus for each $k$, ${\cal C}_k$ can be represented by a sequence of arcs $(v^{(k)}_1,v^{(k)}_2),(v^{(k)}_2,v^{(k)}_3),\cdots,(v^{(k)}_{2n_k-1},v^{(k)}_{2n_k}),(v^{(k)}_{2n_k},v^{(k)}_1)$ where $2n_k$ is the length of the cycle ${\cal C}_k$ and, for $i\in[n_k]$, each $v^{(k)}_{2i-1}\in \tilde{V}_2$, $v^{(k)}_{2i}\in \tilde{V_1}$, and $(v^{(k)}_{2i-1},v^{(k)}_{2i})\in \tilde{\mathbb{A}}_2$, $(v^{(k)}_{2i},v^{(k)}_{2i+1})\in \tilde{\mathbb{A}}_1$ (with $v^{(k)}_{2n_k+1}:=v^{(k)}_1$). For each $i\in[n_k]$, let $v^{(k)}_{2i-1}=(\bar{d}^{(k)}_i,2)$ and $v^{(k)}_{2i}=(d^{(k)}_i,1)$, for some $\bar{d}^{(k)}_{i}, d^{(k)}_i\in[D]$. As $(v^{(k)}_{2i-1},v^{(k)}_{2i})\in \tilde{\mathbb{A}}_2$, it must be the case that $d^{(k)}_i\leq \bar{d}^{(k)}_i$ for each $i\in[n_k]$. Let $f^{(k)}_i\in F$ be defined by $f^{(k)}_i=(d^{(k)}_i,\bar{d}^{(k)}_i)$. Note that if a circulation $\beta$ is supported on a directed cycle ${\cal C}$, then the flow conservation at each node implies that the flow must be constant along arcs in ${\cal C}$, i.e., $\beta(a)=\beta(a')$ for all arcs $a,a'\in {\cal C}$. Let $\alpha_k$, for each $k\in [K]$, denote the common value of the flow $\eta^{(k)}$ on the arcs in the cycle ${\cal C}_k$ (i.e., $\eta^{(k)}(a)=\alpha_k$ for all $a\in {\cal C}_k$).\par
 For each $k\in [K]$, I now construct a  circulation $\chi^{(k)}$ supported on a directed cycle ${\cal C'}_k$ of $G$. I will show further below that the circulation 
\begin{equation}
\label{eqns001}
\chi=\sum_{k\in[K]}\chi^{(k)}
\end{equation}
has the desired properties. For each $i \in [n_k]$, let $u^{(k)}_{2i-1}:=(f^{(k)}_i,2)$ and $u^{(k)}_{2i}:=(f^{(k)}_i,1)$; clearly $u^{(k)}_{2i-1}\in V_2$, $u^{(k)}_{2i}\in V_1$, and the sequence of arcs $(u^{(k)}_1,u^{(k)}_2),(u^{(k)}_2,u^{(k)}_3),$ $\cdots,(u^{(k)}_{2n_k-1},u^{(k)}_{2n_k}),(u^{(k)}_{2n_k},u^{(k)}_1)$ forms a directed cycle in $G$ (can be easily checked), which I denote by ${\cal C'}_k$. Let $\chi^{(k)}$ be the circulation supported on the cycle ${\cal C'}_k$ with common value of the flow on each arc in ${\cal C'}_k$ given by $\alpha_k$: That is, 
\begin{equation}
\label{eqns002}
\chi^{(k)}((u^{(k)}_{i},u^{(k)}_{i+1})):=\alpha_k \quad \forall i\in[2n_k]\quad (\text{with } u^{(k)}_{2n_k+1}=u^{(k)}_1),
\end{equation}
and set $\chi^{(k)}$ equal to zero on all arcs of $G$ that are not in ${\cal C'}_k$. 
Clearly $\chi^{(k)}$ is a non-negative circulation on $G$. Since the sum of circulations is a circulation, it follows that $\chi$ defined in \ref{eqns001} satisfies \ref{eqnsc1} and \ref{eqnsc2}. In order to see that  $\chi$ satisfies the constraint \ref{eqnsc3}, it is important to note that if the circulation $\eta^{(k)}$ traverses the arc $a=((d,1),(d',2))$ (i.e., $a\in {\cal C}_k$), then the circulation $\chi^{(k)}$ will traverse the set $I_{d,d'}$ exactly once (i.e., the set $I_{d,d'}$ will have a single arc in the cycle ${\cal C'}_k$). Indeed, it follows from the construction of the cycle ${\cal C'}_k$ and the facts that the arcs of a directed cycle are distinct, that
\begin{equation}
\label{eqns003}
  a=((d,1),(d',2))\in {\cal C}_k \text{  if and only if  } |{\cal C'}_k \cap I_{d,d'}|=1.
\end{equation}
 It thus follows that, for $d,d'\in[D]$, we have

\[
\begin{aligned}
\sum_{e\in I_{d,d'}}\chi(e)&=\sum_{e\in I_{d,d'}}\sum_{k\in [K]} \chi^{(k)}(e)\\
&=\sum_{k\in[K]}\sum_{e\in I_{d,d'}} \chi^{(k)}(e)\\
&=\sum_{k\in [K]}\sum_{e\in I_{d,d'}} \alpha_k \mathbbm{1}[e\in {\cal C'}_k]\\
&=\sum_{k\in [K]} \alpha_k \mathbbm{1}[((d,1),(d',2))\in {\cal C}_k]\\
&=\sum_{k\in [K]}\eta^{(k)}((d,1),(d',2))\\
&=\eta(((d,1),(d',2)))\geq w_{d,d'}
\end{aligned}
\]
where the fourth equality follows from \ref{eqns003} and the last inequality follows since $\eta$ satisfies \ref{eqnsc33}. Hence $\chi$ satisfies the lower (set) capacity constraints \ref{eqnsc3}. Furthermore, since for each $k\in [K]$ we have  $|{\cal C}_k\cap \tilde{\mathbb{A}}_1|=|{\cal C'}_k\cap \mathbb{A}_1|$ (follows from the way the $u^{(k)}_i$'s are constructed from the $v^{(k)}_i$'s), and the flow $\eta^{(k)}$ (resp. $\chi^{(k)}$) has the common value of $\alpha_k$ along arcs in its support ${\cal C}_k$ (resp. ${\cal C'}_k$), it follows that $\sum _{a\in \mathbb{\tilde{A}}_1} \eta^{(k)}(a)=\sum _{a\in \mathbb{A}_1} \chi^{(k)}(a)$ which yields (after summing over $k\in[K]$) $\sum _{a\in \mathbb{\tilde{A}}_1} \eta(a)=\sum _{a\in \mathbb{A}_1} \chi(a)$. 

\end{proof}

\begin{proposition}
\label{props1}
The polytopes ${\cal Q}_S$ are integral (i.e., all extreme points have entries in $\{0,\pm 1\}$).
\end{proposition}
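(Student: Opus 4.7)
The plan is to verify integrality via the criterion of Proposition \ref{propEG}: it suffices to show that $\max\{w^T y \mid y \in {\cal Q}_S\}$ is an integer for every $w \in \mathbb{Z}^{D^2}$. Since ${\cal Q}_S$ is bounded (it lies in $[-1,1]^{D^2}$), the maximum is always finite, and by the strong LP duality calculation carried out before Lemma \ref{lems1}, its value equals $-w^T\mathbbm{1}$ plus the optimum of the LP \ref{eqnslp}. Because $-w^T\mathbbm{1}$ is an integer whenever $w$ is integral, the task reduces to showing that for each $w\in \mathbb{Z}^{D^2}$, the LP \ref{eqnslp} has an integer optimum.

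Next, I would chain the two reductions already established in the excerpt: by Lemma \ref{lems1} and Lemma \ref{lems2}, the value of LP \ref{eqnslp} equals the value of the minimum cost circulation LP \ref{eqnslp3}--\ref{eqnsc33} on the auxiliary directed bipartite graph $\tilde{G}=(\tilde{V},\tilde{\mathbb{A}})$. All the data defining the latter LP are integral: the costs are $0$ and $1$ on the arcs of $\tilde{\mathbb{A}}_2$ and $\tilde{\mathbb{A}}_1$ respectively, the lower capacity on each arc $a=((d,1),(d',2))\in \tilde{\mathbb{A}}_1$ is $w_{d,d'}\in \mathbb{Z}$, the lower capacity on each arc in $\tilde{\mathbb{A}}_2$ is $0$, and there are no upper capacities (equivalently, upper capacity $+\infty$). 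By the strong duality reduction, this LP is feasible and bounded.

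Now I invoke the Integral Circulation Theorem (Theorem 12.1 in \cite{LEL}): the constraint matrix of a minimum cost circulation problem is the node–arc incidence matrix of a directed graph and hence is totally unimodular, so whenever all capacity bounds and costs are integral and the problem has a finite optimum, there exists an integer optimal circulation, and in particular the optimum value is an integer. Applied to LP \ref{eqnslp3}--\ref{eqnsc33}, this yields an integer optimum for every integer $w$, and combining with the preceding reductions gives that $\max\{w^T y \mid y\in {\cal Q}_S\}$ is integral for every $w\in\mathbb{Z}^{D^2}$. Proposition \ref{propEG} then concludes that ${\cal Q}_S$ is integral, and since ${\cal Q}_S\subseteq [-1,1]^{D^2}$, its extreme points have entries in $\{0,\pm 1\}$.

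The only real obstacle is bookkeeping: one has to make sure that the LP \ref{eqnslp3}--\ref{eqnsc33} (which has lower but no upper capacity bounds) genuinely falls under the hypotheses of the Integral Circulation Theorem as stated in \cite{LEL}, and that feasibility of the dual LP (granted by strong duality since ${\cal Q}_S$ is a non-empty polytope) does translate into feasibility of the circulation LP through the flow decomposition used in Lemma \ref{lems2}. Once this is confirmed, the proof is essentially a chain of equalities followed by one invocation of total unimodularity, and no further combinatorial work is needed.
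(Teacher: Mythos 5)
Your proposal is correct and follows essentially the same route as the paper's proof: apply Proposition \ref{propEG}, reduce by strong LP duality to the integrality of the value of LP \ref{eqnslp}, chain Lemmas \ref{lems1} and \ref{lems2} to reach the circulation LP \ref{eqnslp3}--\ref{eqnsc33}, and conclude via the Integral Circulation Theorem. The only cosmetic difference is that you flag the total-unimodularity justification and the feasibility bookkeeping explicitly, which the paper leaves implicit.
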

\begin{proof}
From Proposition \ref{propEG}, it suffices to show that $\max \{w^Ty\ | \ y\in {\cal Q}_S\}\in \mathbb{Z}$\footnote{Since ${\cal Q}_S$ is bounded, the value of the LP $\max \{w^Ty\ | \ y\in {\cal Q}_S\}$ is always finite.} for all integral objective vectors $w\in \mathbb{Z}^{D^2}$. By the strong duality theorem of LP, the latter is equivalent to showing that the value of the LP $\min\{\mathbbm{1}^T\lambda \ | \ \lambda \geq 0, \ A \lambda \geq w, \ R_S \lambda=0\}$ is an integer, for each integral objective vector $w$ (see the discussion preceding equation \ref{eqnslp}). By Lemma \ref{lems1}, the value of the LP \ref{eqnslp} coincides with that of the LP \ref{eqnslp2}-\ref{eqnsc3}. And by Lemma \ref{lems2}, the value of the LP 
\ref{eqnslp2}-\ref{eqnsc3} coincides with the value of the LP \ref{eqnslp3}-\ref{eqnsc33}. The LP \ref{eqnslp3}-\ref{eqnsc33} is a minimum cost circulation problem with integral lower capacity constraints. From the Integral Circulation Theorem (see Theorem 12.1 in \cite{LEL}), the LP \ref{eqnslp3}-\ref{eqnsc33} admits an optimal solution $\eta^*$ that is integral. It thus follows that the value of the LP \ref{eqnslp3}-\ref{eqnsc33} (equal to $\sum _{a\in \mathbb{\tilde{A}}_1} \eta^*(a)$) is an integer, and we conclude that $\max \{w^Ty\ | \ y\in {\cal Q}_S\}$ is an integer whenever $w$ is integral.
\end{proof}


%

\printbibliography

\end{document}